\newtheorem{thm}{Theorem}
\newtheorem{dfn}[thm]{Definition}
\newtheorem{lem}[thm]{Lemma}
\renewcommand{\fnum@algorithm}{\fname@algorithm}
\newenvironment{procedure}[1][htb]{
\renewcommand{\ALG@name}{Procedure} \begin{algorithm}[#1]}{\end{algorithm}}
\newcounter{nodecount}
    \DeclareRobustCommand{\rvdots}{%
        \vbox{
            \baselineskip4\p@\lineskiplimit\z@
            \kern-\p@
            \hbox{.}\hbox{.}\hbox{.}
        }
    }
\tikzset{
    graphnode/.style={
        draw,
        thick,
        circle,
        minimum height=\nodesize+1pt,
        anchor=center,
        inner sep=0em
    },
    Tedge/.style={
        thick,
    },
    Sedge/.style={
        dashed,
    },
    graphedge/.style={
        draw
    },
    pathedge/.style={
        draw,
        decorate,
        decoration={snake, 
            pre length=3pt, 
            post length=3pt, 
            segment length=6pt, 
            amplitude=1pt
        },
    },
    dots/.style args={#1per #2}{%
        line cap=round,
        line width=0.75pt,
        dash pattern=on 0 off #2/#1,
    },
}
\def\nodesize{5mm}
\def\unitspacing{\nodesize*2}
\newcommand{\bentpath}[3][(0, 0)]{
    \path[pathedge] #2 .. controls ($0.5*#2+0.5*#3+#1$) .. #3;
}
\newcommand{\partline}[5][dot]{
    \coordinate (center) at ($(\unitspacing*1.8*1.5*#2,\unitspacing*#3) - (\unitspacing*1.8*0.75,0)$);
    \ifthenelse{\equal{#1}{dot}}{
        \draw[dots=5 per 1cm] ($(center) + (0, \unitspacing*#4+\nodesize*0.5)$) -- ($(center) + (0, \unitspacing*#5-\nodesize*0.5)$);
    }{}
    \ifthenelse{\equal{#1}{thick}}{
        \draw[thick] ($(center) + (0, \unitspacing*#4+\nodesize*0.5)$) -- ($(center) + (0, \unitspacing*#5-\nodesize*0.5)$);
    }{}
}
\newcommand{\partlineh}[5][dot]{
    \coordinate (center) at ($(\unitspacing*1.8*1.5*#2,\unitspacing*#3) - (\unitspacing*1.8*0.75,0)$);
    \ifthenelse{\equal{#1}{dot}}{
        \draw[dots=5 per 1cm] ($(center) + (\unitspacing*1.8*1.5*#4-\nodesize*0.5, 0)$) -- ($(center) + (\unitspacing*1.8*1.5*#5+\nodesize*0.5, 0)$);
    }{}
    \ifthenelse{\equal{#1}{thick}}{
        \draw[thick] ($(center) + (\unitspacing*1.8*1.5*#4-\nodesize*0.5, 0)$) -- ($(center) + (\unitspacing*1.8*1.5*#5+\nodesize*0.5, 0)$);
    }{}
}
\newcommand{\partcaption}[3]{
    \coordinate (center) at ($(\unitspacing*1.8*1.5*#1,\unitspacing*#2)$);
    \node at (center) () {#3};
}
\newcommand{\setspacingcoordinate}{
    \coordinate (hs) at (\horizontalspacing,0);
    \coordinate (vs) at (0, \verticalspacing);
    \coordinate (bn) at (0,\bentspacing);
    \coordinate (be) at (\bentspacing,0);
    \coordinate (bw) at (-1*\bentspacing,0);
    \coordinate (bs) at (0,-1*\bentspacing);
    \coordinate (mh) at (1.5*\nodesize,0);
    \coordinate (mv) at (0,0.5*\verticalspacing);
    \coordinate (l) at ($-0.5*(hs)$);
    \coordinate (r) at ($0.5*(hs)$);
}
\newcommand{\EXEVEN}[5]{
    \def\set{#3}
    \def\sub{#4}
    \def\horizontalspacing{\unitspacing*2*0.9}
    \def\verticalspacing{\unitspacing}
    \def\bentspacing{\unitspacing*0.3}
    \setspacingcoordinate
    \coordinate (center) at ($#1*1.5*(hs)+#2*(vs)$);
    \node[graphnode] at ($(center)+(r)+1.5*(vs)$) (p1) {};
    \node[graphnode] at ($(center)+(r)+0.5*(vs)$) (p2) {};
    \node[graphnode] at ($(center)+(l)+0.5*(vs)$) (p3) {};
    \node[graphnode] at ($(center)+(l)+1.5*(vs)$) (p4) {};
    \node[graphnode] at ($(center)+(l)-0.5*(vs)$) (x1) {};
    \node[graphnode] at ($(center)+(l)-2.5*(vs)$) (x2) {};
    \node[graphnode] at ($(center)+(r)-0.5*(vs)$) (y1) {};
    \node[graphnode] at ($(center)+(r)-1.5*(vs)$) (y2) {};
    \node[graphnode] at ($(center)+(l)-1.5*(vs)$) (z1) {};
    \node at ($(center)-3.2*(vs)$) () {#5};

    \node[graphnode] at ($(center)+(r)-2.5*(vs)$) (w1) {};

    \ifthenelse{\equal{\set}{S} \OR \equal{\set}{A}}{
        \ifthenelse{\equal{\set}{S}}{
            \ifthenelse{\sub=0 \OR \sub=1 \OR \sub=2 \OR \sub=3 \OR \sub=4}{
                \foreach \u/\v in {p1/p4,p2/p3,x1/z1,y1/z1,x2/w1,y2/w1}{
                    \path[draw] (\u) edge[graphedge] (\v);
                }
                \ifthenelse{\sub=0 \OR \sub=1 \OR \sub=3}{
                    \path[draw] (p3) edge[graphedge] (p4);
                }{}
                \ifthenelse{\sub=0 \OR \sub=2 \OR \sub=4}{
                    \path[draw] (p1) edge[graphedge] (p2);
                }{}
                \ifthenelse{\sub=0 \OR \sub=3 \OR \sub=4}{
                    \path[draw] (x1) edge[graphedge] (y1);
                    \path[draw] (x2) edge[graphedge] (y2);
                }{}
                \ifthenelse{\sub=0}{
                    \path[draw] (x2) edge[graphedge] node[above] {$e_a$} (y2);
                    \path[draw] (x1) edge[graphedge] node[above] {$e_b$} (y1);
                    \path[draw] (p1) edge[graphedge] node[left] {$e_c$} (p2);
                    \path[draw] (p3) edge[graphedge] node[right] {$e_d$} node[left] {$C$} (p4);

                }{}
            }{}
        }{}

        \ifthenelse{\equal{\set}{A}}{
            \ifthenelse{\sub=0 \OR \sub=1 \OR \sub=2}{
                \ifthenelse{\sub=0 \OR \sub=1}{
                    \path[draw] (p2) edge[graphedge] (x1);
                    \path[draw] (y1) edge[graphedge] (x2);
                    \draw (y2) edge[out=0, in=0, distance=\unitspacing] (p1);
                }{}
                \ifthenelse{\sub=0 \OR \sub=2}{
                    \path[draw] (p3) edge[graphedge] (y1);
                    \path[draw] (x1) edge[graphedge] (y2);
                    \draw (x2) edge[out=180, in=180, distance=\unitspacing] (p4);
                }{}
            }{}
        }{}

        \ifthenelse{\equal{\set}{A} \OR \sub=1 \OR \sub=2}{
            \node[graphnode] at (x1) {$x_1$};
            \node[graphnode] at (x2) {$x_2$};
            \node[graphnode] at (y1) {$y_1$};
            \node[graphnode] at (y2) {$y_2$};
            \ifthenelse{\(\equal{\set}{A} \AND \sub=0\) \OR \sub=1}{
                \node[graphnode] at (p1) {$p_1$};
                \node[graphnode] at (p2) {$p_2$};
            }{}
            \ifthenelse{\(\equal{\set}{A} \AND \sub=0\) \OR\sub=2}{
                \node[graphnode] at (p3) {$p_3$};
                \node[graphnode] at (p4) {$p_4$};
            }{}
        }{}
    }{}

    \ifthenelse{\equal{\set}{T} \OR \equal{\set}{B}}{
        \ifthenelse{\equal{\set}{T}}{
            \ifthenelse{\sub=0 \OR \sub=1 \OR \sub=2 \OR \sub=3 \OR \sub=4}{
                \foreach \u/\v in {p1/p4,p2/y1,p3/x1,x2/w1,y2/z1}{
                    \path[draw] (\u) edge[graphedge] (\v);
                }
                \ifthenelse{\sub=1 \OR \sub=3}{
                    \path[draw] (p1) edge[graphedge] (p2);
                }{}
                \ifthenelse{\sub=2 \OR \sub=4}{
                    \path[draw] (p3) edge[graphedge] (p4);
                }{}
                \ifthenelse{\sub=1 \OR \sub=2}{
                    \path[draw] (x1) edge[graphedge] (y1);
                    \path[draw] (x2) edge[graphedge] (y2);
                }{}
            }{}
        }{}

        \ifthenelse{\equal{\set}{B}}{
            \ifthenelse{\sub=0 \OR \sub=1}{
                \draw (z1) edge[out=180, in=180, distance=\unitspacing] (p3);
                \path[draw] (p4) edge[graphedge] (w1);
                
            }{}
            \ifthenelse{\sub=0 \OR \sub=2}{
                \draw (w1) edge[out=0, in=0, distance=\unitspacing] (p2);
                \path[draw] (p1) edge[graphedge] (z1);
            }{}
        }{}

        \ifthenelse{\equal{\set}{B} \OR \sub=1 \OR \sub=2}{
            \node[graphnode] at (z1) {$z_1$};
            \node[graphnode] at (w1) {$w_1$};
            \ifthenelse{\(\equal{\set}{B}\AND \sub=0\) \OR \sub=1}{
                \node[graphnode] at (p3) {$p_3$};
                \node[graphnode] at (p4) {$p_4$};
            }{}
            \ifthenelse{\(\equal{\set}{B}\AND \sub=0\) \OR \sub=2}{
                \node[graphnode] at (p1) {$p_1$};
                \node[graphnode] at (p2) {$p_2$};
            }{}
        }{}
    }{}
    
}
\newcommand{\EXODD}[5]{
    \def\set{#3}
    \def\sub{#4}
    \def\horizontalspacing{\unitspacing*2*0.9}
    \def\verticalspacing{\unitspacing}
    \def\bentspacing{\unitspacing*0.3}
    \setspacingcoordinate
    \coordinate (center) at ($#1*1.5*(hs)+#2*(vs)$);
    \node[graphnode] at ($(center)+(l)+0.5*(vs)$) (v0) {};
    \node[graphnode] at ($(center)+(l)+1.5*(vs)$) (v1) {};
    \node[graphnode] at ($(center)+(l)+2.5*(vs)$) (v2) {};
    \node[graphnode] at ($(center)+(r)+2.5*(vs)$) (v3) {};
    \node[graphnode] at ($(center)+(r)+1.5*(vs)$) (v4) {};
    \node[graphnode] at ($(center)+(r)+0.5*(vs)$) (v5) {};
    \node[graphnode] at ($(center)$) (c) {};
    \node[graphnode] at ($(center)+(l)-0.5*(vs)$) (l1) {};
    \node[graphnode] at ($(center)+(l)-1.5*(vs)$) (l2) {};
    \node[graphnode] at ($(center)+(l)-2.5*(vs)$) (l3) {};
    \node[graphnode] at ($(center)+(l)-3.5*(vs)$) (l4) {};

    \node[graphnode] at ($(center)+(r)-0.5*(vs)$) (r1) {};
    \node[graphnode] at ($(center)+(r)-1.5*(vs)$) (r2) {};
    \node[graphnode] at ($(center)+(r)-2.5*(vs)$) (r3) {};
    \node[graphnode] at ($(center)+(r)-3.5*(vs)$) (r4) {};

    \ifthenelse{\sub=0 \OR \sub=1 \OR \sub=2}{

        \node[graphnode] at (v5) (q4) {};
        \node[graphnode] at (c) (q0) {};
        \node[graphnode] at (r3) (x1) {};
        \node[graphnode] at (r2) (x2) {};
        \node[graphnode] at (l3) (y1) {};
        \node[graphnode] at (r1) (y2) {};

        \node[graphnode] at (l4) (z1) {};
        \node[graphnode] at (l2) (z2) {};
        \node[graphnode] at (r4) (w1) {};
        \node[graphnode] at (l1) (w2) {};
    }{}

    \ifthenelse{\sub=3 \OR \sub=4 \OR \sub=5}{
        \node[graphnode] at (r3) (x1) {};
        \node[graphnode] at (c) (x2) {};
        \node[graphnode] at (r4) (y1) {};
        \node[graphnode] at (l1) (y2) {};
        \node[graphnode] at (v0) (z1) {};
        \node[graphnode] at (l2) (z2) {};
        \node[graphnode] at (l3) (z3) {};
        \node[graphnode] at (r1) (w1) {};
        \node[graphnode] at (r2) (w2) {};
        \node[graphnode] at (l4) (w3) {};
    }{}
    \node at ($(center)-4.2*(vs)$) () {#5};

    \ifthenelse{\equal{\set}{S} \OR \equal{\set}{A}}{
        \ifthenelse{\equal{\set}{S}}{
            \ifthenelse{\sub=0 \OR \sub=1 \OR \sub=2 \OR \sub=3 \OR \sub=4 \OR \sub=5}{
                \foreach \u/\v in {v0/v5,v2/v3,c/r1,l1/l2,l2/r2,l3/l4,l4/r4}{
                    \path[draw] (\u) edge[graphedge] (\v);
                }
                \ifthenelse{\sub=0 \OR \sub=1 \OR \sub=2 \OR \sub=3}{
                    \path[draw] (c) edge[graphedge] (l1);
                    \path[draw] (r3) edge[graphedge] (r4);
                }{}
                \ifthenelse{\sub=0 \OR \sub=1 \OR \sub=3 \OR \sub=4 \OR \sub=5}{
                    \path[draw] (v1) edge[graphedge] (v0);
                }{}
                \ifthenelse{\sub=0 \OR \sub=2 \OR \sub=3 \OR \sub=4 \OR \sub=5}{
                    \path[draw] (v4) edge[graphedge] (v3);
                }{}
                \ifthenelse{\sub=0 \OR \sub=3 \OR \sub=4 \OR \sub=5}{
                    \path[draw] (l3) edge[graphedge] (r3);
                    \path[draw] (r1) edge[graphedge] (r2);
                }{}
                \ifthenelse{\sub=0 \OR \sub=1 \OR \sub=2 \OR \sub=3 \OR \sub=4}{
                    \path[draw] (v4) edge[graphedge] (v5);
                }{}
                \ifthenelse{\sub=0 \OR \sub=1 \OR \sub=2 \OR \sub=3 \OR \sub=5}{
                    \path[draw] (v1) edge[graphedge] (v2);
                }{}

                \ifthenelse{\sub=0 \OR \sub=3}{
                    \node[graphnode] at (v0) {$v_0$};
                    \node[graphnode] at (v1) {$v_1$};
                    \node[graphnode] at (v2) {$v_2$};
                    \node[graphnode] at (v3) {$v_3$};
                    \node[graphnode] at (v4) {$v_4$};
                    \node[graphnode] at (v5) {$v_5$};
                    \node[graphnode] at (r3) {$q$};
                    \ifthenelse{\sub=0}{
                        \path[draw] (l3) edge[graphedge] node[above] {$f$} (r3);
                        \path[draw] (r3) edge[graphedge] node[left] {$f^\prime$} (r4);
                        \path[draw] (r1) edge[graphedge] node[left] {$e$} (r2);
                        \path[draw] (c) edge[graphedge] node[above left] {$e^\prime$} (l1);  
                        \node at ($(v2)-0.25*(hs)-0.5*(vs)$) () {$C^{*}$};
                        \node at ($(l3)-0.25*(hs)-0.5*(vs)$) () {$C^{**}$};
                    }{}
                    
                }{} 

            }{}
        }{}

        \ifthenelse{\equal{\set}{A}}{
            \ifthenelse{\sub=0 \OR \sub=1 \OR \sub=2}{
                \ifthenelse{\sub=0 \OR \sub=1}{
                    \draw (x1) edge[out=0, in=0, distance=\unitspacing] (v3);
                    \path[draw] (y1) edge[graphedge] (x2);
                    \draw (y2) edge[out=0, in=0, distance=0.5*\unitspacing] (v4);
                    \draw (y1) edge[out=180, in=180, distance=\unitspacing,color=white] (v1);
                }{}
                \ifthenelse{\sub=0 \OR \sub=2}{
                    \draw (y1) edge[out=180, in=180, distance=\unitspacing] (v1);
                    \draw (x1) edge[out=0, in=0, distance=0.5*\unitspacing] (y2);

                    \draw plot[smooth,tension=0.] coordinates{(x2.135) ($0.5*(q0)+0.5*(w2)$) (v0.-80)};
                    
                }{}
            }{}
        }{}

        \ifthenelse{\equal{\set}{A}}{
            \ifthenelse{\sub=3 \OR \sub=4 \OR \sub=5}{
                \ifthenelse{\sub=3 \OR \sub=4}{
                    \draw (v2) edge[out=180, in=180, distance=\unitspacing] ($(z3.180)+0.5*(vs)$);
                    \draw ($(z3.180)+0.5*(vs)$) edge ($(x1)+0.5*(vs)-0.5*(hs)$);
                    \draw ($(x1)+0.5*(vs)-0.5*(hs)$) edge (x1);
                    \draw (y1) edge[out=0, in=0, distance=0.5*\unitspacing] ($(x2)+(r)+0.5*(\nodesize,0)$);
                    \draw ($(x2)+(r)+0.5*(\nodesize,0)$) edge (x2);

                    \draw (y2) edge[out=180, in=180, distance=0.5*\unitspacing] (v1);
                }{}
                \ifthenelse{\sub=3 \OR \sub=5}{
                    \draw (y1) edge[out=0, in=0, distance=\unitspacing] (v4);
                    \path[draw] (x1) edge[graphedge] (y2);
                    \path[draw] (x2) edge[graphedge] (v5);
                    
                }{}
            }{}
        }{}

        \ifthenelse{\(\equal{\set}{A} \AND \sub=0\) \OR \sub=1 \OR \sub=2}{
            \node[graphnode] at (x1) {$x_1$};
            \node[graphnode] at (x2) {$x_2$};
            \node[graphnode] at (y1) {$y_1$};
            \node[graphnode] at (y2) {$y_2$};
            \ifthenelse{\(\equal{\set}{A} \AND \sub=0\) \OR \sub=1}{
                \node[graphnode] at (v4) {$v_4$};
                \node[graphnode] at (v3) {$v_3$};
            }{}
            \ifthenelse{\(\equal{\set}{A} \AND \sub=0\) \OR \sub=2}{
                \node[graphnode] at (v1) {$v_1$};
                \node[graphnode] at (v0) {$v_0$};
            }{}
        }{}

        \ifthenelse{\(\equal{\set}{A} \AND \sub=3\) \OR \sub=4 \OR \sub=5}{
            \node[graphnode] at (x1) {$x^\prime_1$};
            \node[graphnode] at (x2) {$x^\prime_2$};
            \node[graphnode] at (y1) {$y^\prime_1$};
            \node[graphnode] at (y2) {$y^\prime_2$};
            \ifthenelse{\(\equal{\set}{A} \AND \sub=3\) \OR \sub=4}{
                \node[graphnode] at (v1) {$v_1$};
                \node[graphnode] at (v2) {$v_2$};
            }{}
            \ifthenelse{\(\equal{\set}{A} \AND \sub=3\) \OR \sub=5}{
                \node[graphnode] at (v4) {$v_4$};
                \node[graphnode] at (v5) {$v_5$};
            }{}
        }{}
    }{}

    \ifthenelse{\equal{\set}{T} \OR \equal{\set}{B}}{
        \ifthenelse{\equal{\set}{T}}{
            \ifthenelse{\sub=0 \OR \sub=1 \OR \sub=2}{
                \foreach \u/\v in {v0/c,v1/v2,v2/v3,v4/v5,r1/l1,r2/l2,l3/l4,r3/r4}{
                    \path[draw] (\u) edge[graphedge] (\v);
                }
                \ifthenelse{\sub=1}{
                    \path[draw] (v4) edge[graphedge] (v3);
                }{}
                \ifthenelse{\sub=2}{
                    \path[draw] (v1) edge[graphedge] (v0);
                }{}
                \ifthenelse{\sub=1 \OR \sub=2}{
                    \path[draw] (x1) edge[graphedge] (y1);
                    \path[draw] (x2) edge[graphedge] (y2);
                }{}
            }{}
            \ifthenelse{\sub=3 \OR \sub=4 \OR \sub=5}{
                \foreach \u/\v in {v2/v3,v3/v4,v1/v5,v0/l1,c/r1,l2/r2,l3/r3,l4/r4}{
                    \path[draw] (\u) edge[graphedge] (\v);
                }
                \ifthenelse{\sub=4}{
                    \path[draw] (v1) edge[graphedge] (v2);
                }{}
                \ifthenelse{\sub=5}{
                    \path[draw] (v4) edge[graphedge] (v5);
                }{}
                \ifthenelse{\sub=4 \OR \sub=5}{
                    \path[draw] (x1) edge[graphedge] (y1);
                    \path[draw] (x2) edge[graphedge] (y2);
                }{}
            }{}
            \ifthenelse{\sub=0 \OR \sub=3}{
                \node[graphnode] at (v0) {$v_0$};
                \node[graphnode] at (v1) {$v_1$};
                \node[graphnode] at (v2) {$v_2$};
                \node[graphnode] at (v3) {$v_3$};
                \node[graphnode] at (v4) {$v_4$};
                \node[graphnode] at (v5) {$v_5$};
                \node[graphnode] at (r3) {$q$};
            }{}
            
        }{}

        \ifthenelse{\equal{\set}{B}}{
            \ifthenelse{\sub=0 \OR \sub=1}{
                \draw (z1) edge[out=180, in=180, distance=\unitspacing] (v1);
                \path[draw] (w1) edge[graphedge] (z2);
                \path[draw] (w2) edge[graphedge] (q0);
                \path[draw] (v0) edge[graphedge] (q4);
                
            }{}
            \ifthenelse{\sub=0 \OR \sub=2}{
                \draw (w1) edge[out=0, in=0, distance=\unitspacing] (v3);
                \draw (z1) edge[out=180, in=180, distance=0.5*\unitspacing] (w2);
                \draw plot[smooth,tension=0.1] coordinates{(z2.25) ($0.5*(q0)+0.5*(y2)$) (q4.-100)};
                \path[draw] (v4) edge[graphedge] (q0);
            }{}
        }{}

        \ifthenelse{\equal{\set}{B}}{
            \ifthenelse{\sub=3 \OR \sub=4}{
                \path[draw] (v4) edge[graphedge] (z1);
                \path[draw] (w1) edge[graphedge] (z2);
                \path[draw] (w2) edge[graphedge] (z3);
                \draw plot[smooth,tension=0.1] coordinates{(w3.30) ($0.5*(x2)+0.5*(w1)$) (v5.-160)};
            }{}
            \ifthenelse{\sub=3 \OR \sub=5}{
                \path[draw] (v2) edge[graphedge] (w1);
                \draw plot[smooth,tension=0.1] coordinates{(z1.-80) ($0.5*(x2)+0.5*(y2)$) (w2.150)};
                \draw (w3) edge[out=180, in=180, distance=0.5*\unitspacing] (z2);

                \draw (z3) edge[out=180, in=180, distance=\unitspacing] (v1);
            }{}
        }{}

        \ifthenelse{\(\equal{\set}{B}\AND \sub=0\) \OR \sub=1 \OR \sub=2}{
            \node[graphnode] at (z1) {$z_1$};
            \node[graphnode] at (w1) {$w_1$};
            \node[graphnode] at (z2) {$z_2$};
            \node[graphnode] at (w2) {$w_2$};
            \node[graphnode] at (q0) {$r_0$};
            \node[graphnode] at (q4) {$r_4$};
            \ifthenelse{\(\equal{\set}{B}\AND \sub=0\) \OR \sub=1}{
                \node[graphnode] at (v1) {$v_1$};
                \node[graphnode] at (v0) {$v_0$};
            }{}
            \ifthenelse{\(\equal{\set}{B}\AND \sub=0\) \OR \sub=2}{
                \node[graphnode] at (v4) {$v_4$};
                \node[graphnode] at (v3) {$v_3$};
            }{}
        }{}

        \ifthenelse{\(\equal{\set}{B}\AND \sub=3\) \OR \sub=4 \OR \sub=5}{
            \node[graphnode] at (z1) {$z^\prime_1$};
            \node[graphnode] at (w1) {$w^\prime_1$};
            \node[graphnode] at (z2) {$z^\prime_2$};
            \node[graphnode] at (w2) {$w^\prime_2$};
            \node[graphnode] at (z3) {$z^\prime_3$};
            \node[graphnode] at (w3) {$w^\prime_3$};
            \ifthenelse{\(\equal{\set}{B}\AND \sub=3\) \OR \sub=4}{
                \node[graphnode] at (v4) {$v_4$};
                \node[graphnode] at (v5) {$v_5$};
            }{}
            \ifthenelse{\(\equal{\set}{B}\AND \sub=3\) \OR \sub=5}{
                \node[graphnode] at (v1) {$v_1$};
                \node[graphnode] at (v2) {$v_2$};
            }{}
        }{}
    }{}
    
}
\newcommand{\STAB}[7]{
    \def\set{#3}
    \def\type{#4}
    \def\sub{#5}
    \def\d{#6}
    
    \ifthenelse{\equal{\set}{A} \OR \equal{\set}{S} \OR \equal{\set}{SA}}{
        \ifthenelse{\sub=-1 \OR \sub=0 \OR \sub=1 \OR \sub=2}{
            \def\lnl{x}
            \def\rnl{y}
        }{
            \def\lnl{y}
            \def\rnl{x}
        }
        
        \def\nn{k}
        \def\ucpl{P}
        \def\cpl{Q}
    }{}
    \ifthenelse{\equal{\set}{B} \OR \equal{\set}{T} \OR \equal{\set}{TB}}{
        \ifthenelse{\sub=-1 \OR \sub=0 \OR \sub=1 \OR \sub=2}{
            \def\lnl{w}
            \def\rnl{z}
        }{
            \def\lnl{z}
            \def\rnl{w}
        }
        \def\nn{d}
        \def\ucpl{R}
        \def\cpl{O}
    }{}
    \def\horizontalspacing{\unitspacing*2*0.9}
    \def\verticalspacing{\unitspacing}
    \def\bentspacing{\unitspacing*0.3}
    \setspacingcoordinate
    \coordinate (center) at ($#1*1.5*(hs)+#2*(vs)$);

    \coordinate (pos) at ($(center) + (l) -0.5*(vs)$);
    \foreach \i in {1, ..., \d} {
        \path (pos) node [graphnode] (l\i) {$\lnl_\i$} ++(hs) node [graphnode] (r\i) {$\rnl_\i$};
        \coordinate (pos) at ($(pos)-(vs)$);
    }
    \node at ($(pos)+0.5*(hs)+0.5*(vs)$) (cu) {};
    \node at ($(pos)+0.5*(hs)+0.25*(vs)$) (vdots) {\rvdots};
    
    \coordinate (pos) at ($(pos)-0.5*(vs)$);
    \node at ($(pos)+0.5*(hs)+0.5*(vs)$) (cl) {};
    \path (pos) node [graphnode] (l\nn) {$\lnl_{\nn}$} ++(hs) node [graphnode] (r\nn) {$\rnl_{\nn}$};
    \node at ($(pos)+0.5*(hs)-0.8*(vs)$) () {#7};
    
    \ifthenelse{\(\equal{\set}{S} \OR \equal{\set}{T} \OR \equal{\set}{SA} \OR \equal{\set}{TB}\) \AND \(\NOT \sub=-1\)}{
        \foreach \i in {1, ..., \d,\nn} {
            \bentpath[(bs)]{(l\i)}{(r\i)}
            \path (l\i) edge[draw=none] node {$\cpl_\i$} (r\i);
        }
    }{}
    \ifthenelse{\equal{\set}{S} \OR \equal{\set}{A} \OR \equal{\set}{SA} \OR \sub=3 \OR \sub=4 \OR \sub=5}{
        \coordinate (pl) at ($(center)+0.5*(vs)$);
        \coordinate (pu) at ($(pl)+(vs)$);
    }{
        \coordinate (pl) at ($(center)+2*(vs)$);
        \coordinate (pu) at ($(pl)+(vs)$);
    }
    
    \ifthenelse{\type=1 \OR \type=2}{
        \node[graphnode] at (pl) (p2) {};
        \node[graphnode] at (p2) (p3) {};
    }{
        \node[graphnode] at ($(r)+(pl)$) (p2) {};
        \node[graphnode] at ($(l)+(pl)$) (p3) {};
    }
    
    \ifthenelse{\type=5}{
        \node[graphnode] at (pu) (p1) {};
        \node[graphnode] at (p1) (p4) {};
    }{
        \node[graphnode] at ($(r)+(pu)$) (p1) {};
        \node[graphnode] at ($(l)+(pu)$) (p4) {};
    }

    \ifthenelse{\sub=-1 \OR \sub=0 \OR \sub=1 \OR \sub=2}{
        \node[graphnode] at (p1) (p1) {$p_1$};
        \node[graphnode] at (p2) (p2) {$p_2$};
        \ifthenelse{\NOT \type=1 \AND \NOT\type=2}{
            \node[graphnode] at (p3) (p3) {$p_3$};
        }{}
        \ifthenelse{\NOT \type=5}{
            \node[graphnode] at (p4) (p4) {$p_4$};
        }{}
    }{

        \node[graphnode] at (p2) (p2) {$v_1$};
        \node[graphnode] at (p3) (p3) {$v_3$};
        \node[graphnode] at ($0.5*(p2)+0.5*(p3)$) (v2) {$v_2$};
        \node[graphnode] at (p4) (p4) {$v_4$};
        \ifthenelse{\NOT \type=5}{
            \node[graphnode] at (p1) (p1) {$v_0$};
        }{}

    }
    \ifthenelse{\equal{\set}{T} \OR \equal{\set}{B} \OR \equal{\set}{TB}}{
        \coordinate (mq) at ($(0, 1.5*\verticalspacing)$);
        \coordinate (qu) at ($(pu)+(mq)$);
        \coordinate (q1) at ($(p1)+(mq)$);
        \coordinate (q2) at ($(p2)-(mq)$);
        \coordinate (q3) at ($(p3)-(mq)$);
        \coordinate (q4) at ($(p4)+(mq)$);
        \ifthenelse{\sub=-1 \OR \sub=0 \OR \sub=1 \OR \sub=2}{
            \ifthenelse{\type=1 \OR \type=2}{
                \node[graphnode] at (q2) (q2) {$q_2$};
                \node[graphnode] at (q3) (q3) {};
            }{}
            \ifthenelse{\type=2 \OR \type=4 \OR \type=5}{
                \node[graphnode] at (q1) (q1) {$q_1$};
                \node[graphnode] at (q4) (q4) {\ifthenelse{\type=5}{}{$q_4$}};
            }{}
        }{
            \ifthenelse{\type=4 \OR \type=5}{
                \node[graphnode] at (q1) (q1) {\ifthenelse{\type=5}{}{$r_0$}};
                \node[graphnode] at (q4) (q4) {$r_4$};
            }{} 
        }
    }{}
    
    \ifthenelse{\equal{\set}{S} \OR \equal{\set}{SA}}{ 
        \ifthenelse{\sub=0 \OR \sub=1 \OR \sub=2}{
            \ifthenelse{\type=3 \OR \type=4 \OR \type=5}{
                \bentpath[(bs)]{(p2)}{(p3)}
            }{}
            \ifthenelse{\NOT \type=5}{
                \bentpath[(bn)]{(p4)}{(p1)}
            }{}
            \ifthenelse{\NOT \sub=2}{
                \path[draw] (p3) edge[graphedge] (p4);
                \path (p3) edge[draw=none] node[right] {$\ucpl_1$} (p4);
            }{}
            \ifthenelse{\NOT \sub=1}{
                \path[draw] (p1) edge[graphedge] (p2);
                \path (p2) edge[draw=none] node[left] {$\ucpl_2$} (p1);
            }{}
        }{}
        \ifthenelse{\sub=3 \OR \sub=4 \OR \sub=5}{
            \path[draw] (p2) edge[graphedge] (v2);
            \path[draw] (p3) edge[graphedge] (v2);
            \ifthenelse{\NOT \type=5}{
                \bentpath[(bn)]{(p4)}{(p1)}
            }{}
            \ifthenelse{\sub=5}{
                \path[draw] (p3) edge[graphedge] (p4);
            }{}
            \ifthenelse{\sub=4}{
                \path[draw] (p1) edge[graphedge] (p2);
            }{}
        }{}
    }{
    }
    
    \ifthenelse{\equal{\set}{A} \OR \equal{\set}{SA}}{
        \ifthenelse{\sub=0 \OR \sub=1 \OR \sub=3 \OR \sub=5}{
            \ifthenelse{\sub=0 \OR \sub=1}{
                \path[draw] (p2) edge[graphedge] (l1);
            }{
                \path[draw] (p2.270) edge[graphedge] (l1);
            }
            \foreach \i [remember=\i as \lasti (initially 1)] in {2,...,\d}{
                \path[draw] (r\lasti) edge[graphedge] (l\i);
            }
            \path[draw] (r\d) edge[graphedge] (cu);
            \path[draw] (cl) edge[graphedge] (l\nn);

            \ifthenelse{\type=5}{
                \draw (r\nn) edge[out=0, in=0, distance=\unitspacing] ($(pu)+(r)+0.5*(\nodesize,0)$);
                \draw ($(pu)+(r)+0.5*(\nodesize,0)$) edge (p1);
            }{
                \draw (r\nn) edge[out=0, in=0, distance=\unitspacing] (p1);
            }
        }{}
        \ifthenelse{\sub=0 \OR \sub=2 \OR \sub=3 \OR \sub=4}{
            \ifthenelse{\sub=0 \OR \sub=2}{
                \path[draw] (p3) edge[graphedge] (r1);
            }{
                \path[draw] (p3.270) edge[graphedge] (r1);
            }
            \foreach \i [remember=\i as \lasti (initially 1)] in {2,...,\d}{
                \path[draw] (l\lasti) edge[graphedge] (r\i);
            }
            \path[draw] (l\d) edge[graphedge] (cu);
            \path[draw] (cl) edge[graphedge] (r\nn);

            \ifthenelse{\type=5}{
                \draw (l\nn) edge[out=180, in=180, distance=\unitspacing] ($(pu)+(l)-0.5*(\nodesize,0)$);
                \draw ($(pu)+(l)-0.5*(\nodesize,0)$) edge (p4);
            }{
                \draw (l\nn) edge[out=180, in=180, distance=\unitspacing] (p4);
            }
        }{}
    }{}
    \ifthenelse{\equal{\set}{T} \OR \equal{\set}{TB}}{
        \ifthenelse{\sub=0 \OR \sub=1 \OR \sub=2}{
            \ifthenelse{\type=1 \OR \type=2}{
                \ifthenelse{\sub=0 \OR \sub=1}{
                    \bentpath[(bw)]{(p2)}{(q2)}
                }{
                    \bentpath[(be)]{(p2)}{(q2)}
                }
            }{
                \path (p2) edge[pathedge,out=-90,in=-90] (p3);
            }
            \ifthenelse{\type=2 \OR \type=4 \OR \type=5}{
                \ifthenelse{\NOT \type=5}{
                    \bentpath[(be)]{(p1)}{(q1)}
                    \bentpath[(bw)]{(p4)}{(q4)}
                }{
                    \ifthenelse{\sub=0 \OR \sub=1}{
                        \bentpath[(bw)]{(p1)}{(q1)}
                    }{
                        \bentpath[(be)]{(p1)}{(q1)}
                    }
                }
            }{
                \path (p1) edge[pathedge,out=90,in=90] (p4);
            }
            \ifthenelse{\sub=2}{
                \path[draw] (p3) edge[graphedge] (p4);
            }{}
            \ifthenelse{\sub=1}{
                \path[draw] (p1) edge[graphedge] (p2);
            }{}
        }{}
        \ifthenelse{\sub=3 \OR \sub=4 \OR \sub=5}{
            \path[draw] (p2) edge[graphedge] (v2);
            \path[draw] (p3) edge[graphedge] (v2);
            \ifthenelse{\type=4 \OR \type=5}{
                \ifthenelse{\NOT \type=5}{
                    \bentpath[(be)]{(p1)}{(q1)}
                    \bentpath[(bw)]{(p4)}{(q4)}
                }{
                    \ifthenelse{\sub=3 \OR \sub=4}{
                        \bentpath[(bw)]{(p1)}{(q1)}
                    }{
                        \bentpath[(be)]{(p1)}{(q1)}
                    }
                }
            }{
                \path (p1) edge[pathedge,out=90,in=90] (p4);
            }
            \ifthenelse{\sub=4}{
                \path[draw] (p3) edge[graphedge] (p4);
            }{}
            \ifthenelse{\sub=5}{
                \path[draw] (p1) edge[graphedge] (p2);
            }{}
        }{}
    }{}
    \ifthenelse{\equal{\set}{B} \OR \equal{\set}{TB}}{
        \ifthenelse{\sub=0 \OR \sub=1 \OR \sub=3 \OR \sub=5}{
            \foreach \i [remember=\i as \lasti (initially 1)] in {2,...,\d}{
                \path[draw] (l\lasti) edge[graphedge] (r\i);
            }
            \path[draw] (l\d) edge[graphedge] (cu);
            \path[draw] (cl) edge[graphedge] (r\nn);
            \ifthenelse{\type=1 \OR \type=2}{
                \path[draw] (q3) edge[graphedge] (r1);
            }{
                \path[draw] (p3) edge[graphedge] (r1);
            }
            \ifthenelse{\type=1 \OR \type=3}{
                \draw (ld) edge[out=180, in=180, distance=\unitspacing] (p4);
            }{
                \ifthenelse{\NOT \type=5}{
                    \draw (ld) edge[out=180, in=180, distance=\unitspacing] (q4);
                    \path[draw] (p4) edge[graphedge] (q1);
                }{
                    \draw (l\nn) edge[out=180, in=180, distance=\unitspacing] ($(qu)+(l)-0.5*(\nodesize,0)$);
                    \draw ($(qu)+(l)-0.5*(\nodesize,0)$) edge (q4);
                }
            }
        }{}
        
        \ifthenelse{\sub=0 \OR \sub=2 \OR \sub=3 \OR \sub=4}{
            \foreach \i [remember=\i as \lasti (initially 1)] in {2,...,\d}{
                \path[draw] (r\lasti) edge[graphedge] (l\i);
            }
            \path[draw] (r\d) edge[graphedge] (cu);
            \path[draw] (cl) edge[graphedge] (l\nn);
            
            \ifthenelse{\type=1 \OR \type=2}{
                \path[draw] (q2) edge[graphedge] (l1);
            }{
                \path[draw] (p2) edge[graphedge] (l1);
            }
            \ifthenelse{\type=1 \OR \type=3}{
                \draw (rd) edge[out=0, in=0, distance=\unitspacing] (p1);
            }{
                \ifthenelse{\NOT \type=5}{
                    \draw (rd) edge[out=0, in=0, distance=\unitspacing] (q1);
                    \path[draw] (p1) edge[graphedge] (q4);
                }{
                    \draw (r\nn) edge[out=0, in=0, distance=\unitspacing] ($(qu)+(r)+0.5*(\nodesize,0)$);
                    \draw ($(qu)+(r)+0.5*(\nodesize,0)$) edge (q1);
                }
            }
        }{}
    }{}
}
\newcommand{\CDH}{
    \begin{tikzpicture}
        \def\horizontalspacing{\unitspacing}
        \def\verticalspacing{\unitspacing}
        \def\bentspacing{\unitspacing*0.3}
        \setspacingcoordinate
        \coordinate (center) at ($0*1.5*(hs)+0*(vs)$);
        \node[graphnode] at ($(center) + 0.5*(hs) + 0.5*(vs)$) (p2) {$p_2$};
        \node[graphnode] at ($(center) - 0.5*(hs) + 0.5*(vs)$) (p3) {$p_3$};
        \node[graphnode] at ($(center) + 0.5*(hs) - 0.5*(vs)$) (x1) {$x_1$};
        \node[graphnode] at ($(center) - 0.5*(hs) - 0.5*(vs)$) (y1) {$y_1$};
        \node[graphnode] at ($(center) + 2.5*(hs)$) (p1) {};
        \node[graphnode] at ($(center) - 2.5*(hs)$) (p4) {};
        \path (p2) edge[graphedge] (x1);
        \path (p3) edge[graphedge] (y1);
        \path (p2) edge[pathedge,out=60,in=90] (p1);
        \path (x1) edge[pathedge,out=-60,in=-90] (p1);
        \path (p3) edge[pathedge,out=120,in=90] (p4);
        \path (y1) edge[pathedge,out=240,in=-90] (p4);

        \node at ($(center) - 1.7*(vs)$) () {$C$};

        \coordinate (center) at ($4*(hs)+0*(vs)$);

        \coordinate (center) at ($8*(hs)+0*(vs)$);
        \node[graphnode] at ($(center) + 0.5*(hs) + 0.5*(vs)$) (p2) {$p_2$};
        \node[graphnode] at ($(center) - 0.5*(hs) + 0.5*(vs)$) (p3) {$p_3$};
        \node[graphnode] at ($(center) + 0.5*(hs) - 0.5*(vs)$) (x1) {$x_1$};
        \node[graphnode] at ($(center) - 0.5*(hs) - 0.5*(vs)$) (y1) {$y_1$};
        \node[graphnode] at ($(center) + 2.5*(hs)$) (p1) {};
        \node[graphnode] at ($(center) - 2.5*(hs)$) (p4) {};
        \path (p2) edge[graphedge] (y1);
        \path (p3) edge[graphedge] (x1);
        \path (p2) edge[pathedge,out=60,in=90] (p1);
        \path (x1) edge[pathedge,out=-60,in=-90] (p1);
        \path (p3) edge[pathedge,out=120,in=90] (p4);
        \path (y1) edge[pathedge,out=240,in=-90] (p4);

        \node at ($(center) - 1.7*(vs)$) () {$H$};
    \end{tikzpicture}
}
\newcommand{\testall}[4][\STAB]{
    \def\TEST{#1}
    \begin{figure}[H]
        \begin{tikzpicture}
            \foreach \j/\S/\s in {0/#2/#3,1/#2/#4}{
                \foreach \i/\t in {0/3,1/4,2/5}{
                    \ifthenelse{\NOT \i=0}{
                        \partline{\i}{-11*\j}{4.5}{-4}
                    }{}
                    \TEST{\i}{-11*\j}{\S}{\t}{\s}{2}{}
                }
            }
        \end{tikzpicture}
        \caption{$#2_{#3}$ and $#2_{#4}$}
    \end{figure}
    }
\newcommand{\test}[1][\EXEVEN]{
    \def\TEST{#1}
    \begin{figure}[H]
        \begin{tikzpicture}
            \foreach \j/\S in {0/S,1/T}{
                \foreach \i/\s in {0/0,1/1,2/2,3/3,4/4,5/5}{
                    \ifthenelse{\NOT \i=0}{
                        \partline{\i}{-11*\j}{4.5}{-4}
                    }{}
                    \TEST{\i}{-11*\j}{\S}{\s}{$\S_\s$}
                }
            }
        \end{tikzpicture}
    \end{figure}

    \begin{figure}[H]
        \begin{tikzpicture}
            \foreach \j/\S in {0/A,1/B}{
                \foreach \i/\s in {0/0,1/1,2/2,3/3,4/4,5/5}{
                    \ifthenelse{\NOT \i=0}{
                        \partline{\i}{-11*\j}{4.5}{-4}
                    }{}
                    \TEST{\i}{-11*\j}{\S}{\s}{$\S_\s$}
                }
            }
        \end{tikzpicture}
    \end{figure}
    }
\newcommand{\figexeven}{
    \begin{tikzpicture}
        \EXEVEN{0}{0}{S}{0}{$S$}
        \EXEVEN{2}{0}{T}{0}{$T$}
    \end{tikzpicture}
}
\newcommand{\figexevene}{
    \begin{tikzpicture}
        \EXEVEN{0}{-6}{S}{3}{$S\setminus\{e_c\}$}
        \EXEVEN{1.5}{-6}{T}{3}{$T\cup\{e_c\}$}

        \EXEVEN{3.5}{-6}{S}{4}{$S\setminus\{e_d\}$}
        \EXEVEN{5}{-6}{T}{4}{$T\cup\{e_d\}$}
    \end{tikzpicture}
}
\newcommand{\figexevenst}{
    \begin{tikzpicture}
        \EXEVEN{0}{0}{S}{1}{$S_1$}
        \EXEVEN{1.5}{0}{T}{1}{$T_1$}
        \EXEVEN{3.5}{0}{S}{2}{$S_2$}
        \EXEVEN{5}{0}{T}{2}{$T_2$}
    \end{tikzpicture}       
}
\newcommand{\figexevena}{
    \begin{tikzpicture}
        \EXEVEN{0}{0}{A}{1}{$A_1$}
        \EXEVEN{2}{0}{A}{2}{$A_2$}
        \EXEVEN{4}{0}{A}{0}{$A_1 \cup A_2$}
    \end{tikzpicture}
}
\newcommand{\figexevenb}{
    \begin{tikzpicture}
        \EXEVEN{0}{-6}{B}{1}{$B_1$}
        \EXEVEN{2}{-6}{B}{2}{$B_2$}
        \EXEVEN{4}{-6}{B}{0}{$B_1\cup B_2$}
    \end{tikzpicture}
}
\newcommand{\figevens}{
    \begin{tikzpicture}
        \STAB{0}{0}{S}{1}{1}{2}{$S_1$}
        \STAB{1.5}{0}{S}{1}{2}{2}{$S_2$}
        \partcaption{0.75}{-4}{Case 1: $p_2=p_3$}

        \STAB{3.5}{0}{S}{3}{1}{2}{$S_1$}
        \STAB{5}{0}{S}{3}{2}{2}{$S_2$}
        \partcaption{4.25}{-4}{Case 2: $p_2\not=p_3$}
    \end{tikzpicture}
}
\newcommand{\figevena}{
    \begin{tikzpicture}
        \STAB{0}{0}{A}{1}{1}{2}{$A_1$}
        \STAB{2}{0}{A}{1}{2}{2}{$A_2$}
        \STAB{4}{0}{A}{1}{0}{2}{$A_1 \cup A_2$}
        \partcaption{2}{-4.5}{Case 1: $p_2=p_3$}
        
        \STAB{0}{-7}{A}{3}{1}{2}{$A_1$}
        \STAB{2}{-7}{A}{3}{2}{2}{$A_2$}
        \STAB{4}{-7}{A}{3}{0}{2}{$A_1 \cup A_2$}
        \partcaption{2}{-11.5}{Case 2: $p_2\not=p_3$}
    \end{tikzpicture}
}
\newcommand{\figevent}{
    \begin{tikzpicture}
        \STAB{0}{0}{T}{1}{1}{2}{$T_1$}
        \STAB{1.5}{0}{T}{1}{2}{2}{$T_2$}
        \partcaption{0.75}{-4.5}{Case 1}

        \STAB{3.5}{0}{T}{2}{1}{2}{$T_1$}
        \STAB{5}{0}{T}{2}{2}{2}{$T_2$}
        \partcaption{4.25}{-4.5}{Case 2}

        \STAB{0}{-10}{T}{3}{1}{2}{$T_1$}
        \STAB{1.5}{-10}{T}{3}{2}{2}{$T_2$}
        \partcaption{0.75}{-14.5}{Case 3}

        \STAB{3.5}{-10}{T}{4}{1}{2}{$T_1$}
        \STAB{5}{-10}{T}{4}{2}{2}{$T_2$}
        \partcaption{4.25}{-14.5}{Case 4}
    \end{tikzpicture}
}
\newcommand{\figevenb}[1][1]{
    \begin{tikzpicture}
        \STAB{0}{0}{B}{#1}{1}{2}{$B_1$}
        \STAB{2}{0}{B}{#1}{2}{2}{$B_2$}
        \STAB{4}{0}{B}{#1}{0}{2}{$B_1 \cup B_2$}
        \partcaption{2}{-4.5}{Case #1}
    \end{tikzpicture}
}
\newcommand{\figexodd}{
    \begin{tikzpicture}
        \EXODD{0}{0}{S}{0}{$S$}
        \EXODD{2}{0}{T}{0}{$T$}
        \EXODD{4}{0}{T}{3}{$T^\prime$}{}
    \end{tikzpicture}
}
\newcommand{\figexoddst}{
    \begin{tikzpicture}
        \EXODD{0}{0}{S}{1}{$S_1$}
        \EXODD{1.5}{0}{T}{1}{$T_1$}
        \EXODD{3.5}{0}{S}{2}{$S_2$}
        \EXODD{5}{0}{T}{2}{$T_2$}
    \end{tikzpicture}       
}
\newcommand{\figexoddstp}{
    \begin{tikzpicture}
        \EXODD{0}{0}{S}{4}{$S^\prime_1$}
        \EXODD{1.5}{0}{T}{4}{$T^\prime_1$}
        \EXODD{3.5}{0}{S}{5}{$S^\prime_2$}
        \EXODD{5}{0}{T}{5}{$T^\prime_2$}
    \end{tikzpicture}       
}
\newcommand{\figexodda}{
    \begin{tikzpicture}
        \EXODD{0}{0}{A}{1}{$A_1$}
        \EXODD{2}{0}{A}{2}{$A_2$}
        \EXODD{4}{0}{A}{0}{$A_1 \cup A_2$}
    \end{tikzpicture}
}
\newcommand{\figexoddb}{
    \begin{tikzpicture}
        \EXODD{0}{-7}{B}{1}{$B_1$}
        \EXODD{2}{-7}{B}{2}{$B_2$}
        \EXODD{4}{-7}{B}{0}{$B_1\cup B_2$}
    \end{tikzpicture}
}
\newcommand{\figexoddap}{
    \begin{tikzpicture}
        \EXODD{0}{0}{A}{4}{$A^\prime_1$}
        \EXODD{2}{0}{A}{5}{$A^\prime_2$}
        \EXODD{4}{0}{A}{3}{$A^\prime_1 \cup A^\prime_2$}
    \end{tikzpicture}
}
\newcommand{\figexoddbp}{
    \begin{tikzpicture}
        \EXODD{0}{0}{B}{4}{$B^\prime_1$}
        \EXODD{2}{0}{B}{5}{$B_2$}
        \EXODD{4}{0}{B}{3}{$B^\prime_1\cup B^\prime_2$}
    \end{tikzpicture}
}
\newcommand{\figodds}{
    \begin{tikzpicture}
        \STAB{0}{0}{S}{3}{4}{2}{$S_1$}
        \STAB{1.5}{0}{S}{3}{5}{2}{$S_2$}
        \partcaption{0.75}{-4.5}{Case 1: $|C^*| > 4$}

        \STAB{3.5}{0}{S}{5}{4}{2}{$S_1$}
        \STAB{5}{0}{S}{5}{5}{2}{$S_2$}
        \partcaption{4.25}{-4.5}{Case 2: $|C^*| = 4$}
    \end{tikzpicture}
}
\newcommand{\figodda}{
    \begin{tikzpicture}
        \STAB{0}{0}{A}{3}{4}{2}{$A_1$}
        \STAB{2}{0}{A}{3}{5}{2}{$A_2$}
        \STAB{4}{0}{A}{3}{3}{2}{$A_1 \cup A_2$}
        \partcaption{2}{-4.5}{Case 1: $|C^*| > 4$}

        \STAB{0}{-7}{A}{5}{4}{2}{$A_1$}
        \STAB{2}{-7}{A}{5}{5}{2}{$A_2$}
        \STAB{4}{-7}{A}{5}{3}{2}{$A_1 \cup A_2$}
        \partcaption{2}{-11.5}{Case 2: $|C^*| = 4$}
    \end{tikzpicture}
}
\newcommand{\figoddt}{
    \begin{tikzpicture}
        \STAB{0}{0}{T}{3}{4}{2}{$T_1$}
        \STAB{1.5}{0}{T}{3}{5}{2}{$T_2$}
        \partcaption{0.75}{-4.5}{Case 1}

        \STAB{3.5}{0}{T}{4}{4}{2}{$T_1$}
        \STAB{5}{0}{T}{4}{5}{2}{$T_2$}
        \partcaption{4.25}{-4.5}{Case 2}
        
        \STAB{0}{-8.5}{T}{5}{4}{2}{$T_1$}
        \STAB{1.5}{-8.5}{T}{5}{5}{2}{$T_2$}
        \partcaption{0.75}{-13}{Case 3}
        
    \end{tikzpicture}
}
\newcommand{\figoddb}[2]{
    \begin{tikzpicture}
        \STAB{0}{0}{B}{#1}{4}{2}{$B_1$}
        \STAB{2}{0}{B}{#1}{5}{2}{$B_2$}
        \STAB{4}{0}{B}{#1}{3}{2}{$B_1 \cup B_2$}
        \partcaption{2}{-4.5}{Case #2}
    \end{tikzpicture}
}
\newcommand{\argmin}{\mathop{\rm argmin}\limits}
\newcommand{\opt}{\mathop{\rm opt}}
\newcommand{\wor}{\mathop{\rm wor}}
\newcommand{\apx}{{\rm apx}}
\begin{document}

\title{A 3/4 Differential Approximation Algorithm \\ for Traveling Salesman Problem}

\author[1]{Yuki Amano}
\author[1]{Kazuhisa Makino}
\affil[1]{Research Institute for Mathematical Sciences, Kyoto University, Kyoto, Japan. \texttt{\{ukiamano,makino\}@kurims.kyoto-u.ac.jp}}

\date{}
\maketitle
\begin{abstract}
In this paper, we consider differential approximability of the traveling salesman problem (TSP). 
We show that TSP is $3/4$-differential approximable, which improves the currently best known bound $3/4 -O(1/n)$ due to Escoffier and Monnot in 2008, 
where $n$ denotes the number of vertices in the given graph.

\end{abstract} 

\section{Introduction}
	The traveling salesman problem (TSP) finds a shortest \emph{Hamiltonian cycle} in a given complete graph with edge length, when a cycle is called \emph{Hamiltonian} (also called a \emph{tour}) if it visits every vertex exactly once. TSP is one of the most fundamental NP-hard optimization problems in operations research and computer science, and has been intensively studied from both practical and theoretical view points \cite{cook2011pursuit,monnot2014traveling,punnen2007traveling,shmoys1985traveling}. It has a number of applications such as planning, logistics, and the manufacture of microchips \cite{bland1989large,grotschel1991optimal}.
	Because of these importance, many heuristics and exact algorithms have been proposed \cite{bellman1961dynamic,held1962dynamic,lin1973effective,little1963algorithm}. From a view point of computational complexity, TSP is NP-hard, even in the Euclidean case, which includes the metric case. It is known that metric TSP is approximable with factor $1.5$ \cite{christofides1976worst}, and inapproximable with factor $117/116$ \cite{chlebik2019approximation}. Euclidean TSP admits a polynomial-time approximation scheme (PTAS), if the dimension of the Euclidean space is bounded by a constant \cite{arora1998polynomial}. We note that the approximation factors (i.e., ratios) above are widely used to analyze approximation algorithms.
	
	Let $\Pi$ be an optimization problem, and let $I$ be an instance of $\Pi$. We denote by $\opt(I)$ the value of an optimal solution to $I$. For an approximation algorithm $A$ for $\Pi$, we denote by ${\textstyle \apx_{A}}(I)$ the value of the approximate solution computed by $A$ for the instance $I$. Let \[r_A(I)=\apx_A(I)/\opt(I), \] and define the \emph{standard approximation ratio} of $A$ by $\sup_{I \in \Pi}r_A(I)$, where we assume that $\Pi$ is a minimization problem. Although the standard approximation ratio is well-studied and an important concept in algorithm theory, it is not invariant under affine transformation of the objective function. Namely, if the objective function $f(x)$ is replaced by $a+bf(x)$ for some constant $a$ and $b$, which might depend on the instance $I$, the standard ratio is not preserved. For example, the vertex cover problem and the independent set problem have affinely dependent objective functions. However they have different characteristics in the standard approximation ratio. The vertex cover problem is $2$-approximable \cite{papadimitriou1998combinatorial}, while the independent set problem is inapproximable within $O(n^{1-\epsilon})$ for any $\epsilon > 0$ \cite{engebretsen2000clique}, where $n$ denotes the number of vertices in a given graph. In order to remedy to this phenomenon, Demange and Paschos \cite{demange1996approximation} proposed the \emph{differential approximation ratio} defined by $\sup_{I \in \Pi} \rho_A(I)$, where 	
	\[\rho_A(I)=\frac{\wor(I)-\apx_A(I)}{\wor(I)-\opt(I)}\] and $\wor(I)$ denotes the value of a worst solution to $I$. Note that for any instance $I$ of $\Pi$ \[\apx_A(I) = \rho_A(I)\opt(I) + (1-\rho_A(I))\wor(I).\] Thus we have $0 \leq \rho_A(I) \leq 1$ and the larger $\rho_A(I)$ implies the better approximation for the instance $I$. Moreover, by definition, the differential approximation ratio remains invariant under affine transformation of the objective function. For this, it has been recently attracted much attention in approximation algorithm \cite{ausiello2005completeness}. It is known \cite{monnot2003approximation} that TSP, metric TSP, max TSP, and max metric TSP are affinely equivalent, i.e., their objective functions are transferred to each other by affine transformations, where max TSP is the problem to find a longest Hamiltonian cycle and max metric TSP is max TSP, in which the input weighted graph satisfies the metric condition. Therefore, these problems have the identical differential approximation ratio. 

	Hassin and Khuller \cite{hassin2001z} first studied differential approximability of TSP, and showed that it is $2/3$-differential approximable. Escoffier and Monnot \cite{escoffier2008better} improved it to $3/4-O(1/n)$, where $n$ denotes the number of vertices of a given graph. Monnot et al. \cite{monnot2002differential, monnot2003differential} showed that TSP is $3/4$-differential approximable if each edge length is restricted to one or two. 
	
	In this paper, we show that TSP is $3/4$-differential approximable, which inproves the currently best known results \cite{escoffier2008better,monnot2002differential, monnot2003differential}.
	Our algorithm is based on an idea in \cite{escoffier2008better} for the case in which a given graph $G$ has an even number of vertices and a triangle (i.e., cycle with $3$ edges\todo[fancyline]{of length three}) is contained in a minimum weighted $2$-factor of $G$. Their algorithm first computes minimum weighted $1$- and $2$-factors of a given graph, modify them to four path covers $P_i$ (for $i=1, \ldots , 4$), and then extend each path cover $P_i$ to a tour by adding edge set $F_i$ to it in such a way that at least one of the tours guarantees $3/4$-differential approximation ratio. Here the definitions of factor and path cover can be found in Section 2. We generalize their idea to the general even case. Note that $\bigcup_{i=1, \ldots ,4} F_i$ in their algorithm always forms a tour, where in general it does not. We show that there exists a way to construct path covers such that the length of $\bigcup_{i=1, \dots ,4} F_i$ is at most the worst tour length. Our algorithm for odd case is much more involved. For each path with three edges\todo[fancyline]{of length 3}, we first construct a $2$-factor and two path covers of a given graph which has minimum length among all these which completely and partially contains the path, modify them to eight path covers, and then extend each path cover to a tour, in such a way that at least one of the eight tours guarantees $3/4$-differential approximation ratio.
	
	The rest of the paper is organized as follows. In Section 2, we define basic concepts of graphs and discuss some properties on 2-matchings, which will be used in the subsequent sections. In Sections 3 and 4, we provide an approximation algorithms for TSP in which a given graph $G$ has even and odd numbers of vertices, respectively.

\section{Preliminary}
	Let $G=(V,E)$ be an undirected graph, where $n$ and $m$ denote the number of vertices and edges in $G$, respectively. 
	In this paper, we assume that a given graph $G$ of TSP is complete, i.e., $E={{V}\choose {2}}$, and it has an edge length function $\ell:E \to \mathbb{R}_+$, where $\mathbb{R}_+$ denotes the set of nonnegative reals.
	For a set $F \subseteq E$, let $V(F)$ denote the set of vertices with incident edges in $F$, i.e., $V(F) = \{v \in V \mid \exists (v,w) \in F\}$. 
	A set $F \subseteq E$ is called \emph{spanning} if $V(F)=V$, and \emph{acyclic} if $F$ contains no cycle. For a positive integer $k$, a set $F \subseteq E$ is called a \emph{$k$-matching} (resp., \emph{$k$-factor}) if each vertex has at most (resp., exactly) $k$ incident edges in $F$. Here $1$-matching is simply called a \emph{matching}. Note that an acyclic $2$-matching $F$ corresponds to a family of vertex-disjoint paths denoted by $\mathcal{P}(F) \subseteq 2^E$. A $2$-matching is called a \emph{path cover} if it is spanning and acyclic. For a set $F \subseteq E$, $V_1(F)$ and $V_2(F)$ respectively denote the sets of vertices with one and two incident edges in $F$. For a set $F \subseteq E$ and a vertex $v \in V$, let $\delta_F(v) = \{e \in F\mid e \text{ is incident to } v\}$.
	\begin{dfn}\label{dfn:pair-property}
		A pair of spanning 2-matchings $(S,T)$ is called \emph{valid} if it satisfies the following three conditions:
	\begin{align}
		(\mathrm{i})&\ T \text{ is acyclic}. \notag \\
		(\mathrm{ii})&\ \delta_S(v) = \delta_T(v) \text{ for any } v \in V_2(S) \cap V_2(T). \label{dfn:pair-property:edge}\\
		(\mathrm{iii})&\ V(C) \not = V(P) \text{ for any cycle } C \subseteq S \text{ and any path } P \subseteq \mathcal{P}(T). \label{dfn:pair-property:path}
	\end{align}
	\end{dfn}
	Figure 1 shows a valid pair of spanning $2$-matchings. 
	\begin{lem}
		\label{lem:movable-edges}
		Let $(S,T)$ be a valid pair of spanning $2$-matchings. If $S$ contains a cycle $C$, then $C$ contains two edges $e_i$ for $i=1,2$ such that $S_i=S\setminus \{e_i\}$ and $T_i=S\cup \{e_i\}$ satisfy the following three conditions:
		\begin{align}
			&(S_i, T_i) \text{ is valid for } i=1,2. \label{lem:movable-edges:valid}\\
			&V_1(S_i) \cup V_1(T_i) = V_1(S) \cup V_1(T) \text{ and } V_1(S_i) \cap V_1(T_i) = V_1(S) \cap V_1(T) \text{ for } i = 1,2. \label{lem:movable-edges:vertex}\\
			&\mathcal{P}(T) \text{ contains a path } P \text{ such that } P \cup \{e_1\} \text{ and } P \cup \{e_2\} \text{ are both paths. }\label{lem:movable-edges:path}
		\end{align}
	\end{lem}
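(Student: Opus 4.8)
The plan is to take the two ``movable'' edges to be suitable \emph{gap edges} of $C$, where I call an edge of $C$ a \emph{gap} if it does not lie in $T$, and to take for $P$ a single path of $\mathcal{P}(T)$ to which both gaps attach. First I record the local structure forced by validity. Every $v\in V(C)$ lies in $V_2(S)$ and its two $S$-edges are exactly the two edges of $C$ at $v$. Hence if some edge of $C$ at $v$ is a gap, then $\delta_S(v)\neq\delta_T(v)$, so by \eqref{dfn:pair-property:edge} $v\notin V_2(T)$; since $T$ is spanning, $\deg_T(v)=1$, i.e.\ \emph{every endpoint of a gap is a leaf of $T$}. Because $T$ is acyclic, $C$ has at least one gap; and $C$ cannot have exactly one gap, since then $C$ minus that edge would be a maximal $T$-path (its ends being leaves) with vertex set $V(C)$, contradicting \eqref{dfn:pair-property:path}. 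Thus $C$ has at least two gaps.

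To produce $P,e_1,e_2$ (hence \eqref{lem:movable-edges:path}) I distinguish two cases. If some edge of $C$ lies in $T$, let $A$ be a maximal run of consecutive $C$-edges lying in $T$; its endpoints $a,b$ are leaves of $T$ (each incident to a gap, by maximality), so $A$ is a maximal path of $T$, and I set $P=A\in\mathcal{P}(T)$ and let $e_1,e_2$ be the gaps of $C$ at $a,b$. As $C$ has at least two gaps, the far endpoints of $e_1,e_2$ are vertices of $C$ strictly outside the arc $A$, hence outside $V(P)$, so $P\cup\{e_1\}$ and $P\cup\{e_2\}$ are paths. If no edge of $C$ lies in $T$, every $v\in V(C)$ is a leaf of $T$ whose unique $T$-edge avoids $C$; write $P(v)$ for the path of $\mathcal{P}(T)$ ending at $v$. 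If some $v_i$ has both cyclic neighbours off $P(v_i)$, take $P=P(v_i)$ and $e_1,e_2$ the two edges of $C$ at $v_i$. Otherwise every $v_i$ shares its $T$-path with a cyclic neighbour; since all these vertices are leaves, this pairs $V(C)$ into consecutive pairs, forcing $|V(C)|$ even and, after relabelling, $T$-paths joining $v_{2j-1}$ to $v_{2j}$; then $P$ the path joining $v_1,v_2$, $e_1=v_2v_3$, $e_2=v_nv_1$ are two gaps attaching to the two ends of $P$ with far endpoints on other paths.

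It remains to check \eqref{lem:movable-edges:valid} and \eqref{lem:movable-edges:vertex} for each $e_i=uv$, noting $u,v\in V(C)\subseteq V_2(S)$ while $u,v\in V_1(T)$ (leaves incident to $e_i$). Passing to $S_i=S\setminus\{e_i\}$ moves $u,v$ from $V_2(S)$ into $V_1(S_i)$, and passing to $T_i=T\cup\{e_i\}$ moves $u,v$ from $V_1(T)$ into $V_2(T_i)$, no other vertex changing degree; so $V_1(S_i)=V_1(S)\cup\{u,v\}$ with $u,v\notin V_1(S)$, and $V_1(T_i)=V_1(T)\setminus\{u,v\}$ with $u,v\in V_1(T)$, and a direct computation of the union and intersection gives \eqref{lem:movable-edges:vertex}. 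For validity, acyclicity holds because $e_i$ joins the endpoint $u$ of $P$ to a vertex $v\notin V(P)$, i.e.\ two distinct paths of $T$, so $T_i$ is acyclic and $\mathcal{P}(T_i)$ differs from $\mathcal{P}(T)$ only by merging those two paths into one new path $P^{\ast}\ni u,v$; condition \eqref{dfn:pair-property:edge} survives because only $u,v$ change degree and they leave $V_2(S_i)$, so $V_2(S_i)\cap V_2(T_i)=V_2(S)\cap V_2(T)$, on which $\delta_{S_i}=\delta_S$ and $\delta_{T_i}=\delta_T$ agree by \eqref{dfn:pair-property:edge} for $(S,T)$.

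The subtle point, which I expect to be the crux, is \eqref{dfn:pair-property:path} for $(S_i,T_i)$. Cycles of $S_i$ are exactly the cycles of $S$ other than $C$, and the paths of $T_i$ are those of $T$ together with the single merged path $P^{\ast}$. For an unchanged path the conclusion is immediate from \eqref{dfn:pair-property:path} for $(S,T)$. For $P^{\ast}$ I use that $u$ and $v$ have degree $1$ in $S_i$, hence lie on no cycle of $S_i$, whereas $u,v\in V(P^{\ast})$; therefore $V(P^{\ast})\neq V(C')$ for every cycle $C'\subseteq S_i$. The two places needing care are thus the existence argument in the degenerate all-gaps configuration (the perfect-matching subcase), where I must exhibit a path whose two ends carry outward cycle edges, and the proof that merging two $T$-paths can never recreate the forbidden coincidence of \eqref{dfn:pair-property:path}; both are dispatched by the leaf/degree observations above.
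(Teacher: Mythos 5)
Your proof is correct and uses essentially the same approach as the paper: pick a path $P\in\mathcal{P}(T)$ whose endpoint(s) lie on $C$, choose $e_1,e_2$ to be edges of $C$ at those endpoints whose far ends lie outside $V(P)$, and verify validity via the observation (from condition (\ref{dfn:pair-property:edge})) that endpoints of cycle edges not in $T$ are leaves of $T$, with condition (\ref{dfn:pair-property:path}) excluding the degenerate configuration $V(P)=V(C)$. Your case split (on whether $C\cap T=\emptyset$, with the maximal-run and pairing subcases) is organized differently from the paper's (on whether the second endpoint of the chosen path lies on $C$), and your verification of (\ref{lem:movable-edges:valid}) and (\ref{lem:movable-edges:vertex}) is considerably more explicit than the paper's terse proof, but the underlying route is the same.
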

	\begin{proof}
		Let $C=\{(v_0,v_1),(v_1,v_2), \ldots , (v_{k-1},v_k)\}$ for $k\geq 3$, where $v_k=v_0$. If $\mathcal{P}(T)$ contains a $(s,t)$-path $P$ with $s \in V(C)$ and $t \not\in V(C)$, then it follows from (\ref{dfn:pair-property:edge}) that $V(P) \cap V(C) = \{s\}$. We assume that $s = v_1$ without loss of generality. Let $e_1=(v_0,v_1)$ and $e_2=(v_1,v_2)$. It is not difficult to see that $(S_i, T_i)$ is valid and $P \cup \{e_i\}$ is a path for every $i=1,2$. On the other hand, if $\mathcal{P}(T)$ contains a $(s,t)$-path $P$ with $s,t \in V(C)$, we assume without loss of generality that $s=v_1$, $t=v_j$ for some $j$ with $2 \leq j \leq k-1$, and $P$ does not contain $(v_0,v_1)$. Note that such a path exists, since $T$ is spanning. Define $e_1=(v_0,v_1)$ and $e_2=(v_j,v_{j+1})$. By (\ref{dfn:pair-property:edge}) and (\ref{dfn:pair-property:path}), we can show that $(S_i, T_i)$ is valid and $P \cup \{e_i\}$ is a path for every $i=1,2$. This completes the proof. 
	\end{proof}
	
	Note that $(S_1,T_1)$ and $(S_2,T_2)$ in Lemma 2 satisfy
	\begin{align}
		S_i \cup T_i = S \cup T \text{ and } S_i \cap T_i = S \cap T \text{ for } i = 1,2,\label{lem:movable-edges:edge}
	\end{align}
	which immediately implies
	\begin{align}
		\ell(S_i) + \ell(T_i) = \ell(S) + \ell(T) \text{ for } i = 1,2, \label{lem:movable-edges:weight}
	\end{align}
	where $\ell(F)=\sum_{e \in F}\ell(e)$ for a set $F \subseteq E$.\\
	\textcolor{black}{Figure 2 shows two pairs $(S\setminus \{e_c\}, T \cup \{e_c\})$ and $(S\setminus \{e_d\}, T \cup \{e_d\})$ satisfying $(\ref{lem:movable-edges:valid})$, $(\ref{lem:movable-edges:vertex})$ and $(\ref{lem:movable-edges:path})$, which are obtained from $(S,T)$, $e_1=e_c$, and $e_2=e_d$ in Fig. \ref{fig:exeven}.}
	\begin{figure}
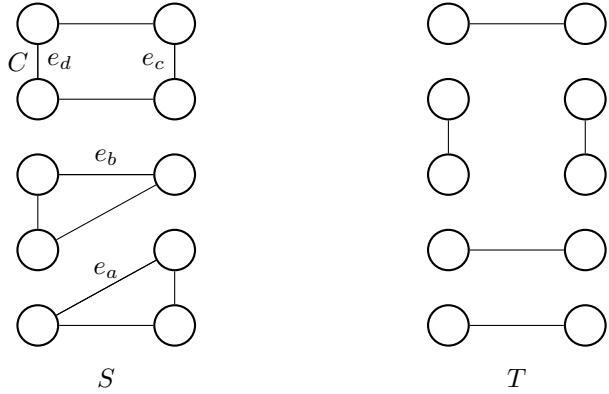

		\centering
		\figexeven
		\caption{A valid pair $(S,T)$ of spanning $2$-matchings.}\label{fig:exeven}
	\end{figure}
	\begin{figure}
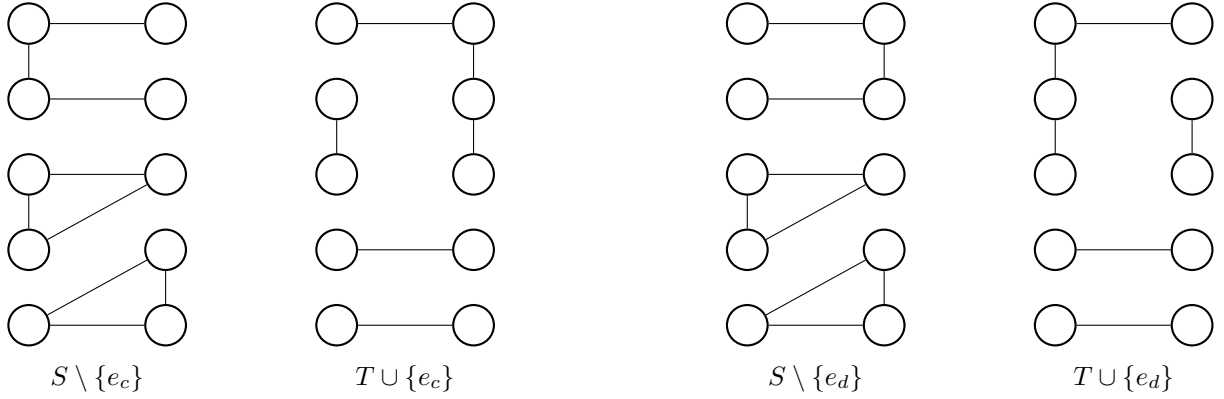

		\centering
		\figexevene
		\caption{Two pairs $(S\setminus \{e_c\}, T \cup \{e_c\})$ and $(S\setminus \{e_d\}, T \cup \{e_d\})$ satisfying $(\ref{lem:movable-edges:valid})$, $(\ref{lem:movable-edges:vertex})$ and $(\ref{lem:movable-edges:path})$, which are obtained from $(S,T)$, $e_1=e_c$, and $e_2=e_d$ in Fig. \ref{fig:exeven}.}\label{fig:exevene}
	\end{figure}
\section{Approximation for even instances}
	In this section, we construct an approximation algorithm for TSP in which a given graph has an even number of vertices. Our algorithm first construct four path covers from minimum weighted $1$- and $2$-factors of a given graph $G$, and then extend each path cover to a tour in such a way that at least one of the tours guarantees 3/4-differential approximation ratio.

Let us first describe the procedure \texttt{FourPathCovers}.
Let $(S,T)$ be a valid pair of spanning 2-matchings of $(G,\ell)$ such that $S$ is a 2-factor. 
The procedure computes from $(S,T)$ four path covers $S_1$, $S_2$, $T_1$, and $T_2$ that satisfies (\ref{lem:movable-edges:vertex}), (\ref{lem:movable-edges:edge}), $V_1(S_i)$ and $V_1(T_i)$ is a partition of $V_1(T)$ for $i=1,2$, i.e.,
	\begin{align}
		\begin{split}
		&V_1(S_i) \cup V_1(T_i) = V_1(T) \text{ and } V_1(S_i) \cap V_1(T_i) = \emptyset \text{ for } i=1,2,
		\end{split}\label{lem:4as2m:vertex}
	\end{align}
	and
	\begin{align}
		\begin{split}
		&\text{ there exist } e_1, e_2 \in E \text{ and } P \in \mathcal{P}(T_1 \cap T_2) \text{ such that }\\ &\quad T_1 \setminus T_2=\{e_1\},\ T_2 \setminus T_1=\{e_2\},\ P \cup \{e_1\} \in \mathcal{P}(T_1), \text{ and } P \cup \{e_2\} \in \mathcal{P}(T_2).
		\end{split}\label{lem:4as2m:diff}
	\end{align}
	\begin{figure}[h!t]
		\begin{procedure}[H]
			\caption{\texttt{FourPathCovers}$(S,T)$\\ /*$(S, T)$ is a valid pair of spanning $2$-matchings such that $S$ has a cycle. The procedure returns $4$ path covers $S_1$, $S_2$, $T_1$, and $T_2$ that sasifies (\ref{lem:movable-edges:vertex}), (\ref{lem:movable-edges:edge}), and (\ref{lem:4as2m:diff}).*/}
			\label{alg:FourPathCovers}
			\begin{algorithmic}
			\If{$S$ has exactly one cycle} 
				\State Take two edges $e_1$ and $e_2$ in Lemma \ref{lem:movable-edges}.
				\State \Return $S_1 = S \setminus \{e_1\}$, $T_1 = T \cup \{e_1\}$, $S_2 = S \setminus \{e_2\}$, and $T_2 = T \cup \{e_2\}$
			\Else \Comment{$S$ has at least two cycles.}
				\State Take an edge $e_1$ in Lemma \ref{lem:movable-edges}.
				\State \Return \texttt{FourPathCovers}$(S \setminus \{e_1\}, T \cup \{e_1\})$
			\EndIf
			\end{algorithmic}	
		\end{procedure}
	\end{figure}
	\textcolor{black}{In Fig. \ref{fig:exeven-st} we apply \textbf{Procedure} \texttt{FourPathCovers} to $(S,T)$ in Fig. \ref{fig:exeven}.}
	\begin{lem}\label{lem:4as2m}
	For a graph $G=(V,E)$, let $(S,T)$ be a valid pair of spanning $2$-matchings such that $S$ has a cycle. Then \textbf{Procedure} \texttt{FourPathCovers} returns four path covers $S_1$, $S_2$, $T_1$, and $T_2$ that satisfy $(\ref{lem:movable-edges:vertex})$, $(\ref{lem:movable-edges:edge})$, and $(\ref{lem:4as2m:diff})$. Furthermore, if $S$ is addition a $2$-factor of $G$, then the four path covers satisfy $(\ref{lem:4as2m:vertex})$.
	\end{lem}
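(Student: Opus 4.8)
The plan is to induct on the number $c$ of cycles contained in $S$, following the recursive structure of \textbf{Procedure} \texttt{FourPathCovers}. The single fact I would lean on repeatedly is that whenever $(S,T)$ is valid and $S$ has a cycle, Lemma~\ref{lem:movable-edges} supplies edges $e_1,e_2$ for which $(S\setminus\{e_i\},T\cup\{e_i\})$ is again valid and satisfies $(\ref{lem:movable-edges:vertex})$ and $(\ref{lem:movable-edges:edge})$. I would note at the outset that $(\ref{lem:movable-edges:edge})$, via $S_i\cap T_i=S\cap T$, forces $e_i\notin T$, so that $T\cup\{e_i\}$ genuinely adds a new edge; this will matter for $(\ref{lem:4as2m:diff})$.

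\textbf{Base case ($c=1$).} Here the procedure returns $S_i=S\setminus\{e_i\}$ and $T_i=T\cup\{e_i\}$ for the two edges of Lemma~\ref{lem:movable-edges}. I would first check these are path covers: each $T_i$ is a spanning acyclic $2$-matching because $(S_i,T_i)$ is valid (condition (i) of Definition~\ref{dfn:pair-property}); each $S_i$ is a spanning $2$-matching as a subset of $S$, and it is acyclic because removing a cycle edge destroys the unique cycle of $S$, while both endpoints of $e_i$ had degree two and hence remain covered. Properties $(\ref{lem:movable-edges:vertex})$ and $(\ref{lem:movable-edges:edge})$ are exactly what Lemma~\ref{lem:movable-edges} provides, and $(\ref{lem:4as2m:diff})$ follows from $(\ref{lem:movable-edges:path})$: since $e_i\notin T$ and $e_1\neq e_2$, we have $T_1\cap T_2=T$, $T_1\setminus T_2=\{e_1\}$, $T_2\setminus T_1=\{e_2\}$, and the path $P\in\mathcal{P}(T)=\mathcal{P}(T_1\cap T_2)$ furnished by $(\ref{lem:movable-edges:path})$ satisfies $P\cup\{e_i\}\in\mathcal{P}(T_i)$.

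\textbf{Inductive step ($c\geq 2$).} The procedure takes one edge $e_1$ from Lemma~\ref{lem:movable-edges} and recurses on $(S',T')=(S\setminus\{e_1\},T\cup\{e_1\})$. I would first verify the recursion's precondition: $(S',T')$ is valid by $(\ref{lem:movable-edges:valid})$, and because distinct cycles of a $2$-matching are vertex-disjoint (each cycle vertex already has degree two inside its cycle), the removal of $e_1$ destroys only the cycle containing it, leaving exactly $c-1\geq 1$ cycles in $S'$. The induction hypothesis then yields four path covers $S_1,S_2,T_1,T_2$ satisfying $(\ref{lem:movable-edges:vertex})$ and $(\ref{lem:movable-edges:edge})$ relative to $(S',T')$ together with $(\ref{lem:4as2m:diff})$. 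Since $(\ref{lem:4as2m:diff})$ and the path-cover conclusion mention only the outputs, they carry over verbatim. For $(\ref{lem:movable-edges:vertex})$ and $(\ref{lem:movable-edges:edge})$ I would telescope, composing the induction hypothesis relative to $(S',T')$ with the single-step identities $V_1(S')\cup V_1(T')=V_1(S)\cup V_1(T)$, $V_1(S')\cap V_1(T')=V_1(S)\cap V_1(T)$, $S'\cup T'=S\cup T$, and $S'\cap T'=S\cap T$ from Lemma~\ref{lem:movable-edges} and $(\ref{lem:movable-edges:edge})$, to recover the same identities relative to $(S,T)$.

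Finally, the ``furthermore'' clause is immediate: if $S$ is a $2$-factor then every vertex has two incident edges in $S$, so $V_1(S)=\emptyset$, and $(\ref{lem:movable-edges:vertex})$ collapses to $V_1(S_i)\cup V_1(T_i)=V_1(T)$ and $V_1(S_i)\cap V_1(T_i)=\emptyset$, which is exactly $(\ref{lem:4as2m:vertex})$. I expect essentially no real obstacle here; the only point that demands genuine care is propagating the two preconditions down the recursion — that $(S',T')$ stays valid and that $S'$ still contains a cycle — since everything else is either a direct appeal to Lemma~\ref{lem:movable-edges} or a routine composition of set identities.
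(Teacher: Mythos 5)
Your proposal is correct and takes essentially the same approach as the paper: the paper's entire proof is ``by repeatedly applying Lemma~\ref{lem:movable-edges}'' plus the observation that $V_1(S)=\emptyset$ when $S$ is a $2$-factor, and your induction on the number of cycles in $S$ is exactly a formalization of that repetition, with the same use of $(\ref{lem:movable-edges:valid})$--$(\ref{lem:movable-edges:path})$ and $(\ref{lem:movable-edges:edge})$ at each step. Your version is in fact more careful than the paper's, since you explicitly verify that the recursion's preconditions (validity of $(S\setminus\{e_1\},T\cup\{e_1\})$ and the presence of a remaining cycle) propagate, and that the identities telescope back to the original pair $(S,T)$.
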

	\begin{proof}
		By repeatedly applying Lemma \ref{lem:movable-edges} to $(S,T)$, we can see that four path covers $S_1$, $S_2$, $T_1$, and $T_2$ returned by \textbf{Procedure} \texttt{FourPathCovers} satisfy (\ref{lem:movable-edges:vertex}), (\ref{lem:movable-edges:edge}), and (\ref{lem:4as2m:diff}).
	Furthermore, if $S$ is a $2$-factor of $G$, we have (\ref{lem:4as2m:vertex}), since $V_1(S) = \emptyset$.
	\end{proof}
	Note that $(S,T)$ is a valid and $V_1(S) \cup V_1(T) = V$, if $S$ and $T$ are $2$- and $1$-factor of $G$, respectively.
	\begin{figure}
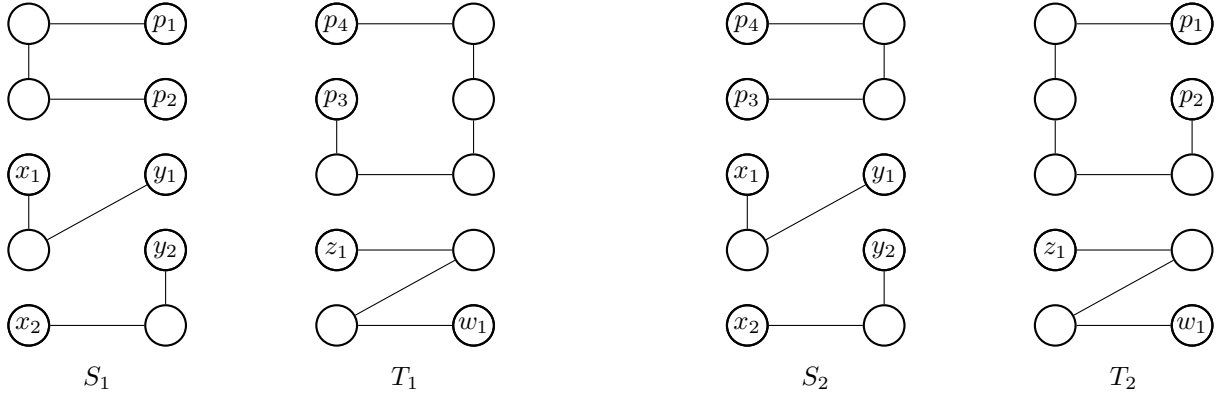

		\centering
		\figexevenst
		\caption{Two pairs $(S_1,T_1)$ and $(S_2,T_2)$ computed by \textbf{Procedure} \texttt{FourPathCovers} for a valid pair $(S,T)$, $e^{(1)}_1 = e_a$, $e^{(2)}_1 = e_b$, $e^{(3)}_1 = e_c$ and $e^{(3)}_2 = e_d$ in Fig. \ref{fig:exeven}, where $e^{(j)}_i$ denotes the edge chosen as $e_i$ in the $j$-th round of the procedure.}\label{fig:exeven-st}
	\end{figure}
	
	Let $S$ and $T$ be 2- and 1- factors of $G$, respectively. Note that our algorithm explain later makes use of minimum weighted 2-factor $S$ and 1-factor $T$ of $(G,\ell)$ which can be computed from $(G, \ell)$ in polynomial time. We assume that $S$ is not a tour of $G$, i.e., $S$ contains at least two cycles, since otherwise, $S$ itself is an optimal tour.
Let $S_1$, $S_2$, $T_1$, and $T_2$ be path covers returned by \textbf{Procedure} \texttt{FourPathCover}$(S,T)$.

	Let us then show how to construct edge sets $A_1$, $A_2$, $B_1$, and $B_2$, such that $S_i \cup A_i$ and $T_i \cup B_i$ (for $i = 1, 2$) are tours and $\ell(A_1)+\ell(A_2)+\ell(B_1)+\ell(B_2) \leq \wor(G,\ell)$, where $\wor(G,\ell)$ denotes the length of a longest tour of $(G,\ell)$. 

	Let $e_1$ = $(p_1, p_2)$ and $e_2=(p_3, p_4)$ be edges in Lemma \ref{lem:4as2m}. 
Since $e_1$ and $e_2$ are chosen from a cycle $C$, we can assume that 
$p_1 \not=p_3,p_4$ and $p_4 \not=p_1,p_2$, where $p_2 = p_3$ might hold.
We note that $\mathcal{P}(S_1) \setminus \mathcal{P}(S_2)$ consists of a $(p_1,p_2)$-path $P_1=C \setminus \{e_1\}$, and $\mathcal{P}(S_2) \setminus \mathcal{P}(S_1)$ consists of a $(p_3,p_4)$-path $P_2=C \setminus \{e_2\}$. 

Let $Q_i \, (i=1, \ldots, k)$ denote vertex-disjoint $(x_i,y_i)$-paths such that $\{Q_1, \dots , Q_k\} = \mathcal{P}(S_1) \cap \mathcal{P}(S_2)$ and $x_1$ and $y_1$ satisfy
\begin{align}
	\ell(p_2, x_1) + \ell(p_3, y_1) \leq \ell(p_2, y_1) + \ell(p_3, x_1). \label{xy-ineq}
\end{align}
\textcolor{black}{Figure \ref{fig:even-s} shows $S_1$ and $S_2$ computed by \textbf{Procedure} \texttt{FourPathCover}$(S,T)$, where two cases $p_2=p_3$ and $p_2\not=p_3$ are separately described.}
\begin{figure}
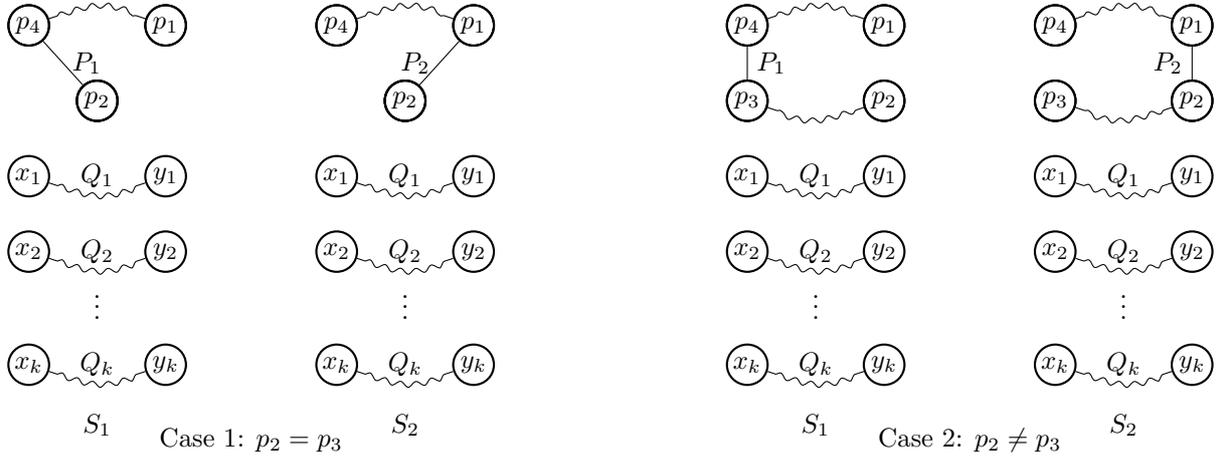

		\centering
		\figevens
		\caption{Two cases $p_2=p_3$ and $p_2\not=p_3$ for path covers $S_1$ and $S_2$ returned by \textbf{Procedure} \texttt{FourPathCovers}$(S,T)$.}\label{fig:even-s}
	\end{figure}
Define $A_1$ and $A_2$ by 
\begin{align}
	\begin{split}
	A_1 &= \{(p_2, x_1)\} \cup \{(y_i, x_{i+1}) \mid i = 1, \dots, k-1\} \cup \{(y_k, p_1)\}\\
	A_2 &= \{(p_3, y_1)\} \cup \{(x_i, y_{i+1}) \mid i = 1, \ldots, k-1\} \cup \{(x_k, p_4)\},
	\end{split}\label{addingA}
\end{align}
	\textcolor{black}{where the illustration can be found in Fig. \ref{fig:even-a}}. Then we have the following lemma.
	\begin{lem}\label{lem:addingA}
		Two sets $A_1$ and $A_2$ defined in $(\ref{addingA})$ satisfy the following three conditions.
		\begin{enumerate}
			\item[$(\mathrm{i})$] $S_i \cup A_i$ is a tour of $G$ for $i=1,2$.
			\item[$(\mathrm{ii})$] $V(A_i)=V_1(S_i)$ for $i=1,2$.
			\item[$(\mathrm{iii})$] $A_1 \cap A_2 = \emptyset$ and $A_1 \cup A_2$ consists of 
			\begin{enumerate}
				\item[$(\mathrm{iii}$-$1)$] a $(p_1, p_4)$-path if $p_2 = p_3$.
				\item[$(\mathrm{iii}$-$2)$] vertex-disjoint $(p_1, p_3)$- and $(p_2, p_4)$-paths if $p_2 \not= p_3$ and $k$ is odd. 
				\item[$(\mathrm{iii}$-$3)$] vertex-disjoint $(p_1, p_2)$- and $(p_3, p_4)$-paths if $p_2 \not= p_3$ and $k$ is even.
			\end{enumerate}
		\end{enumerate}
	\end{lem}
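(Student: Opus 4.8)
The plan is to verify the three conditions by directly tracing the cyclic structure that $A_1$ and $A_2$ impose on the paths of $S_1$ and $S_2$. Recall that $\mathcal{P}(S_1)=\{P_1,Q_1,\dots,Q_k\}$ and $\mathcal{P}(S_2)=\{P_2,Q_1,\dots,Q_k\}$, so that $V_1(S_1)=\{p_1,p_2\}\cup\bigcup_i\{x_i,y_i\}$ and $V_1(S_2)=\{p_3,p_4\}\cup\bigcup_i\{x_i,y_i\}$. Since each $Q_i$ arises by deleting a single edge from a cycle of $S$ of length at least three, every $Q_i$ has at least one edge, so $x_i\neq y_i$; together with vertex-disjointness of the paths this makes $p_1,p_4,x_1,\dots,x_k,y_1,\dots,y_k$ pairwise distinct, with $p_2,p_3$ distinct from all of them and possibly equal to each other. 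This observation (that no $Q_i$ is a single vertex) is what keeps every $x_i,y_i$ of degree exactly two later on, and I would flag it explicitly.

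For (i) and (ii) I would simply follow the edges. Reading $A_1$ together with $S_1$ yields the closed walk $p_1 \rightsquigarrow p_2 \to x_1 \rightsquigarrow y_1 \to x_2 \rightsquigarrow \cdots \to x_k \rightsquigarrow y_k \to p_1$, where $\rightsquigarrow$ denotes traversal of $P_1$ or some $Q_i$; because $S_1$ is spanning and each path and each $A_1$-edge is used exactly once, this is a Hamiltonian cycle, and symmetrically $S_2\cup A_2$ gives the tour $p_3\to y_1\rightsquigarrow x_1\to y_2\rightsquigarrow\cdots\to x_k\to p_4\rightsquigarrow p_3$. Condition (ii) is then immediate, since the endpoints of the edges of $A_1$ are exactly $\{p_1,p_2\}\cup\bigcup_i\{x_i,y_i\}=V_1(S_1)$, and likewise for $A_2$.

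For (iii) the first task is disjointness. Every edge of $A_1$ joins a vertex of $\{p_2\}\cup\{y_i\}$ to one of $\{p_1\}\cup\{x_i\}$, and every edge of $A_2$ joins a vertex of $\{p_3\}\cup\{x_i\}$ to one of $\{p_4\}\cup\{y_i\}$. The edge $(y_k,p_1)$ of $A_1$ cannot lie in $A_2$ since $p_1\notin V(A_2)$, and dually $(x_k,p_4)$ cannot lie in $A_1$; for the remaining (interior) edges, a common edge would have to match an $x$-endpoint with an $x$-endpoint and a $y$-endpoint with a $y$-endpoint, forcing $(y_i,x_{i+1})=(x_j,y_{j+1})$ with $i+1=j$ and $i=j+1$ simultaneously, which is impossible; the cases touching $p_2,p_3$ are killed by $x_i\neq y_i$ and $p_2\neq p_4$, $p_3\neq p_1$. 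Hence $A_1\cap A_2=\emptyset$. It then remains to determine the shape of $A_1\cup A_2$: a degree count shows every $x_i,y_i$ has degree $2$ (one incident edge from each of $A_1,A_2$), while $p_1,p_2,p_3,p_4$ have degree $1$, except that if $p_2=p_3$ this common vertex has degree $2$. Thus $A_1\cup A_2$ is a disjoint union of simple paths and cycles, and comparing its $2k+2$ edges with its vertex count forces, if it is acyclic, exactly two path components when $p_2\neq p_3$ and exactly one when $p_2=p_3$. I would exclude cycles by tracing from any degree-$2$ vertex, alternating $A_1$- and $A_2$-edges: such a trace strictly decreases the index ($x_i\to y_{i-1}\to x_{i-2}\to\cdots$), so it cannot close up and must reach a degree-$1$ vertex, a contradiction. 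Finally I would read off the endpoints by tracing from $p_1$: the index alternation lands on $p_3$ when $k$ is odd and on $p_2$ when $k$ is even, while when $p_2=p_3$ the single path runs from $p_1$ to $p_4$, yielding precisely (iii-1)–(iii-3).

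The parts I expect to be routine are (i), (ii), the disjointness, and the degree/edge count; the genuinely delicate step is the combination of ruling out spurious cycles in $A_1\cup A_2$ and correctly pinning down the two path endpoints through the parity-of-$k$ analysis, which is where the bulk of the careful bookkeeping will be needed.
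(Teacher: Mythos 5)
Your write-up is correct and follows essentially the same route as the paper: the paper's entire proof is the observation that $\mathcal{P}(S_1)=\{Q_1,\dots,Q_k\}\cup\{P_1\}$ and $\mathcal{P}(S_2)=\{Q_1,\dots,Q_k\}\cup\{P_2\}$, after which (i)--(iii) are read off from the definition $(\ref{addingA})$; you simply supply the verification that the paper leaves to the reader, including the useful explicit remark that no $Q_i$ degenerates to a single vertex. One step needs a small repair, though: your cycle-exclusion argument claims that a trace alternating $A_1$- and $A_2$-edges strictly decreases the index, but this fails exactly in the case $p_2=p_3$, where the common vertex has degree two and a hypothetical cycle could pass through it --- there the trace flips from decreasing to increasing, e.g.\ $x_1\to p_2\to y_1\to x_2\to y_3\to\cdots$, so ``strictly decreases'' is not literally true and closure is not yet excluded. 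The conclusion still holds: the increasing continuation after passing through $p_2=p_3$ visits indices of the opposite parity from the decreasing phase (so it can never revisit the starting vertex) and must terminate at $p_1$ or $p_4$, both of degree one, contradicting that the trace stays on a cycle. So the patch is one sentence within your own framework, not a new idea; with it, your proof is a complete and correct expansion of the paper's assertion.
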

	\begin{proof}
		Note that $\mathcal{P}(S_1) = \{Q_1, \dots , Q_k\} \cup \{P_1\}$ and $\mathcal{P}(S_2) = \{Q_1, \dots , Q_k\} \cup \{P_2\}$. Thus it follows from the definition of $A_1$ and $A_2$.
	\end{proof}
	\begin{figure}
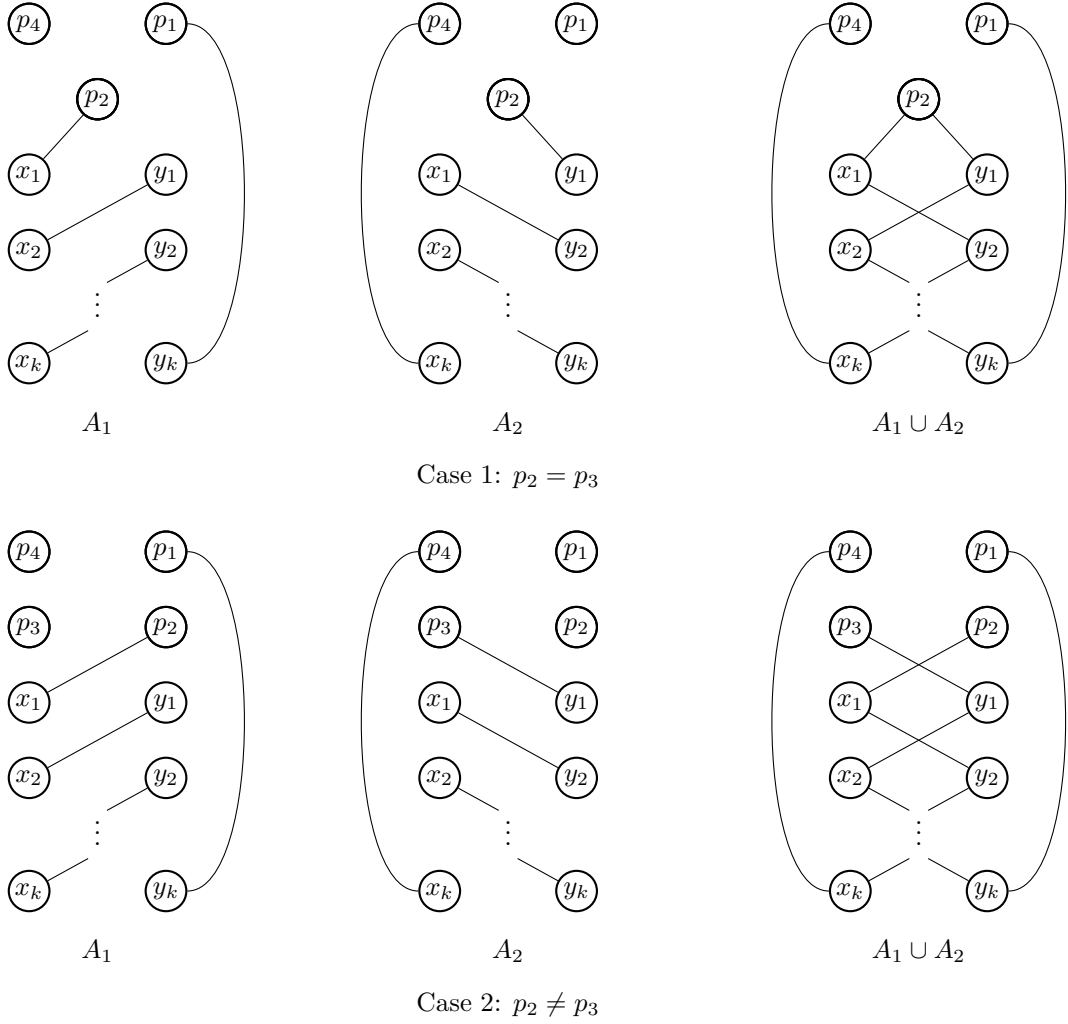

		\centering
		\figevena
		\caption{Two edge sets $A_1$ and $A_2$ for path covers $S_1$ and $S_2$ (as illustrated in Fig. \ref{fig:even-s}).}\label{fig:even-a}
	\end{figure}
	\begin{figure}
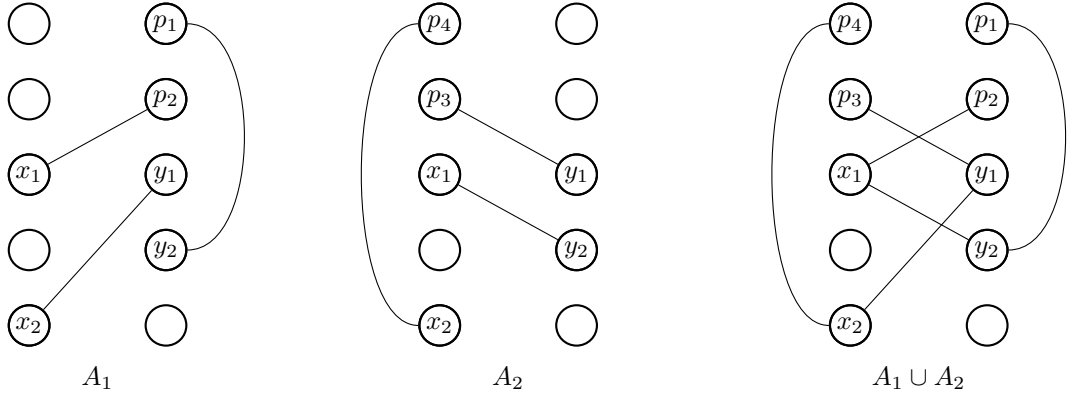

	\centering
		\figexevena
		\caption{Two edge sets $A_1$ and $A_2$ for path covers $S_1$ and $S_2$ in Fig. \ref{fig:exeven-st}.}\label{fig:exeven-a}
	\end{figure}
	\textcolor{black}{Figure \ref{fig:exeven-a} shows two edge sets $A_1$ and $A_2$ for $S_1$ and $S_2$ in Fig. \ref{fig:exeven-st}.}
	
	Let us next construct $B_1$ and $B_2$. Let $O_i \, (i=1,\ldots, d)$ denote\todo[fancyline]{odd に合わせてbe を denote に} vertex-disjoint $(z_i,w_i)$-paths such that $\{O_1, \ldots, O_d \} = \mathcal{P}(T_1) \cap \mathcal{P}(T_2)$. Note that $\mathcal{P}(T_1) \cap \mathcal{P}(T_2)=\emptyset$ (i.e., $d=0$) might hold. We separately consider the following four cases\textcolor{black}{, where the illustration can be found in Fig. \ref{fig:even-t}}. 
	\begin{enumerate}
		\item $p_2=p_3$ and $\mathcal{P}(T_1 \cap T_2)$ contains a $(p_1,p_4)$-path.
		\item $p_2=p_3$ and $\mathcal{P}(T_1 \cap T_2)$ contain no $(p_1,p_4)$-path.
		\item $p_2\not=p_3$ and $\mathcal{P}(T_1 \cap T_2)$ contains $(p_1,p_4)$- and $(p_2,p_3)$-paths.
		\item $p_2\not=p_3$ and $\mathcal{P}(T_1 \cap T_2)$ contains a $(p_2,p_3)$-path and no $(p_1,p_4)$-path.
	\end{enumerate}
	Here we recall that $e_1=(p_1,p_2)$ and $e_2=(p_3,p_4)$ satisfy Lemma 3.\medskip
	\begin{figure}
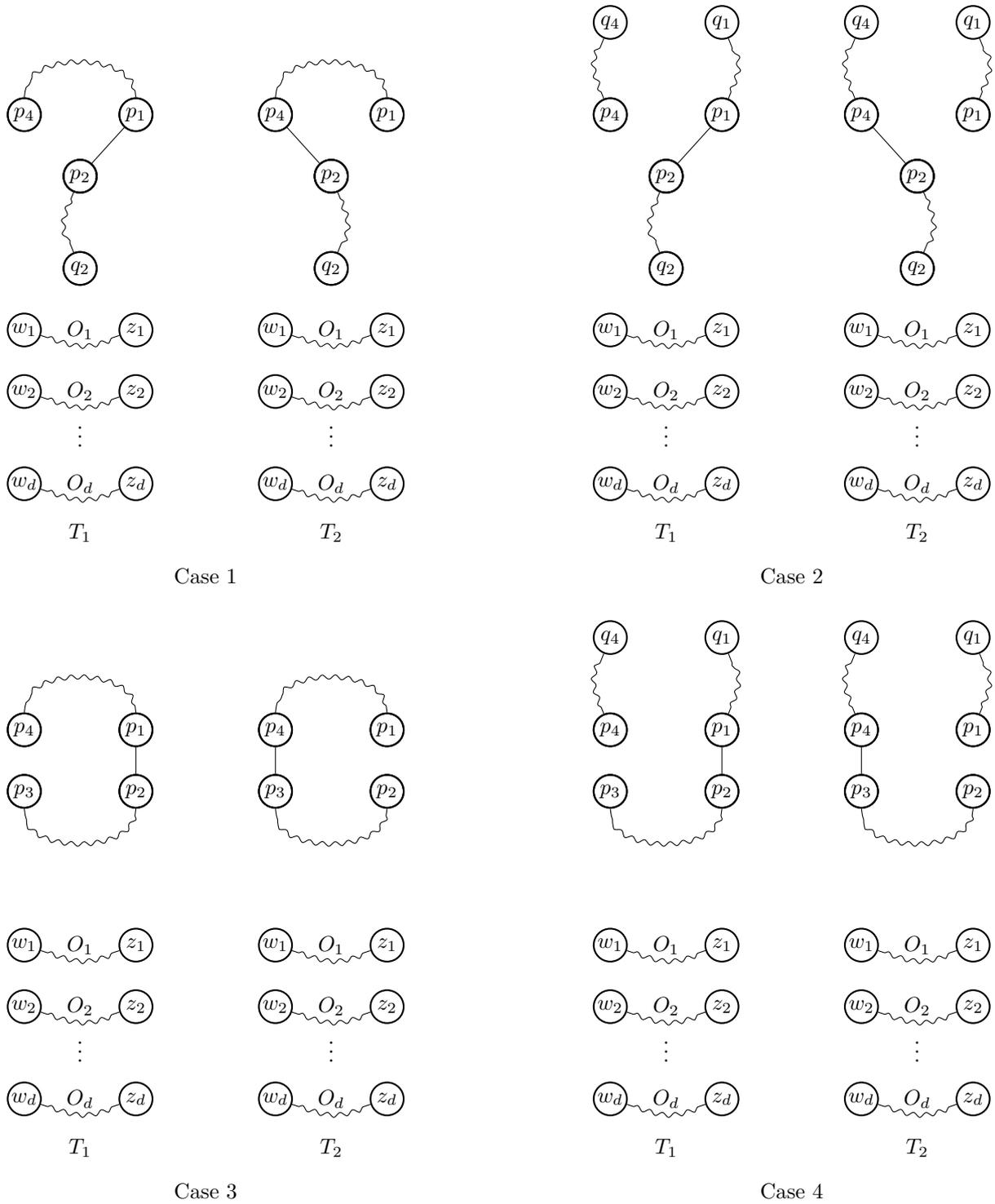

		\centering
		\figevent
		\caption{Four cases for path covers $T_1$ and $T_2$ returned by \textbf{Procedure} \texttt{FourPathCovers}$(S,T)$.}\label{fig:even-t}
	\end{figure}
	
	\textbf{Case 1}: Let $R_1$ denote a $(p_1,p_4)$-path in $\mathcal{P}(T_1 \cap T_2)$, and for some vertex $q_2$, let $R_2$ denote $(p_2, q_2)$-path in $\mathcal{P}(T_1 \cap T_2)$. 
Then\todo[fancyline]{By definition から変更(各Case)}, we have 
	\begin{align*}
		\mathcal{P}(T_1)&= \{O_1, \ldots, O_d\} \cup \{R_1 \cup \{ e_1\} \cup R_2\}\\
		\mathcal{P}(T_2)&= \{O_1, \ldots, O_d\} \cup \{R_1 \cup \{ e_2\} \cup R_2\},
	\end{align*}
	where $R_1\cup \{e_1\} \cup R_2$ and $R_1\cup \{e_2\} \cup R_2$ are $(p_4,q_2)$- and $(p_1,q_2)$-paths, respectively. Define $B_1$ and $B_2$ by 
	\begin{align}
		\begin{split}
		B_1 &= \begin{cases}
			\{(q_2,p_4)\} & \text{if } d=0\\
			\{(q_2,z_1)\} \cup \{(w_i,z_{i+1}) \mid i = 1, \ldots, d-1\} \cup \{(w_d, p_4)\} & \text{if } d \geq 1
				\end{cases}\\
		B_2 &= \begin{cases}
			\{(q_2,p_1)\} & \text{if } d=0\\
			\{(q_2,w_1)\} \cup \{(z_i,w_{i+1}) \mid i = 1, \ldots, d-1\} \cup \{(z_d, p_1)\} & \text{if } d \geq 1,
				\end{cases}
		\end{split}\label{addingB-1}
	\end{align}
	\textcolor{black}{as illustrated in Fig. \ref{fig:even-b-1}}.
	By definition, we have 
	\todo[fancyline]{Hamilton cycle を tourに}
	\begin{align}
		&T_i \cup B_i \text{ is a tour of } G \text{ for } i = 1,2, \label{addingB:Ham}\\
		&B_1 \cap B_2 = \emptyset \text{ and } V(B_i) = V_1(T_i) \text{ for } i = 1,2, \text{and} \label{addingB:vertex} \\
		&B_1 \cup B_2 \text{ is a $(p_1, p_4)$-path.} \label{addingB:path}
	\end{align} \medskip
	\begin{figure}
		\centering
		\figevenb[1]
		\caption{Two edge sets $B_1$ and $B_2$ for Case 1 (as illustrated in Fig. \ref{fig:even-t}).}\label{fig:even-b-1}
	\end{figure}
	
	\textbf{Case 2}: For some vertices $q_1,q_2$ and $q_4$, let $R_1$, $R_2$ and $R_4$ respectively denote $(p_1,q_1)$-, $(p_2, q_2)$-, and $(p_4,q_4)$-paths in $\mathcal{P}(T_1 \cap T_2)$. Then\todo[fancyline]{}, we have 
	\begin{align*}
		\mathcal{P}(T_1)&= \{O_1, \dots, O_d\} \cup \{R_4, R_1\cup \{e_1\} \cup R_2\}\\
		\mathcal{P}(T_2)&= \{O_1, \dots, O_d\} \cup \{R_1, R_4\cup \{e_2\} \cup R_2\},
	\end{align*}
	where $R_1\cup \{e_1\} \cup R_2$ and $R_4\cup \{e_2\} \cup R_2$ are $(q_1,q_2)$- and $(q_4,q_2)$-paths, respectively. Define $B_1$ and $B_2$ by 
	\begin{align}
		\begin{split}
		B_1 &= \begin{cases}
			\{(q_2,q_4),(p_4,q_1)\} & \text{if } d=0\\
			\{(q_2,z_1)\} \cup \{(w_i,z_{i+1}) \mid i = 1, \ldots, d-1\} \cup \{(w_d, q_4), (p_4,q_1)\} & \text{if } d \geq 1
				\end{cases}\\
		B_2 &= \begin{cases}
			\{(q_2,q_1),(p_1,q_4)\} & \text{if } d=0\\
			\{(q_2,w_1)\} \cup \{(z_i,w_{i+1}) \mid i = 1, \ldots, d-1\} \cup \{(z_d, q_1), (p_1,q_4)\} & \text{if } d \geq 1,
				\end{cases}
		\end{split}\label{addingB-2}
	\end{align}
	\textcolor{black}{as illustrated in Fig. \ref{fig:even-b-2}}.
	Similarly to Case 1, we have (\ref{addingB:Ham}), (\ref{addingB:vertex}) and (\ref{addingB:path}).\medskip
	
	\begin{figure}
		\centering
		\figevenb[2]
		\caption{Two edge sets $B_1$ and $B_2$ for Case 2 (as illustrated in Fig. \ref{fig:even-t}).}\label{fig:even-b-2}
	\end{figure}

	\textbf{Case 3}: Let $R_1$ and $R_2$ respectively denote $(p_1,p_4)$- and $(p_2,p_3)$-paths in $\mathcal{P}(T_1 \cap T_2)$. 
Then\todo[fancyline]{}, we have 
	\begin{align*}
		\mathcal{P}(T_1)&= \{O_1, \dots, O_d\} \cup \{R_1\cup \{e_1\} \cup R_2\}\\
		\mathcal{P}(T_2)&= \{O_1, \dots, O_d\} \cup \{R_1\cup \{e_2\} \cup R_2\},
	\end{align*}
	where $R_1\cup \{e_1\} \cup R_2$ and $R_1\cup \{e_2\} \cup R_2$ are
	$(p_3,p_4)$- and $(p_1,p_2)$-paths, respectively. Define $B_1$ and $B_2$ by 
	\begin{align}
		\begin{split}
		B_1 &= \begin{cases}
			\{(p_3,p_4)\} & \text{if } d=0\\
			\{(p_3,z_1)\} \cup \{(w_i,z_{i+1}) \mid i = 1, \ldots, d-1\} \cup \{(w_d, p_4)\} & \text{if } d \geq 1
				\end{cases}\\
		B_2 &= \begin{cases}
			\{(p_2,p_1)\} & \text{if } d=0\\
			\{(p_2,w_1)\} \cup \{(z_i,w_{i+1}) \mid i = 1, \ldots, d-1\} \cup \{(z_d, p_1)\} & \text{if } d \geq 1,
				\end{cases}
		\end{split}\label{addingB-3}
	\end{align}
	\textcolor{black}{as illustrated in Fig. \ref{fig:even-b-3}}.
	Similarly to the previous cases, we have (\ref{addingB:Ham}) and (\ref{addingB:vertex}). 
	Furthermore, $B_1 \cup B_2$ consist of\todo[fancyline]{is を変更} vertex-disjoint $(p_1, p_2)$- and
$(p_3, p_4)$-paths if $d$ is even, and vertex-disjoint $(p_1, p_3)$- and $(p_2, p_4)$-paths if $d$ is odd.\medskip

	\begin{figure}
		\centering
		\figevenb[3]
		\caption{Two edge sets $B_1$ and $B_2$ for Case 3 (as illustrated in Fig. \ref{fig:even-t}).}\label{fig:even-b-3}
	\end{figure}

	\textbf{Case 4}: Let $R_2$ denote $(p_2,p_3)$-path in $\mathcal{P}(T_1 \cap T_2)$, and for some vertices $q_1$ and $q_4$, let $R_1$ and $R_4$ respectively denote $(p_1, q_1)$- and $(p_4,q_4)$-paths in $\mathcal{P}(T_1 \cap T_2)$. 
Then\todo[fancyline]{}, we have 
	\begin{align*}
		\mathcal{P}(T_1)&= \{O_1, \dots, O_d\} \cup \{R_4, R_1\cup \{e_1\} \cup R_2\}\\
		\mathcal{P}(T_2)&= \{O_1, \dots, O_d\} \cup \{R_1, R_4\cup \{e_2\} \cup R_2\},
	\end{align*}
	where $R_1\cup \{e_1\} \cup R_2$ and $R_4\cup \{e_2\} \cup R_2$ are
	$(q_1,p_3)$- and $(q_4,p_2)$-paths, respectively. Define $B_1$ and $B_2$ by 
	\begin{align}
		\begin{split}
		B_1 &= \begin{cases}
			\{(p_3,q_4),(p_4,q_1)\} & \text{if } d=0\\
			\{(p_3,z_1)\} \cup \{(w_i,z_{i+1}) \mid i = 1, \ldots, d-1\} \cup \{(w_d, q_4), (p_4,q_1)\} & \text{if } d \geq 1
				\end{cases}\\
		B_2 &= \begin{cases}
			\{(p_2,q_1),(p_1,q_4)\} & \text{if } d=0\\
			\{(p_2,w_1)\} \cup \{(z_i,w_{i+1}) \mid i = 1, \ldots, d-1\} \cup \{(z_d, q_1), (p_1,q_4)\} & \text{if } d \geq 1,
				\end{cases}
		\end{split}\label{addingB-4}
	\end{align}
	\textcolor{black}{as illustrated in Fig. \ref{fig:even-b-4}}.
	Similarly to the previous cases, we have (\ref{addingB:Ham}) and (\ref{addingB:vertex}). 
	Furthermore, $B_1 \cup B_2$ consist of\todo[fancyline]{} vertex-disjoint $(p_1, p_3)$- and
$(p_2, p_4)$-paths if $d$ is even, and vertex-disjoint $(p_1, p_2)$- and $(p_3, p_4)$-paths if $d$ is odd. 

	\begin{figure}
		\centering
		\figevenb[4]
		\caption{Two edge sets $B_1$ and $B_2$ for Case 4 (as illustrated in Fig. \ref{fig:even-t}).}\label{fig:even-b-4}
	\end{figure}

	In summary, we have the following lemma. 
	\begin{lem}\label{lem:addingB}
		Let $B_1$ and $B_2$ be two edge sets defined as above. Then they satisfy $(\ref{addingB:Ham})$ and $(\ref{addingB:vertex})$, and $B_1 \cup B_2$ consists of $(\mathrm{i})$ a $(p_1,p_4)$-path if $q_2=q_3$, and either $(\mathrm{ii})$ vertex-disjoint $(p_1,p_2)$- and $(p_3,p_4)$-paths or $(\mathrm{iii})$ vertex-disjoint $(p_1,p_3)$- and $(p_2,p_4)$-paths if $q_2\not=q_3$.
	\end{lem}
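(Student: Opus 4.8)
The plan is to recognize that Lemma~\ref{lem:addingB} is exactly the consolidation of the four cases treated immediately above, so the proof reduces to (a) verifying that those four cases are exhaustive and (b) reconciling their individual conclusions with the three alternatives in the statement. For exhaustiveness I would appeal to Lemma~\ref{lem:movable-edges} (via Lemma~\ref{lem:4as2m}): the edges $e_1=(p_1,p_2)$ and $e_2=(p_3,p_4)$ arise from a cycle $C\subseteq S$ together with a common path $P\in\mathcal{P}(T_1\cap T_2)$ for which $P\cup\{e_1\}$ and $P\cup\{e_2\}$ are both paths. Since $e_1$ can only extend $P$ at $p_2$ and $e_2$ only at $p_3$ (and $p_1,p_4\notin V(P)$, else a cycle would form), the path $P$ has $p_2$ and $p_3$ as its two endpoints when $p_2\neq p_3$, and has $p_2=p_3$ as one endpoint with the other endpoint $q_2$ when $p_2=p_3$. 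Splitting further on whether $p_1$ and $p_4$ are the two endpoints of a single path of $\mathcal{P}(T_1\cap T_2)$ then yields precisely Cases 1--4, which are therefore exhaustive.

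Next I would record that in each of the four cases the definitions (\ref{addingB-1})--(\ref{addingB-4}) of $B_1$ and $B_2$ have already been checked to satisfy (\ref{addingB:Ham}) and (\ref{addingB:vertex}); these two properties are stable across the cases and so carry over verbatim to the lemma. It remains only to determine the shape of $B_1\cup B_2$. When $p_2=p_3$ (Cases 1 and 2) the text established that $B_1\cup B_2$ is a single $(p_1,p_4)$-path, which is alternative (i). When $p_2\neq p_3$ (Cases 3 and 4) the conclusion is one of the two pairs of vertex-disjoint paths described in (ii) and (iii), so the disjunction in the statement holds.

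The one genuinely combinatorial step is pinning down the shape of $B_1\cup B_2$ and its dependence on the parity of $d$. For this I would examine the graph on the vertex set $\{p_1,p_2,p_3,p_4\}\cup\{z_i,w_i\}_{i=1}^{d}$ whose edge set is $B_1\cup B_2$: a short degree count shows every $z_i$ and $w_i$ has degree two while each $p_j$ has degree one, so $B_1\cup B_2$ is a disjoint union of paths whose endpoints lie among $p_1,p_2,p_3,p_4$. Tracing the alternation prescribed by, e.g., (\ref{addingB-3}) --- where $B_1$ supplies $(p_3,z_1),(w_i,z_{i+1}),(w_d,p_4)$ and $B_2$ supplies $(p_2,w_1),(z_i,w_{i+1}),(z_d,p_1)$ --- one walk visits $p_3,z_1,w_2,z_3,\dots$ and the other visits $p_2,w_1,z_2,w_3,\dots$, and the endpoint each walk terminates at ($p_1$ or $p_4$) flips according to the parity of $d$. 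This produces $(p_1,p_2)$- and $(p_3,p_4)$-paths for $d$ even and $(p_1,p_3)$- and $(p_2,p_4)$-paths for $d$ odd in Case 3, with the parities exchanged in Case 4; the degenerate value $d=0$ is read off directly from the first lines of (\ref{addingB-3}) and (\ref{addingB-4}). I expect this parity bookkeeping --- keeping the alternating walk straight and noticing that Cases 3 and 4 swap which parity gives which pair --- to be the only place that needs care, the remainder being assembly of facts already in hand.
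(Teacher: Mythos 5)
Your proposal is correct and follows essentially the same route as the paper: Lemma~\ref{lem:addingB} is stated there only as a summary of Cases 1--4, whose per-case verifications of (\ref{addingB:Ham}), (\ref{addingB:vertex}) and of the path shapes (including the parity-of-$d$ alternation that your walk-tracing argument makes explicit) constitute the paper's entire proof. You also correctly read the condition ``$q_2=q_3$'' in the statement as the intended ``$p_2=p_3$'', and your added exhaustiveness argument and degree-count/alternating-walk bookkeeping only flesh out details the paper leaves implicit.
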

	\todo[inline,fancyline, caption={Lemma5}]{修正前:Let $B_1$ and $B_2$ be two edge sets defined as above. Then they satisfy $(\ref{addingB:Ham})$ and $(\ref{addingB:vertex})$, and $B_1 \cup B_2$ consists of either $(\mathrm{i})$ a $(p_1,p_4)$-path, $(\mathrm{ii})$ vertex-disjoint $(p_1,p_2)$- and $(p_3,p_4)$-paths, or $(\mathrm{iii})$ vertex-disjoint $(p_1,p_3)$- and $(p_2,p_4)$-paths.}
	
	\begin{figure}
	\centering
		\figexevenb
		\caption{Two edge sets $B_1$ and $B_2$ for path covers $T_1$ and $T_2$ in Fig. \ref{fig:exeven-st}.}\label{fig:exeven-b}
	\end{figure}
	\textcolor{black}{Figure \ref{fig:exeven-b} shows two edge sets $B_1$ and $B_2$ for path covers $T_1$ and $T_2$ in Fig. \ref{fig:exeven-st}.}
	
	Furthermore, $A_i$ and $B_i$ ($i=1,2$) satisfy the following properties.
	\begin{lem}\label{lem:addingAB}
		Let $A_1$, $A_2$, $B_1$, and $B_2$ be defined as above. Then they are all pairwise disjoint, and $C = A_1 \cup A_2 \cup B_1 \cup B_2$ is a 2-factor, consisting of either one or two cycles. Furthermore, there exists a tour $H$ of $G$ such that $\ell(H) \geq \ell(C)$. 
	\end{lem}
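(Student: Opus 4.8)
The plan is to prove the three assertions in order: pairwise disjointness of $A_1,A_2,B_1,B_2$; that $C$ is a $2$-factor with one or two cycles; and finally the existence of $H$. Throughout I would use the identifications $V(A_i)=V_1(S_i)$ and $V(B_i)=V_1(T_i)$ from Lemma~\ref{lem:addingA}(ii) and Lemma~\ref{lem:addingB}, together with $(\ref{lem:4as2m:vertex})$, which (since $T$ is a $1$-factor, so $V_1(T)=V$) says that $V_1(S_i)$ and $V_1(T_i)$ partition $V$.

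First, disjointness. We already have $A_1\cap A_2=\emptyset$ from Lemma~\ref{lem:addingA}(iii) and $B_1\cap B_2=\emptyset$ from Lemma~\ref{lem:addingB}. For the matched pairs, $V(A_i)=V_1(S_i)$ and $V(B_i)=V_1(T_i)$ are disjoint by $(\ref{lem:4as2m:vertex})$, so $A_i$ and $B_i$ share no vertex and $A_i\cap B_i=\emptyset$. For the cross pairs, $V(A_1)\cap V(B_2)=V_1(S_1)\cap V_1(T_2)=V_1(S_1)\setminus V_1(S_2)$, and since $\mathcal{P}(S_1)$ and $\mathcal{P}(S_2)$ differ only in $P_1$ versus $P_2$, this equals $\{p_1,p_2\}\setminus\{p_3,p_4\}$, i.e.\ $\{p_1,p_2\}$ if $p_2\ne p_3$ and $\{p_1\}$ if $p_2=p_3$. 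Hence the only edge that could lie in $A_1\cap B_2$ is $e_1=(p_1,p_2)$; but $S$ has at least two cycles, so $k\ge1$, whence every edge of $A_1$ is incident to some $x_i$ or $y_i$ and $e_1\notin A_1$. Symmetrically $A_2\cap B_1=\emptyset$ (its only candidate edge is $e_2$, which is not in $A_2$).

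Second, the structure of $C$. Because $S_i\cup A_i$ is a tour, $\deg_{A_i}(v)=1$ if $v\in V_1(S_i)$ and $0$ otherwise, and likewise for $B_i$ and $T_i$; the partition $(\ref{lem:4as2m:vertex})$ then gives $\deg_{A_i}(v)+\deg_{B_i}(v)=1$ for $i=1,2$, so the total degree of $v$ in $A_1\cup A_2\cup B_1\cup B_2$ is exactly $2$ once disjointness is known. Thus $C$ is a $2$-factor. To count its cycles I would classify each $v$ by the type $(\deg_{A_1}(v)+\deg_{A_2}(v),\,\deg_{B_1}(v)+\deg_{B_2}(v))$: vertices of type $(2,0)$ are exactly the shared endpoints $x_i,y_i$, those of type $(0,2)$ are the $z_i,w_i$, and the only junctions, of type $(1,1)$, lie in $\{p_1,p_2,p_3,p_4\}$. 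By Lemma~\ref{lem:addingA}(iii) the two paths of $A_1\cup A_2$ induce a perfect matching of the junctions, and by Lemma~\ref{lem:addingB} so do the paths of $B_1\cup B_2$; both matchings belong to $\{\{p_1p_2,p_3p_4\},\{p_1p_3,p_2p_4\}\}$. The union of two perfect matchings on four vertices is a single $4$-cycle (if they differ) or two $2$-cycles (if they coincide), giving one or two cycles of $C$; and when $p_2=p_3$ both structures collapse to a single $(p_1,p_4)$-path, so $C$ is a single cycle.

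Third, the tour $H$. Since $V(C)=V$, a single-cycle $C$ is itself a Hamiltonian tour and we take $H=C$. The substantive case is two cycles, which forces $p_2\ne p_3$ and forces the two matchings to coincide. Here $x_1,y_1$ exist, and the edges $(p_2,x_1)\in A_1$ and $(p_3,y_1)\in A_2$ lie in \emph{different} cycles of $C$, because $x_1$ lies on the $A$-path through $p_2$, $y_1$ lies on the $A$-path through $p_3$, and $p_2,p_3$ are matched to distinct partners. Deleting these two edges and inserting $(p_2,y_1)$ and $(p_3,x_1)$ splices the two cycles into one Hamiltonian cycle $H$ with $\ell(H)=\ell(C)-\ell(p_2,x_1)-\ell(p_3,y_1)+\ell(p_2,y_1)+\ell(p_3,x_1)\ge\ell(C)$, where the inequality is precisely $(\ref{xy-ineq})$. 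I expect this last step to be the crux: for general (non-metric) lengths a $2$-factor need not be dominated in length by any tour, so the merge cannot invoke the triangle inequality; the entire role of $(\ref{xy-ineq})$ is to guarantee that this one specific reconnection does not shorten $C$, and the verification that the two deleted edges sit in different cycles is what ensures the reconnection yields a single tour rather than two cycles again.
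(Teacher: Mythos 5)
Your proof is correct and follows essentially the same route as the paper's: disjointness plus the structural lemmas give that $C$ is a $2$-factor with at most two cycles, and in the two-cycle case the swap $H=(C\setminus\{(p_2,x_1),(p_3,y_1)\})\cup\{(p_2,y_1),(p_3,x_1)\}$ together with $(\ref{xy-ineq})$ yields the desired tour, exactly as in the paper. The only difference is that you spell out in full the disjointness and cycle-counting details that the paper compresses into ``it is not difficult to see'' and citations of Lemmas \ref{lem:4as2m}, \ref{lem:addingA}, and \ref{lem:addingB}.
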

	\todo[inline,fancyline, caption={Lemma6}]{修正前:Let $A_1$, $A_2$, $B_1$, and $B_2$ be defined as above. Then they are all pairwise disjoint, and $A_1 \cup A_2 \cup B_1 \cup B_2$ is a 2-factor, consisting of either one or two cycles. Furthermore, if it consists of two cycles, then there exists a tour $H$ in $G$ such that $\ell(H) \geq \ell(A_1) + \ell(A_2)+\ell(B_1)+\ell(B_2)$. }
	\begin{proof}
		It is not difficult to see that $A_1$, $A_2$, $B_1$, and $B_2$ are pairwise disjoint. Lemmas \ref{lem:4as2m}, \ref{lem:addingA}, and \ref{lem:addingB} imply that $C=A_1 \cup A_2 \cup B_1 \cup B_2$ is a 2-factor consisting of either one or two cycles. Thus if $C$ is a 2-factor, the latter statement of the lemma holds. Assume that $C$ consists of two cycles. In this case, we can see that two edges $(p_2,x_1)$ and $(p_3,y_1)$ belong to different cycles by (\ref{addingA}). Let $H=(C \setminus \{(p_2,x_1), (p_3,y_1)\}) \cup \{(p_2,y_1), (p_3,x_1)\}$ \textcolor{black}{(see in Fig. \ref{fig:cdh})}. Then $H$ is a tour of $G$. By assumption 
(\ref{xy-ineq}), we have $\ell(H)\geq\ell(C)$, which completes the proof. 
	\end{proof}
	\todo[inline,caption={Lemma6 proof}]{修正前:It is not difficult to see that $A_1$, $A_2$, $B_1$, and $B_2$ are pairwise disjoint. 
Lemmas \ref{lem:4as2m}, \ref{lem:addingA}, and \ref{lem:addingB} imply that $C=A_1 \cup A_2 \cup B_1 \cup B_2$ is a 2-factor consisting of either one or two cycles. Let us then consider the case in which $C$ consists of two cycles. 
In this case, we can see that two edges $(p_2,x_1)$ and $(p_3,y_1)$ belong to different cycles.
Let $H=(C \setminus \{(p_2,x_1), (p_3,y_1)\}) \cup \{(p_2,y_1), (p_3,x_1)\}$. Then $H$ is a cycle (i.e., a tour), and by assumption 
(\ref{xy-ineq}), we have $\ell(H)\geq\ell(C)$, which completes the proof. }

	\begin{figure}
		\centering
		\CDH
		\caption{A tour $H$ in the proof of Lemma \ref{lem:addingAB}, when $C$ consists of two cycles.}\label{fig:cdh}
	\end{figure}

We are now ready to describe our approximation algorithm. 
	\begin{figure}[h!t]
		\begin{algorithm}[H]
			\caption{\texttt{TourEven}}
			\label{alg:TourEven}
			\begin{algorithmic}
			\Require A complete graph $G=(V,E)$ with even $|V|$, and an edge length function $\ell :E \to \mathbb{R}_+$.
			\Ensure A tour $T_{\apx}$ in $G$.\vspace{3pt}
			\State Compute minimum weighted 2-factor $S$ and 1-factor $T$ of $(G,\ell)$.
			\If{$S$ is a tour}
				\State $T_{\apx} := S$.
			\Else
				\State $S_1, T_1, S_2, T_2 := \texttt{FourPathCovers}(S,T)$.\vspace{1pt}
				\State Compute edge sets $A_1$, $A_2$, $B_1$, $B_2$ defined in (\ref{addingA}), (\ref{addingB-1}), (\ref{addingB-2}), (\ref{addingB-3}) and (\ref{addingB-4}).\vspace{1pt}
				\State $\mathcal{T} := \{S_1 \cup A_1, S_2 \cup A_2, T_1 \cup B_1, T_2 \cup B_2 \}$.\vspace{1pt}
				\State $T_{\apx} := \argmin_{T \in \mathcal{T}} \ell(T)$.\vspace{1pt}
			\EndIf
			\State Outputs $T_{\apx}$ and halt.
			\end{algorithmic}
		\end{algorithm}
	\end{figure}
	
	\begin{thm}\label{thm:even-delta}
		For a complete graph $G=(V,E)$ with an even number of vertices and an edge length function $\ell: E \to R_+$, \textbf{Algorithm} \texttt{TourEven} computes a $3/4$-differential approximate tour of $(G,\ell)$ in polynomial time.
	\end{thm}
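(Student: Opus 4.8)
The plan is to translate the differential-ratio requirement into a single length inequality and then prove it by averaging over the four candidate tours. Rearranging the definition $\rho_A(I)=\frac{\wor(I)-\apx_A(I)}{\wor(I)-\opt(I)}$, the claim $\rho_A(I)\ge 3/4$ is equivalent to
\[
\apx_A(G,\ell)\ \le\ \tfrac34\,\opt(G,\ell)+\tfrac14\,\wor(G,\ell),
\]
so I would aim to establish exactly this bound on $\ell(T_{\apx})$.

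First I would dispose of the trivial branch: if $S$ is a tour, then since $S$ is a minimum weight $2$-factor and every tour is a $2$-factor, $S$ is an optimal tour, whence $\apx_A=\opt$ and $\rho_A=1$. In the main branch $S$ has at least two cycles, and by Lemma~\ref{lem:4as2m} together with Lemmas~\ref{lem:addingA} and~\ref{lem:addingB} all four candidates $S_1\cup A_1,\ S_2\cup A_2,\ T_1\cup B_1,\ T_2\cup B_2$ are genuine tours, so $\apx_A(G,\ell)$ is the length of the shortest among them. The crucial bookkeeping identity is $(\ref{lem:movable-edges:edge})$: from $S_i\cup T_i=S\cup T$ and $S_i\cap T_i=S\cap T$ I get $\ell(S_i)+\ell(T_i)=\ell(S)+\ell(T)$ for $i=1,2$. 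Summing the four tour lengths and using that $A_1,A_2,B_1,B_2$ are pairwise disjoint (Lemma~\ref{lem:addingAB}), I would write
\[
\ell(S_1\cup A_1)+\ell(S_2\cup A_2)+\ell(T_1\cup B_1)+\ell(T_2\cup B_2)=2\bigl(\ell(S)+\ell(T)\bigr)+\ell(C),
\]
where $C=A_1\cup A_2\cup B_1\cup B_2$.

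Next I would feed in the two structural bounds. By Lemma~\ref{lem:addingAB}, $C$ is a $2$-factor dominated by some tour $H$, so $\ell(C)\le\ell(H)\le\wor(G,\ell)$. Since $S$ is a minimum weight $2$-factor and every tour is a $2$-factor, $\ell(S)\le\opt(G,\ell)$; and because $|V|$ is even, an optimal tour splits into two perfect matchings of total weight $\opt(G,\ell)$, so the minimum weight $1$-factor $T$ satisfies $\ell(T)\le\tfrac12\opt(G,\ell)$. Hence $2(\ell(S)+\ell(T))\le 3\,\opt(G,\ell)$, and as the minimum of four numbers is at most their average,
\[
4\,\apx_A(G,\ell)\ \le\ 2\bigl(\ell(S)+\ell(T)\bigr)+\ell(C)\ \le\ 3\,\opt(G,\ell)+\wor(G,\ell),
\]
which is the desired inequality.

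For the running time I would note that each step is polynomial: minimum weight $1$- and $2$-factors are computed by standard matching algorithms, \texttt{FourPathCovers} terminates after removing at most one edge per cycle of $S$, and $A_i,B_i$ are given by the explicit formulas $(\ref{addingA})$ and $(\ref{addingB-1})$--$(\ref{addingB-4})$. I expect the theorem itself to be the easy part — a clean averaging argument over the four tours — with the genuine difficulty already absorbed into the cited lemmas; in particular the hard point is securing $\ell(C)\le\wor(G,\ell)$ when $C$ consists of two cycles, which in Lemma~\ref{lem:addingAB} rests on the crossing exchange producing $H$ and on the ordering assumption $(\ref{xy-ineq})$.
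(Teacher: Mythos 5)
Your proposal is correct and follows essentially the same route as the paper: the same averaging over the four tours, the same identity $\ell(S_1\cup A_1)+\ell(S_2\cup A_2)+\ell(T_1\cup B_1)+\ell(T_2\cup B_2)=2(\ell(S)+\ell(T))+\ell(C)$ via $(\ref{lem:movable-edges:edge})$ and the disjointness in Lemma~\ref{lem:addingAB}, and the same three bounds $\ell(S)\le\opt(G,\ell)$, $2\ell(T)\le\opt(G,\ell)$, $\ell(C)\le\wor(G,\ell)$. Your additions (making explicit the equivalence between $\rho_A\ge 3/4$ and the length inequality, and justifying $2\ell(T)\le\opt$ by splitting an optimal even tour into two perfect matchings) are details the paper leaves implicit, and they are correct.
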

	\todo[inline, caption={Theorem7}]{修正前:Let $G = (V, E)$ be a complete graph with an even number of vertices, and $\ell: E \to R_+$ be an edge length function. Then \textbf{Algorithm} \texttt{TourEven} computes a $3/4$-differential approximate tour for TSP in polynomial time.}
	\begin{proof}
		We show that \textbf{Algorithm} \texttt{TourEven} outputs a 3/4-differential approximate tour $T_{\apx}$ in polynomial time. 
		If a minimum weighted 2-factor $S$ of $(G, \ell)$ computed in the algorithm is a tour, then clearly $T_{\apx}=S$ is an optimal tour.
On the other hand, if $S$ is not a tour, then we have 
		\begin{align*}
		4\ell(T_{\apx}) &\leq \ell(S_1 \cup A_1) + \ell(S_2 \cup A_2) + \ell(T_1 \cup B_1) + \ell(T_2 \cup B_2)\\
		&= 2(\ell(S)+\ell(T)) + \ell(A_1 \cup A_2 \cup B_1 \cup B_2)\\
		&\leq 3\opt(G, \ell) + \wor(G, \ell),
		\end{align*}
		where the first equality follows from Lemmas \ref{lem:addingA}, \ref{lem:addingB}, and \ref{lem:addingAB}, and the last inequality follows from Lemma \ref{lem:addingAB}, and $\ell(S) \leq \opt(G,\ell)$, and $2\ell(T) \leq \opt(G,\ell)$.
		Thus $T_{\apx}$ is a 3/4-differential approximate tour.
		Note that minimum weighted $1$- and $2$- factors can be computed in polynomial time, and 
$A_i$ and $B_i$ ($i=1,2$) can be computed in polynomial time.
Thus \textbf{Algorithm} \texttt{TourEven} is polynomial, which completes the proof.
	\end{proof}
	
Before concluding the section, let us remark that $3/4$-differential approximability is known for graph with an even number of vertices \cite{escoffier2008better}. Different from the algorithm in \cite{escoffier2008better}, ours is constructed in a uniform framework, which can further be extended to the odd case.

\section{Approximation for odd instances}
	In this section, we construct an approximation algorithm for TSP with an odd number of vertices. 
	Our algorithm is much more involved than the even case. It first guesses a path $P$ with three edges\todo[fancyline]{of length 3} in an optimal tour, constructs eight path covers based on $P$, and extend each path cover to a tour in such a way that at least one of the eight tours guarantees 3/4-differential approximation ratio. 

More precisely, for each path $P$ with three edges\todo[fancyline]{of length 3}, say, $P=\{(v_1,v_2),(v_2,v_3),(v_3,v_4)\}$ with all $v_i$'s distinct, let $S$ be a minimum weighted $2$-factor among those containing $P$, let $T$ be a minimum weighted path cover among those satisfying $(v_1, v_2), (v_2, v_3) \in T$ and $V_1(T)=V \setminus \{v_2\}$, and let $T^\prime$ be a minimum weighted path cover among those satisfying $(v_2, v_3), (v_3, v_4) \in T^\prime$ and $V_1(T^\prime)=V \setminus \{v_3\}$. Assume that $S$ is not a tour, i.e., it contains at least two cycles, since otherwise, is optimal, and hence ensures $3/4$-differential approximability if some optimal tour contains $P$. We note that $(S,T)$ and $(S,T^\prime)$ are both valid pairs of spanning 2-matchings. We apply \textbf{Procedure} \texttt{FourPathCovers} to them, but not arbitrarily. Let us specify two cycles $C^*$ and $C^{**}$ in $S$ such that $P \subseteq C^*$ and $P \cap C^{**} = \emptyset$. We define two vertices $v_0$ and $v_5$ in $V(C^*)$ such that $v_0\not=v_2$, $v_5\not=v_3$, and $(v_0, v_1) ,(v_4, v_5) \in C^*$. By definition $v_0=v_4$ and $v_5=v_1$ hold if $|C^*|=4$\todo[fancyline]{$C^*$ has length 4}. Furthermore, we define two edges $f$ and $f^\prime$ in $C^{**}$ that satisfy the properties in the next lemma.

\begin{lem}
	\label{lem:first-edge}
	Let $C^{**}$, $T$ and $T^\prime$ be defined as above. 
Then there exist two edges $f \in C^{**}\setminus T$ and $f^\prime \in C^{**}\setminus T^\prime$ such that
\begin{enumerate}
	\item[$(\mathrm{i})$] they have a common endpoint $q$, and
	\item[$(\mathrm{ii})$] $T \cup \{f\}$ and $T^\prime \cup \{f^\prime\}$ are path covers.
\end{enumerate}
\end{lem}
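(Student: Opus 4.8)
The plan is to dispatch condition (ii) almost for free and to concentrate all the work on (i). First I would record the structure of $T$ and $T'$. Since $(v_1,v_2),(v_2,v_3)\in T$ and $V_1(T)=V\setminus\{v_2\}$, the cover $T$ consists of the single path $v_1v_2v_3$ together with a perfect matching on $V\setminus\{v_1,v_2,v_3\}$; symmetrically $T'$ is the path $v_2v_3v_4$ plus a perfect matching on $V\setminus\{v_2,v_3,v_4\}$. Because $C^{**}$ is a cycle of the $2$-factor $S$ disjoint from $C^*\supseteq\{v_0,\ldots,v_5\}$, every vertex of $C^{**}$ lies in $V_1(T)\cap V_1(T')$, so it has exactly one incident edge in $T$ and exactly one in $T'$.

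This immediately yields (ii) for \emph{any} admissible choice. If $g$ is an edge of $C^{**}$ with $g\notin T$, then both endpoints of $g$ have $T$-degree $1$ and sit on distinct single-edge components of $T$ (they would share a component only if $g$ were itself that matching edge, i.e.\ $g\in T$). Hence $T\cup\{g\}$ has maximum degree $2$ and is acyclic, so it is again a path cover, and the same argument applies to $T'$. Consequently I only need to produce \emph{adjacent} cycle edges: a vertex $q\in V(C^{**})$ with its two incident $C^{**}$-edges $f,f'$ satisfying $f\notin T$ and $f'\notin T'$.

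For (i) I would analyse vertex by vertex when this fails. At $q$ the two incident $C^{**}$-edges cannot both lie in $T$, nor both in $T'$; a short case check then shows that one can always select distinct $f\notin T$ and $f'\notin T'$ \emph{unless} one of the two edges lies in $T\cap T'$. Thus the construction fails at every vertex of $C^{**}$ only if $(T\cap T')\cap C^{**}$ is a perfect matching of $C^{**}$; being contained in the matching $T\cap C^{**}$, this forces $C^{**}$ to be even and forces $T$ and $T'$ to induce on $C^{**}$ one and the same perfect matching $N$, with complementary matching $\bar N:=C^{**}\setminus N$.

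The crux, and the step I expect to be the main obstacle, is to exclude this degenerate configuration using minimality. Here I would argue by exchange: replacing $N$ by $\bar N$ inside $T$ yields a path cover with the same constraints (the edges $(v_1,v_2),(v_2,v_3)$ and the degree sequence $V_1(T)=V\setminus\{v_2\}$ are untouched, and $\bar N$ is again a perfect matching of $V(C^{**})$), so minimality of $T$ gives $\ell(N)\le\ell(\bar N)$, and symmetrically from $T'$. Since $C^{**}=N\cup\bar N$ is a cycle of the minimum $2$-factor $S$ containing $P$, I would try to combine these inequalities with the minimality of $S$ to force $\ell(N)=\ell(\bar N)$ and then exploit the resulting tie: the complementary cover is then also minimum, so one may take $T$ (or $T'$) to induce $\bar N$ instead of $N$, making $T$ and $T'$ induce different matchings on $C^{**}$ and restoring a usable vertex $q$. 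Making this tie-breaking (equivalently, the exchange contradiction with $S$) airtight is the delicate point; everything else reduces to the bookkeeping of the previous two paragraphs.
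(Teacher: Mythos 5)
Your structural bookkeeping is correct ($T$ is the path $v_1v_2v_3$ plus a perfect matching on the remaining vertices, every vertex of $C^{**}$ lies in $V_1(T)\cap V_1(T^\prime)$, and condition (ii) holds for \emph{any} edge of $C^{**}$ avoiding the relevant cover), but the proof breaks on a misreading of the statement: the lemma does not require $f\neq f^\prime$. The paper explicitly notes immediately after the lemma that ``$f$ and $f^\prime$ might be identical,'' and its own proof takes $f=f^\prime$ whenever $C^{**}\setminus(T\cup T^\prime)\neq\emptyset$. Once $f=f^\prime$ is allowed, your ``degenerate'' configuration --- $T$ and $T^\prime$ inducing the same perfect matching $N$ on $C^{**}$ --- is the \emph{easy} case: any edge of $\bar N=C^{**}\setminus N$ lies in $C^{**}\setminus(T\cup T^\prime)$, serves as $f=f^\prime$, shares (both of) its endpoints with itself, and satisfies (ii) by your own argument. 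Conversely, the case your case-check handles cleanly (the two induced matchings differ, in particular when $C^{**}\subseteq T\cup T^\prime$ and the matchings are complementary) is the only place where distinct $f\neq f^\prime$ are available, and there any vertex $q$ works; that is exactly the paper's second case.

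The gap is not merely cosmetic, because the exclusion you propose for the degenerate case cannot be made airtight by any argument: the configuration genuinely occurs (take all edge lengths equal; then any admissible $T$, $T^\prime$ are minimum, including a pair inducing the same matching $N$ on $C^{**}$), so no contradiction with minimality exists. Concretely, (a) minimality of $T$ and of $T^\prime$ both give the \emph{same} inequality $\ell(N)\le\ell(\bar N)$, not opposing ones, so no tie is forced; (b) minimality of $S$ is vacuous here, since $N\cup\bar N=C^{**}\subseteq S$ regardless of how the cycle is split, so exchanging $N$ and $\bar N$ does not produce a competing $2$-factor; and (c) re-choosing $T$ to induce $\bar N$ is outside the lemma's hypotheses, which fix $T$ and $T^\prime$ as the covers already computed by the algorithm (and it is unnecessary). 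The repair is simply to drop the distinctness requirement and split as the paper does: if $C^{**}\setminus(T\cup T^\prime)\neq\emptyset$, take $f=f^\prime$ there; otherwise $C^{**}$ is an even cycle covered by the two complementary matchings $C^{**}\cap T$ and $C^{**}\cap T^\prime$, and at any vertex $q$ the two incident cycle edges give $f\in C^{**}\setminus T$ and $f^\prime\in C^{**}\setminus T^\prime$.
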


\begin{proof}
	If $C^{**} \setminus (T \cup T^\prime) \not= \emptyset$, then arbitrarily take an edge $f=f^\prime$ in $C^{**} \setminus (T \cup T^\prime)$. It is not difficult to see that (i) and (ii) in the lemma are satisfied. On the other hand, if $C^{**} \setminus (T \cup T^\prime) = \emptyset$. Then $C^{**}$ is even and it is covered with two matchings $C^{**} \cap T$ and $C^{**} \cap T^\prime$. This again implies the existence of two edges. 
\end{proof}
	\begin{figure}
		\centering
		\figexodd
		\caption{A 2-factor $S$ and two path covers $T$ and $T^\prime$ defined before Lemma \ref{lem:first-edge}, and an example of $f$ and $f^\prime$, which contain $q$.}\label{fig:exodd}
	\end{figure}

We note that $f$ and $f^\prime$ in Lemma \ref{lem:first-edge} might be identical, and (ii) in Lemma \ref{lem:first-edge} implies that two pairs $(S \setminus \{f\}, T \cup \{f\})$ and $(S \setminus \{f^\prime\}, T^\prime \cup \{f^\prime\})$ are valid. \textcolor{black}{Figure \ref{fig:exodd} shows an example of $S$, $T$, $T^\prime$, $f$ and $f^\prime$.}

Our algorithm uses \textbf{Procedure} \texttt{FourPathCovers} for $(S,T)$ defined as above 
in such a way that edge $e_1=f$ is chosen in the first round and two edges $e_1=(v_3,v_4)$ and $e_2=(v_0,v_1)$ are chosen in the last round. Similarly, our algorithm uses \textbf{Procedure} \texttt{FourPathCovers} for $(S,T^\prime)$ defined as above in such a way that edge $e_1=f^\prime$ is chosen in the first round and two edges $e_1=(v_1,v_2)$ and $e_2=(v_4,v_5)$ are chosen in the last round. Let $S_1$, $T_1$, $S_2$, and $T_2$ be four path covers obtained by \textbf{Procedure} \texttt{FourPathCover}$(S, T)$, and let $S^\prime_1$, $T^\prime_1$, $S^\prime_2$, and $T^\prime_2$ be four path covers returned by \textbf{Procedure} \texttt{FourPathCover}$(S, T^\prime)$.

\begin{lem}\label{lem:4as2m-odd}
	Let $S$, $T$, $S_i$, and $T_i\,(i=1,2)$ be defined as above. Then $S_1$, $S_2$, $T_1$, and $T_2$ are path covers such that 
	\begin{align}
		&S_i \cup T_i = S \cup T \text{ and } S_i \cap T_i = S \cap T \text{ for } i=1,2, \label{lem:4as2m-odd:edge}\\
		&V_1(S_i) \text{ and } V_1(T_i) \text{ is a partition of } V \setminus \{v_2\} \text{ for } i=1,2, \label{lem:4as2m-odd:vertex}\\
		&T_1 \setminus T_2 = \{(v_3,v_4)\}, T_2 \setminus T_1 = \{(v_0,v_1)\}, \text{and } \{(v_1,v_2), (v_2,v_3)\} \in \mathcal{P}(T_1 \cap T_2), \text{and}\label{lem:4as2-odd:path}\\
		&q \in V_1(S_1) \cap V_1(S_2),\label{lem:4as2m-odd:commonendpoint}
	\end{align}
	where $v_i \in V(C^*)\, (i=0,\ldots,4)$ are defined as above and $q$ is a common endpoint of $f$ and $f^\prime$ in \mbox{Lemma \ref{lem:first-edge}}.
\end{lem}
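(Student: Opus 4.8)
The plan is to derive the four claims by applying Lemma~\ref{lem:4as2m} to the valid pair $(S,T)$ and then reading off the extra structure from the two \emph{controlled} edge choices made by the procedure. Since $(S,T)$ is a valid pair of spanning $2$-matchings (as already noted) and $S$ is a $2$-factor with at least two cycles, Lemma~\ref{lem:4as2m} immediately returns path covers $S_1,S_2,T_1,T_2$ satisfying $(\ref{lem:movable-edges:edge})$ and $(\ref{lem:4as2m:vertex})$. Equation $(\ref{lem:movable-edges:edge})$ is exactly $(\ref{lem:4as2m-odd:edge})$. For $(\ref{lem:4as2m-odd:vertex})$, I would combine $(\ref{lem:4as2m:vertex})$ with the defining property $V_1(T)=V\setminus\{v_2\}$, which turns the partition of $V_1(T)$ into a partition of $V\setminus\{v_2\}$.

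Next I would verify that the two prescribed choices are legitimate runs of \textbf{Procedure} \texttt{FourPathCovers}. For the first round, Lemma~\ref{lem:first-edge}(ii) guarantees that $T\cup\{f\}$ is a path cover, so $(S\setminus\{f\},T\cup\{f\})$ is valid and removing $f$ from $C^{**}$ is an admissible choice of $e_1$; since $f$ has both endpoints in $V(C^{**})$ and $v_2\notin V(C^{**})$, the vertex bookkeeping of $(\ref{lem:movable-edges:vertex})$ is preserved. For the last round, I would observe that throughout the procedure the only vertices whose $T$-degree changes are endpoints of removed edges, and every edge removed before the last round lies in a cycle of $S$ other than $C^*$. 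As the cycles of the $2$-factor $S$ are vertex-disjoint and $v_1,v_3\in V(C^*)$, neither $v_1$ nor $v_3$ is ever touched, so the length-two path $v_1v_2v_3\subseteq T$ (which exists because $(v_1,v_2),(v_2,v_3)\in T$ while $V_1(T)=V\setminus\{v_2\}$) remains a maximal path of the current $T$ once $C^*$ becomes the last remaining cycle.

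At that point $C^*$ contains this path with both endpoints $v_1,v_3$ in $V(C^*)$, placing us in the second case of Lemma~\ref{lem:movable-edges}; the two edges it produces are precisely $(v_0,v_1)$ and $(v_3,v_4)$, matching the prescribed last-round choice. Hence $T_1=(T_1\cap T_2)\cup\{(v_3,v_4)\}$ and $T_2=(T_1\cap T_2)\cup\{(v_0,v_1)\}$ with $T_1\cap T_2$ equal to the current $T$, giving $T_1\setminus T_2=\{(v_3,v_4)\}$, $T_2\setminus T_1=\{(v_0,v_1)\}$, and $\{(v_1,v_2),(v_2,v_3)\}\in\mathcal{P}(T_1\cap T_2)$, which establishes $(\ref{lem:4as2-odd:path})$. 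Finally, for $(\ref{lem:4as2m-odd:commonendpoint})$ I would note that $q$ has degree $2$ in $S$ (two edges of $C^{**}$), loses exactly the edge $f$ in the first round, and is untouched thereafter because $C^{**}$ is no longer a cycle; thus $q$ has exactly one incident edge in each of $S_1$ and $S_2$, i.e.\ $q\in V_1(S_1)\cap V_1(S_2)$.

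The main obstacle is the last-round analysis: one must argue that the prescribed pair $(v_0,v_1),(v_3,v_4)$ is genuinely the pair delivered by Lemma~\ref{lem:movable-edges}, which rests on showing that the sub-path $v_1v_2v_3$ is never extended during the earlier rounds. This is where the vertex-disjointness of the cycles of $S$, together with $P\subseteq C^*$ and $P\cap C^{**}=\emptyset$, does the essential work; the remaining identities are routine bookkeeping inherited from Lemma~\ref{lem:4as2m}.
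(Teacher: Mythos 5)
Your proof is correct and takes essentially the same route as the paper: the paper's own (much terser) proof likewise notes $V_1(S)=\emptyset$ and $V_1(T)=V\setminus\{v_2\}$ and then simply cites Lemma~\ref{lem:4as2m} together with the prescribed first-round choice $f$ and last-round choices $(v_3,v_4)$ and $(v_0,v_1)$. The details you supply --- legitimacy of the prescribed runs, survival of the maximal path $v_1v_2v_3$ until the last round via vertex-disjointness of the cycles of $S$, and the degree count at $q$ --- are exactly the bookkeeping the paper leaves implicit.
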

\begin{proof}
	By definition, we have $V_1(S) = \emptyset$ and $V_1(T) = V \setminus \{v_2\}$. Moreover, since an edge $f$ in Lemma \ref{lem:first-edge} is chosen in the first round of \textbf{Procedure} \texttt{FourPathCovers}$(S,T)$, and $(v_3,v_4)$ and $(v_0,v_1)$ are chosen in the last round of \textbf{Procedure} \texttt{FourPathCovers}$(S,T)$, Lemma \ref{lem:4as2m} implies the statement of lemma.
\end{proof}
	\textcolor{black}{Figure \ref{fig:exodd-st} shows $(S_1,T_1)$ and $(S_2, T_2)$ computed by \textbf{Procedure} \texttt{FourPathCovers} for $(S,T)$ in Fig. \ref{fig:exodd}.}
	\begin{figure}
		\centering
		\figexoddst
		\caption{Two pairs $(S_1,T_1)$ and $(S_2,T_2)$ computed by \textbf{Procedure} \texttt{FourPathCovers} for $(S,T)$, $e^{(1)}_1 = f$, $e^{(2)}_1 = e$, $e^{(3)}_1 = (v_3,v_4)$ and $e^{(3)}_2 = (v_0,v_1)$ in Fig. \ref{fig:exodd}, where $e^{(j)}_i$ denotes the edge chosen as $e_i$ in the $j$-th round of the procedure.}\label{fig:exodd-st}
	\end{figure}

Similarly, we have the following lemma.
\begin{lem}\label{lem:4as2m-odd'}
	Let $S$, $T^\prime$, $S^\prime_i$, and $T^\prime_i\,(i=1,2)$ be defined as above. Then $S^\prime_1$, $S^\prime_2$, $T^\prime_1$, and $T^\prime_2$ are path covers such that 
	\begin{align}
		&S^\prime_i \cup T^\prime_i = S \cup T^\prime \text{ and } S^\prime_i \cap T^\prime_i = S \cap T^\prime \text{ for } i=1,2, \label{lem:4as2m-odd':edge}\\
		&V_1(S^\prime_i) \text{ and } V_1(T^\prime_i) \text{ is a partition of } V \setminus \{v_3\} \text{ for } i=1,2, \label{lem:4as2m-odd':vertex}\\
		&T^\prime_1 \setminus T^\prime_2 = \{(v_1,v_2)\}, T^\prime_2 \setminus T^\prime_1 = \{(v_4,v_5)\}, \text{and } \{(v_2,v_3), (v_3,v_4)\} \in \mathcal{P}(T^\prime_1 \cap T^\prime_2), \text{and}\label{lem:4as2m-odd':path}\\
		&q \in V_1(S^\prime_1) \cap V_1(S^\prime_2),\label{lem:4as2m-odd':commonendpoint}
	\end{align}
	where $v_i \in V(C^*)\, (i=1,\ldots,5)$ are defined as above and $q$ is a common endpoint of $f$ and $f^\prime$ in \mbox{Lemma \ref{lem:first-edge}}.
\end{lem}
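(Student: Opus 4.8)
The plan is to mirror the proof of Lemma~\ref{lem:4as2m-odd}, interchanging the roles of the two ends of the distinguished cycle $C^*$: here \textbf{Procedure}~\texttt{FourPathCovers} is run on $(S,T^\prime)$ instead of $(S,T)$, with $f^\prime$ deleted first and $(v_1,v_2),(v_4,v_5)$ deleted last. Accordingly I would reduce the whole statement to a single application of the general Lemma~\ref{lem:4as2m}, after recording the relevant features of the construction.

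First I would note the two facts fixed by the definitions: since $S$ is a $2$-factor, $V_1(S)=\emptyset$; and by the choice of $T^\prime$ we have $(v_2,v_3),(v_3,v_4)\in T^\prime$ and $V_1(T^\prime)=V\setminus\{v_3\}$. By Lemma~\ref{lem:first-edge}(ii) the set $T^\prime\cup\{f^\prime\}$ is a path cover, so $f^\prime\in C^{**}\setminus T^\prime$ is an admissible edge for the first round of the procedure and $(S,T^\prime)$ is a valid input. The procedure is then run so that $f^\prime$ is the edge removed in the first round and $(v_1,v_2),(v_4,v_5)$ are the two edges removed in the final round.

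Next I would invoke Lemma~\ref{lem:4as2m}. Its conclusion~(\ref{lem:movable-edges:edge}) yields~(\ref{lem:4as2m-odd':edge}); its conclusion~(\ref{lem:4as2m:vertex}), together with $V_1(T^\prime)=V\setminus\{v_3\}$, yields that $V_1(S^\prime_i)$ and $V_1(T^\prime_i)$ partition $V\setminus\{v_3\}$, i.e.~(\ref{lem:4as2m-odd':vertex}); and its conclusion~(\ref{lem:4as2m:diff}), specialized to the last-round edges $e_1=(v_1,v_2)$ and $e_2=(v_4,v_5)$, gives $T^\prime_1\setminus T^\prime_2=\{(v_1,v_2)\}$ and $T^\prime_2\setminus T^\prime_1=\{(v_4,v_5)\}$. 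The one point needing care is the identification of the common path $P$ in~(\ref{lem:4as2m:diff}). Because $V_1(T^\prime)=V\setminus\{v_3\}$, the vertex $v_3$ is the only one incident to two edges of $T^\prime$, namely $(v_2,v_3)$ and $(v_3,v_4)$, so the path through $v_3$ consists exactly of these two edges, with endpoints $v_2,v_4$. No intermediate round extends this path (the edges added then come from cycles disjoint from $C^*$), and the two final-round edges are added to $T^\prime_1$ and $T^\prime_2$ separately rather than to their intersection; hence $P=\{(v_2,v_3),(v_3,v_4)\}$ lies in $\mathcal{P}(T^\prime_1\cap T^\prime_2)$, while $v_2,v_4$ are precisely the vertices to which $(v_1,v_2)$ and $(v_4,v_5)$ attach. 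This is the path promised by Lemma~\ref{lem:movable-edges}, giving~(\ref{lem:4as2m-odd':path}).

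Finally, for~(\ref{lem:4as2m-odd':commonendpoint}) I would argue directly from the first-round deletion: $q$ is an endpoint of $f^\prime$, and $f^\prime$ lies on the cycle $C^{**}$, which is vertex-disjoint from $C^*$ and from the remaining cycles of $S$. Deleting $f^\prime$ therefore lowers $|\delta_S(q)|$ from $2$ to $1$, and no later round (acting either on the other cycles or, in the last round, on $C^*$) removes an edge incident to $q$; thus $q\in V_1(S^\prime_1)\cap V_1(S^\prime_2)$. I expect no substantive obstacle here, since the assertion is the mirror image of Lemma~\ref{lem:4as2m-odd}; the only genuine checks are the identification of $P$ above and the bookkeeping that $q$ is left untouched after the first round.
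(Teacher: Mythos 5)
Your proposal is correct and takes essentially the same route as the paper: the paper proves the unprimed analogue (Lemma~\ref{lem:4as2m-odd}) by noting $V_1(S)=\emptyset$ and $V_1(T)=V\setminus\{v_2\}$ and invoking Lemma~\ref{lem:4as2m} with $f$ chosen in the first round and $(v_3,v_4),(v_0,v_1)$ in the last, and it then states Lemma~\ref{lem:4as2m-odd'} as the symmetric counterpart (with $f^\prime$ first and $(v_1,v_2),(v_4,v_5)$ last) without a separate proof. Your extra bookkeeping --- identifying $P=\{(v_2,v_3),(v_3,v_4)\}$ as the path through the unique degree-two vertex $v_3$ of $T^\prime$, and tracking that no round after the first touches an edge incident to $q$ --- simply makes explicit the details the paper leaves implicit.
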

	\textcolor{black}{Figure \ref{fig:exodd-st'} shows $(S^\prime_1,T^\prime_1)$ and $(S^\prime_2, T^\prime_2)$ computed by \textbf{Procedure} \texttt{FourPathCovers} for $(S,T^\prime)$ in Fig. \ref{fig:exodd}.}
\begin{figure}
		\centering
		\figexoddstp
		\caption{Two pairs $(S^\prime_1,T^\prime_1)$ and $(S^\prime_2,T^\prime_2)$ computed by \textbf{Procedure} \texttt{FourPathCovers} for $(S,T^\prime)$, $e^{(1)}_1 = f^\prime$, $e^{(2)}_1 = e^\prime$, $e^{(3)}_1 = (v_1,v_2)$ and $e^{(3)}_2 = (v_4,v_5)$ in Fig. \ref{fig:exodd}, where $e^{(j)}_i$ denotes the edge chosen as $e_i$ in the $j$-th round of the procedure.}\label{fig:exodd-st'}
	\end{figure}

	Let us then show how to construct edge sets $A^{(\prime)}_i$ and $B^{(\prime)}_i$ (for $i=1,2$), such that $S^{(\prime)}_i \cup A^{(\prime)}_i$ and $T^{(\prime)}_i \cup B^{(\prime)}_i$ (for $i = 1,2$) are tours and \[\ell(A_1)+\ell(A_2)+\ell(B_1)+\ell(B_2)+\ell(A^\prime_1)+\ell(A^\prime_2)+\ell(B^\prime_1)+\ell(B^\prime_2) \leq 2\wor(G,\ell) - 2\ell(v_2,v_3),\]where $\wor(G,\ell)$ denotes the length of a longest tour of $(G,\ell)$. 
\begin{figure}
		\centering
		\figodds
		\caption{Two cases $|C^*| > 4$ and $|C^*| = 4$ for path covers $S_1$ and $S_2$ returned by \textbf{Procedure} \texttt{FourPathCovers}$(S,T)$.}\label{fig:odd-s}
	\end{figure}

Let us first show how to construct $A_1$ and $A_2$. By definition, $\mathcal{P}(S_1) \setminus \mathcal{P}(S_2)$ consists of a $(v_4,v_3)$-path $P_1 = C^* \setminus \{(v_4,v_3)\}$, and $\mathcal{P}(S_2) \setminus \mathcal{P}(S_1)$ consists of a $(v_1,v_0)$-path $P_2 = C^* \setminus \{(v_1,v_0)\}$. Let $Q_i \, (i=1,\ldots,k)$ denote $(x_i,y_i)$-paths such that $\{Q_1, \ldots, Q_k\} = \mathcal{P}(S_1) \cap \mathcal{P}(S_2)$, where $x_1=q$ in \mbox{Lemma \ref{lem:4as2m-odd}}. \textcolor{black}{Figure \ref{fig:odd-s} shows $S_1$ and $S_2$ computed by \textbf{Procedure} \texttt{FourPathCovers}$(S,T)$, where two cases $|C^*| > 4$ and $|C^*| = 4$ are separately described.} Define $A_1$ and $A_2$ by 
	\begin{align}
		\begin{split}
		A_1 &= \{(v_3, x_1)\} \cup \{(y_i, x_{i+1}) \mid i = 1, \dots, k-1\} \cup \{(y_k, v_4)\}\\
		A_2 &= \{(v_1, y_1)\} \cup \{(x_i, y_{i+1}) \mid i = 1, \ldots, k-1\} \cup \{(x_k, v_0)\},
		\end{split}\label{addingA-odd}
	\end{align}
	\textcolor{black}{as illustrated in Fig. \ref{fig:odd-a}}. Then we have the following lemma.
	\begin{figure}
		\centering
		\figodda
		\caption{Two edge sets $A_1$ and $A_2$ for path covers $S_1$ and $S_2$ (as illustrated in Fig. \ref{fig:odd-s}).}\label{fig:odd-a}
	\end{figure}
	
	\begin{lem}\label{lem:addingA-odd}
		Two sets $A_1$ and $A_2$ defined in $(\ref{addingA-odd})$ satisfy the following three conditions.
		\begin{enumerate}
			\item[$(\mathrm{i})$] $S_i \cup A_i$ is a tour of $G$ for $i=1,2$.
			\item[$(\mathrm{ii})$] $V(A_i)=V_1(S_i)$ for $i=1,2$.
			\item[$(\mathrm{iii})$] $A_1 \cap A_2 = \emptyset$ and $A_1 \cup A_2$ consists of 
			\begin{enumerate}
				\item[$(\mathrm{iii}$-$1)$] a $(v_1, v_3)$-path if $|C^*|=4$.
				\item[$(\mathrm{iii}$-$2)$] vertex-disjoint $(v_0, v_3)$- and $(v_1, v_4)$-paths if $|C^*|>4$ and $k$ is odd.
				\item[$(\mathrm{iii}$-$3)$] vertex-disjoint $(v_0, v_1)$- and $(v_3, v_4)$-paths if $|C^*|>4$ and $k$ is even.
			\end{enumerate}
		\end{enumerate}
	\end{lem}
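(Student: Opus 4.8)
The plan is to mirror the proof of Lemma~\ref{lem:addingA} from the even case, since the definition of $A_1$ and $A_2$ in $(\ref{addingA-odd})$ has exactly the same shape once we recall that $\mathcal{P}(S_1) = \{Q_1, \ldots, Q_k\} \cup \{P_1\}$ and $\mathcal{P}(S_2) = \{Q_1, \ldots, Q_k\} \cup \{P_2\}$, where $P_1$ is the $(v_4, v_3)$-path $C^* \setminus \{(v_3,v_4)\}$ and $P_2$ is the $(v_1, v_0)$-path $C^* \setminus \{(v_0,v_1)\}$. The only real difference from the even case is that here the degeneracy occurs at the outer endpoints: $v_0 = v_4$ precisely when $|C^*| = 4$, rather than at the inner endpoints.

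For $(\mathrm{i})$ I would read the cycle off directly. The edges of $A_1$ splice $P_1$ and the paths $Q_1, \ldots, Q_k$ end to end, producing the Hamiltonian cycle that traverses $P_1$ from $v_4$ to $v_3$, jumps by $(v_3, x_1)$ to $Q_1$, runs $Q_1$ from $x_1$ to $y_1$, jumps by $(y_1, x_2)$ to $Q_2$, and so on through $Q_k$, finally closing up via $(y_k, v_4)$. As $S_1$ is spanning, this is a tour of $G$, and $S_2 \cup A_2$ is handled symmetrically, threading $P_2$ and the $Q_i$ (now traversed from $y_i$ to $x_i$) through the edges of $A_2$. For $(\mathrm{ii})$ it suffices to note that the endpoints met by $A_1$ are exactly $\{v_3, v_4\} \cup \bigcup_{i} \{x_i, y_i\}$, which is the set $V_1(S_1)$ of path endpoints of $S_1$, and likewise $V(A_2) = \{v_0, v_1\} \cup \bigcup_{i} \{x_i, y_i\} = V_1(S_2)$.

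The substance is in $(\mathrm{iii})$. Disjointness $A_1 \cap A_2 = \emptyset$ holds because $x_1, y_1, \ldots, x_k, y_k$ are distinct (being endpoints of vertex-disjoint paths) while $A_1$ only ever joins some $y_i$ to $x_{i+1}$ and $A_2$ only ever joins some $x_i$ to $y_{i+1}$, so no two edges can coincide. To describe $A_1 \cup A_2$, I would count degrees: each $x_i$ and each $y_i$ receives exactly one edge from $A_1$ and one from $A_2$, hence has degree $2$, whereas each of $v_0, v_1, v_3, v_4$ receives a single incident edge. Therefore $A_1 \cup A_2$ is a disjoint union of paths with endpoints among $v_0, v_1, v_3, v_4$, and I recover the pairing by tracing the alternating $A_1$/$A_2$ trails out of $v_3$ and out of $v_1$.

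The one delicate point is the parity bookkeeping in this trace. From $v_3$ the trail visits $v_3, x_1, y_2, x_3, y_4, \ldots$ (odd-indexed $x$'s, even-indexed $y$'s) and therefore terminates at $v_4$ via $(y_k, v_4)$ when $k$ is even and at $v_0$ via $(x_k, v_0)$ when $k$ is odd; symmetrically the trail from $v_1$ visits $v_1, y_1, x_2, y_3, \ldots$ and terminates at $v_0$ when $k$ is even and at $v_4$ when $k$ is odd. If $|C^*| > 4$ then $v_0, v_1, v_3, v_4$ are distinct, yielding the two vertex-disjoint paths of $(\mathrm{iii}$-$3)$ and $(\mathrm{iii}$-$2)$ according as $k$ is even or odd. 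If $|C^*| = 4$ then $v_0 = v_4$, so that vertex has degree $2$ and the two trails fuse at it into a single $(v_1, v_3)$-path, giving $(\mathrm{iii}$-$1)$ irrespective of the parity of $k$. Finally, the boundary case $k = 0$, where $A_1 = \{(v_3, v_4)\}$ and $A_2 = \{(v_1, v_0)\}$, satisfies all three conclusions and is checked directly.
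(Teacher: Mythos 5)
Your proof is correct and takes essentially the same route as the paper's: the paper's own proof consists only of noting that $\mathcal{P}(S_1)=\{Q_1,\dots,Q_k\}\cup\{P_1\}$ and $\mathcal{P}(S_2)=\{Q_1,\dots,Q_k\}\cup\{P_2\}$ and asserting that the claims then follow from the definition $(\ref{addingA-odd})$. Your trail-tracing and parity bookkeeping (including the fusing of the two trails at $v_0=v_4$ when $|C^*|=4$) simply makes explicit the verification the paper leaves to the reader, and it is accurate.
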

	\begin{proof}
		Note that $\mathcal{P}(S_1) = \{Q_1, \dots , Q_k\} \cup \{P_1\}$ and $\mathcal{P}(S_2) = \{Q_1, \dots , Q_k\} \cup \{P_2\}$. Thus it follows from the definition of $A_1$ and $A_2$.
	\end{proof}
	\begin{figure}
	\centering
		\figexodda
		\caption{Two edge sets $A_1$ and $A_2$ for path covers $S_1$ and $S_2$ in Fig. \ref{fig:exodd-st}.}\label{fig:exodd-a}
	\end{figure}
	
Similarly, let us define $A^\prime_1$ and $A^\prime_2$. Recall that $\mathcal{P}(S^\prime_1) \setminus \mathcal{P}(S^\prime_2)$ consists of a $(v_1,v_2)$-path $P^\prime_1 = C^* \setminus \{(v_1,v_2)\}$, and $\mathcal{P}(S^\prime_2) \setminus \mathcal{P}(S^\prime_1)$ consists of a $(v_4,v_5)$-path $P^\prime_2 = C^* \setminus \{(v_4,v_5)\}$. Let $Q^\prime_i \, (i=1,\ldots,k)$ denote $(x^\prime_i,y^\prime_i)$-paths such that $\{Q^\prime_1, \ldots, Q^\prime_k\} = \mathcal{P}(S^\prime_1) \cap \mathcal{P}(S^\prime_2)$, where $x^\prime_1=q$ in \mbox{Lemma \ref{lem:4as2m-odd}}. Define $A^\prime_1$ and $A^\prime_2$ by 
	\begin{align}
		\begin{split}
		A^\prime_1 &= \{(v_2, x^\prime_1)\} \cup \{(y^\prime_i, x^\prime_{i+1}) \mid i = 1, \dots, k-1\} \cup \{(y^\prime_k, v_1)\}\\
		A^\prime_2 &= \{(v_4, y^\prime_1)\} \cup \{(x^\prime_i, y^\prime_{i+1}) \mid i = 1, \ldots, k-1\} \cup \{(x^\prime_k, v_5)\}.
		\end{split}\label{addingA-odd'}
	\end{align}
	Then we have the following lemma.
	\begin{lem}\label{lem:addingA-odd'}
		Two sets $A^\prime_1$ and $A^\prime_2$ defined in $(\ref{addingA-odd'})$ satisfy the following three conditions.
		\begin{enumerate}
			\item[$(\mathrm{i})$] $S^\prime_i \cup A^\prime_i$ is a tour of $G$ for $i=1,2$.
			\item[$(\mathrm{ii})$] $V(A^\prime_i)=V_1(S^\prime_i)$ for $i=1,2$.
			\item[$(\mathrm{iii})$] $A^\prime_1 \cap A^\prime_2 = \emptyset$ and $A^\prime_1 \cup A^\prime_2$ consists of 
			\begin{enumerate}
				\item[$(\mathrm{iii}$-$1)$] a $(v_2, v_4)$-path if $|C^*|=4$.
				\item[$(\mathrm{iii}$-$2)$] vertex-disjoint $(v_1, v_4)$- and $(v_2, v_5)$-paths if $|C^*|>4$ and $k$ is odd.
				\item[$(\mathrm{iii}$-$3)$] vertex-disjoint $(v_1, v_2)$- and $(v_4, v_5)$-paths if $|C^*|>4$ and $k$ is even.
			\end{enumerate}
		\end{enumerate}
	\end{lem}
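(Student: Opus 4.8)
The plan is to mirror the proof of Lemma~\ref{lem:addingA-odd}, since $A^\prime_1$ and $A^\prime_2$ are assembled from $S^\prime_1$ and $S^\prime_2$ by exactly the same pattern of linking edges as $A_1,A_2$ are from $S_1,S_2$; only the four boundary vertices change, from $v_0,v_1,v_3,v_4$ to $v_1,v_2,v_4,v_5$. First I would record the decomposition $\mathcal{P}(S^\prime_1)=\{Q^\prime_1,\dots,Q^\prime_k\}\cup\{P^\prime_1\}$ and $\mathcal{P}(S^\prime_2)=\{Q^\prime_1,\dots,Q^\prime_k\}\cup\{P^\prime_2\}$, which is immediate from the definitions of $P^\prime_1,P^\prime_2$ and the $Q^\prime_i$ together with Lemma~\ref{lem:4as2m-odd'}.

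For (i) and (ii) I would trace the cycle produced by $A^\prime_1$ explicitly: starting from $v_1$, follow $P^\prime_1$ to $v_2$, cross $(v_2,x^\prime_1)$, traverse $Q^\prime_1$ from $x^\prime_1$ to $y^\prime_1$, cross $(y^\prime_1,x^\prime_2)$, and so on, finally closing up through $(y^\prime_k,v_1)$. This visits every vertex of $S^\prime_1$ exactly once, so $S^\prime_1\cup A^\prime_1$ is a tour; the same argument with $P^\prime_2$ and the chain $(v_4,y^\prime_1),(x^\prime_i,y^\prime_{i+1}),(x^\prime_k,v_5)$ handles $S^\prime_2\cup A^\prime_2$. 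Condition (ii) then falls out, because the vertices touched by $A^\prime_1$ are exactly $\{v_1,v_2\}\cup\bigcup_i\{x^\prime_i,y^\prime_i\}=V_1(S^\prime_1)$, and likewise $V(A^\prime_2)=\{v_4,v_5\}\cup\bigcup_i\{x^\prime_i,y^\prime_i\}=V_1(S^\prime_2)$.

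For (iii) the disjointness $A^\prime_1\cap A^\prime_2=\emptyset$ is immediate from the distinct incidence patterns of the two edge sets. To determine the structure of $A^\prime_1\cup A^\prime_2$ I would compute vertex degrees: every $x^\prime_i,y^\prime_i$ has degree $2$ while $v_1,v_2,v_4,v_5$ have degree $1$, so the set is a disjoint union of paths whose endpoints lie in $\{v_1,v_2,v_4,v_5\}$. Tracing from $v_2$ one follows the alternating chain $v_2,x^\prime_1,y^\prime_2,x^\prime_3,\dots$, with the index rising by one at each step, which terminates at $v_5$ when $k$ is odd and at $v_1$ when $k$ is even; the complementary path runs $v_4,y^\prime_1,x^\prime_2,\dots$ and terminates at $v_1$ or $v_5$ accordingly. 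This yields the two advertised paths in cases (iii-2) and (iii-3).

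The one genuinely different point, and the step I would treat most carefully, is case (iii-1): when $|C^*|=4$ the identification $v_5=v_1$ holds, so the two paths of the previous paragraph share the single vertex $v_1=v_5$ as a common endpoint. Since the remaining nominal endpoints $v_2,v_4$ and all interior vertices stay distinct, concatenating at $v_1=v_5$ produces a single $(v_2,v_4)$-path rather than a cycle, giving (iii-1). I would verify explicitly that this identification creates no shorter cycle, which is the only place where the degenerate geometry of a length-$4$ cycle $C^*$ enters the argument.
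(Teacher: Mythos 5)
Your proposal is correct and takes essentially the same approach as the paper: the paper's own proof consists only of recording the decomposition $\mathcal{P}(S^\prime_1)=\{Q^\prime_1,\dots,Q^\prime_k\}\cup\{P^\prime_1\}$ and $\mathcal{P}(S^\prime_2)=\{Q^\prime_1,\dots,Q^\prime_k\}\cup\{P^\prime_2\}$ and asserting that the lemma then follows from the definition $(\ref{addingA-odd'})$. Your write-up simply makes explicit the details the paper leaves implicit --- the tour traces for (i)--(ii), the degree count and alternating-chain argument for (iii), and the identification $v_5=v_1$ that merges the two paths into a single $(v_2,v_4)$-path when $|C^*|=4$ --- all of which check out.
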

	\begin{proof}
		Note that $\mathcal{P}(S^\prime_1) = \{Q^\prime_1, \dots , Q^\prime_k\} \cup \{P^\prime_1\}$ and $\mathcal{P}(S^\prime_2) = \{Q^\prime_1, \dots , Q^\prime_k\} \cup \{P_2^\prime\}$. Thus it follows from the definition of $A^\prime_1$ and $A^\prime_2$.
	\end{proof}
	\textcolor{black}{Figures \ref{fig:exodd-a} and \ref{fig:exodd-a'} show an example of edge sets $A_1$, $A_2$, $A^\prime_1$, and $A^\prime_2$ for path covers $S_1$, $S_2$, $S^\prime_1$, and $S^\prime_2$ in Figs. \ref{fig:exodd-st} and \ref{fig:exodd-st'}}
	\begin{figure}
	\centering
		\figexoddap
		\caption{Two edge sets $A^\prime_1$ and $A^\prime_2$ for path covers $S^\prime_1$ and $S^\prime_2$ in Fig. \ref{fig:exodd-st'}.}\label{fig:exodd-a'}
	\end{figure}
	\begin{figure}
		\centering
		\figoddt
		\caption{Three cases for path covers $T_1$ and $T_2$ returned by \textbf{Procedure} \texttt{FourPathCovers}$(S,T)$.}\label{fig:odd-t}
	\end{figure}
	Let us next construct $B_1$, $B_2$, $B^\prime_1$, and $B^\prime_2$. Let $O_i \, (i=1,\ldots,d)$ denote vertex-disjoint\todo[fancyline]{even に合わせてvertex-disjointを追加} $(z_i,w_i)$-paths such that $\{O_1, \ldots, O_d\} = \mathcal{P}(T_1) \cap \mathcal{P}(T_2)$, where $z_1$ and $w_1$ satisfy 
	\begin{align}
	\ell(v_1, z_1) + \ell(v_3, w_1) \leq \ell(v_1, w_1) + \ell(v_3, z_1). \label{zw-ineq}
	\end{align}
	We remark that $d \geq 1$ (i.e., $\mathcal{P}(T_1) \cap \mathcal{P}(T_2) \not = \emptyset$) holds if $n \geq 16$. To see this, we have $|\mathcal{P}(T_1)| = \lfloor n/2 \rfloor - (k+1)$, where $k+1$ is equal to the number of cycles in $S$. Since each cycle in $S$ has size at least $3$, $k+1 \geq \lfloor n/3 \rfloor$ holds, which implies that $|\mathcal{P}(T_1)| \geq 3$ if $n \geq 16$. Since $|\mathcal{P}(T_1) \setminus \mathcal{P}(T_2)| \leq 2$, we have $d=|\mathcal{P}(T_1) \cap \mathcal{P}(T_2)| \geq 1$ if $n\geq16$. 
	In the subsequent discussion, we assume that $n \geq 16$, and construct $B_1$ and $B_2$ by considering the following three cases \textcolor{black}{(see in Fig. \ref{fig:odd-t})}.
	\begin{enumerate}
		\item $|C^*|>4$ and $\mathcal{P}(T_1 \cap T_2)$ contains a $(v_0,v_4)$-path.
		\item $|C^*|>4$ and $\mathcal{P}(T_1 \cap T_2)$ contains no $(v_0,v_4)$-path.
		\item $|C^*|=4$.
	\end{enumerate}
	
	\textbf{Case 1}: Let $R_0$ denote a $(v_0,v_4)$-path in $\mathcal{P}(T_1 \cap T_2)$, and let $R_1 = \{(v_1,v_2),(v_2,v_3)\}$. By definition $R_1$ is a $(v_1, v_3)$-path in $\mathcal{P}(T_1 \cap T_2)$. Then we note that 
	\begin{align*}
		\mathcal{P}(T_1)&= \{O_1, \dots, O_d\} \cup \{R_0\cup \{(v_3,v_4)\} \cup R_1\}\\
		\mathcal{P}(T_2)&= \{O_1, \dots, O_d\} \cup \{R_0\cup \{(v_0,v_1)\} \cup R_1\},
	\end{align*}
	where $R_0\cup \{(v_3,v_4)\} \cup R_1$ and $R_0\cup \{(v_0,v_1)\} \cup R_1$ are $(v_0,v_1)$- and $(v_3,v_4)$-paths, respectively. Define $B_1$ and $B_2$ by 
	\begin{align}
		\begin{split}
		B_1 &= \{(v_1,z_1)\} \cup \{(w_i,z_{i+1}) \mid i = 1, \ldots, d-1\} \cup \{(w_d, v_0)\}\\
		B_2 &= \{(v_3,w_1)\} \cup \{(z_i,w_{i+1}) \mid i = 1, \ldots, d-1\} \cup \{(z_d, v_4)\},
		\end{split}\label{addingB-odd-1}
	\end{align}
	\textcolor{black}{as illustrated in Fig. \ref{fig:odd-b-1}}.
	Then we have
	\begin{align}
		&T_i \cup B_i \text{ is a tour of } G \text{ for } i = 1,2, \label{addingB-odd:Ham}\\
		&B_1 \cap B_2 = \emptyset \text{ and } V(B_i) = V_1(T_i) \text{ for } i = 1,2, \text{and} \label{addingB-odd:vertex} \\
		\begin{split}
		&B_1 \cup B_2 \text{ consist of vertex-disjoint } (v_0,v_1) \text{- and } (v_3,v_4)\text{-paths if } d \text{ is even,} \\ &\quad\text{and vertex-disjoint } (v_0,v_3) \text{- and } (v_1,v_4)\text{-paths if } d \text{ is odd.}
		\end{split}\label{addingB-odd:path}
	\end{align}\medskip
	\begin{figure}
		\centering
		\figoddb{3}{1}
		\caption{Two edge sets $B_1$ and $B_2$ for Case 1 (as illustrated in Fig. \ref{fig:odd-t}).}\label{fig:odd-b-1}
	\end{figure}
	
	\textbf{Case 2}: Let $R_1 = \{(v_1,v_2), (v_2,v_3)\}$ (i.e., let $R_1$ be a $(v_1,v_3)$-path in $\mathcal{P}(T_1 \cap T_2)$). Let $R_0$ and $R_4$ respectively denote $(v_0, r_0)$- and $(v_4,r_4)$-paths in $\mathcal{P}(T_1 \cap T_2)$. Then, we have 
	\begin{align*}
		\mathcal{P}(T_1)&= \{O_1, \dots, O_d\} \cup \{R_0, R_1\cup \{(v_3,v_4)\} \cup R_4\}\\
		\mathcal{P}(T_2)&= \{O_1, \dots, O_d\} \cup \{R_4, R_0\cup \{(v_0,v_1)\} \cup R_1\},
	\end{align*}
	where $R_1\cup \{(v_3,v_4)\} \cup R_4$ and $R_0\cup \{(v_0,v_1)\} \cup R_1$ are
	$(v_1,r_4)$- and $(v_3,r_0)$-paths, respectively. Define $B_1$ and $B_2$ by 
	\begin{align}
		\begin{split}
		B_1 &= \{(v_1,z_1)\} \cup \{(w_i,z_{i+1}) \mid i = 1, \ldots, d-1\} \cup \{(w_d, r_0), (v_0,r_4)\} \\
		B_2 &= \{(v_3,w_1)\} \cup \{(z_i,w_{i+1}) \mid i = 1, \ldots, d-1\} \cup \{(z_d, r_4), (v_4,r_0)\},
		\end{split}\label{addingB-odd-2}
	\end{align}
	\textcolor{black}{as illustrated in Fig. \ref{fig:odd-b-2}}.
	Similarly to Case 1, we have (\ref{addingB-odd:Ham}) and (\ref{addingB-odd:vertex}). 
	Furthermore, $B_1 \cup B_2$ consists of $(v_0, v_3)$- and $(v_1, v_4)$-paths if $d$ is even, and vertex-disjoint $(v_0, v_1)$- and $(v_3, v_4)$-paths if $d$ is odd. \medskip
	\begin{figure}
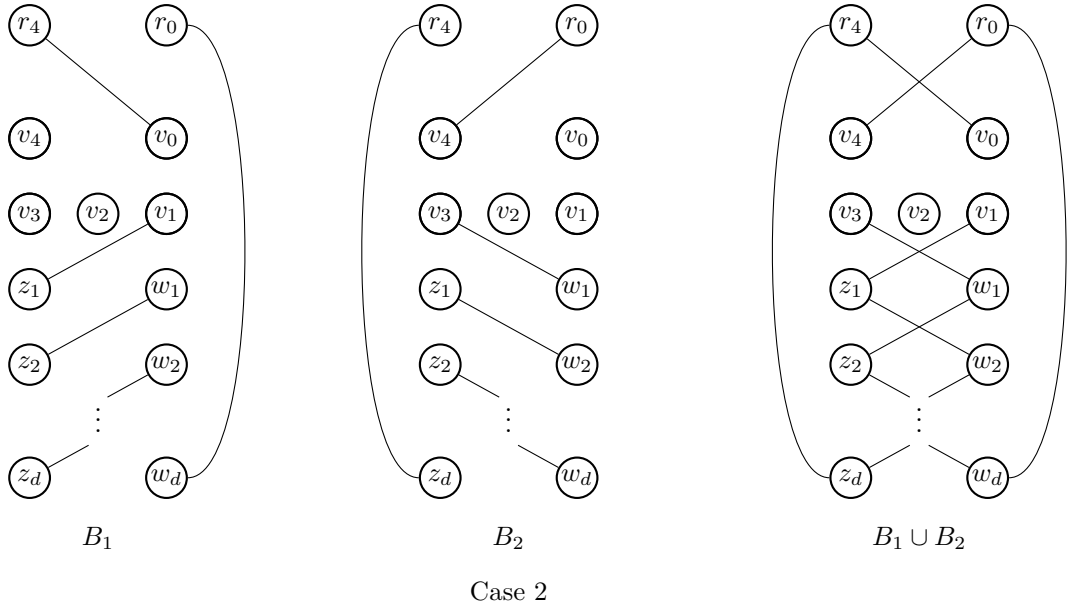

		\centering
		\figoddb{4}{2}
		\caption{Two edge sets $B_1$ and $B_2$ for Case 2 (as illustrated in Fig. \ref{fig:odd-t}).}\label{fig:odd-b-2}
	\end{figure}
	
	\textbf{Case 3}: In this case, we have $v_0=v_4$. Let $R_1 = \{(v_1,v_2), (v_2,v_3)\}$ (i.e., let be a $(v_1,v_3)$-path in $\mathcal{P}(T_1 \cap T_2)$), let $R_4$ denote $(v_4, r_4)$-path in $\mathcal{P}(T_1 \cap T_2)$. Then we have
	\begin{align*}
		\mathcal{P}(T_1)&= \{O_1, \dots, O_d\} \cup \{R_1\cup \{(v_3,v_4)\} \cup R_4\}\\
		\mathcal{P}(T_2)&= \{O_1, \dots, O_d\} \cup \{R_1\cup \{(v_1,v_4)\} \cup R_4\},
	\end{align*}
	where $R_1\cup \{(v_3,v_4)\} \cup R_4$ and $R_1\cup \{(v_1,v_4)\} \cup R_4$ are $(v_1,r_4)$- and $(v_3,r_4)$-paths, respectively. Define $B_1$ and $B_2$ by 
	\begin{align}
		\begin{split}
		B_1 &= \{(v_1,z_1)\} \cup \{(w_i,z_{i+1}) \mid i = 1, \ldots, d-1\} \cup \{(w_d, r_4)\} \\
		B_2 &= \{(v_3,w_1)\} \cup \{(z_i,w_{i+1}) \mid i = 1, \ldots, d-1\} \cup \{(z_d, r_4)\},
		\end{split}\label{addingB-odd-3}
	\end{align}
	\textcolor{black}{as illustrated in Fig. \ref{fig:odd-b-3}}.
	Similarly to the previous cases, we have (\ref{addingB-odd:Ham}) and (\ref{addingB-odd:vertex}). 
	Furthermore, we have $B_1 \cup B_2$ is a $(v_1,v_3)$-path.
	\begin{figure}
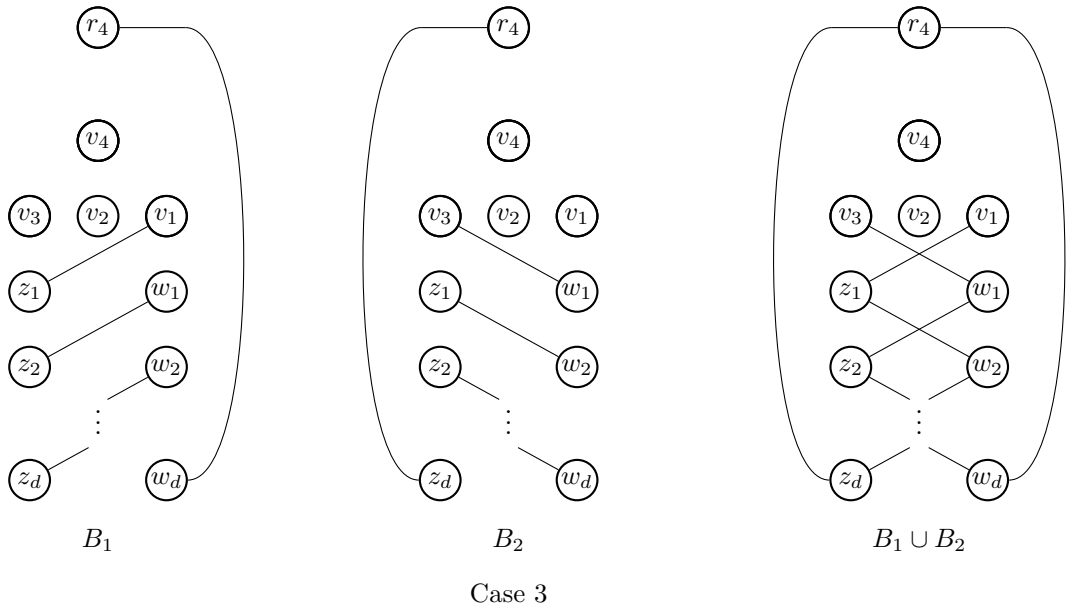

		\centering
		\figoddb{5}{3}
		\caption{Two edge sets $B_1$ and $B_2$ for Case 3 (as illustrated in Fig. \ref{fig:odd-t}).}\label{fig:odd-b-3}
	\end{figure}
	
	In summary, we have the following lemma.
	\begin{lem}\label{lem:addingB-odd}
		Let $B_1$ and $B_2$ be edge sets defined as above. Then they satisfy $(\ref{addingB-odd:Ham})$ and $(\ref{addingB-odd:vertex})$, and $B_1 \cup B_2$ consists of either $(\mathrm{i})$ vertex-disjoint $(v_0,v_3)$- and $(v_1,v_4)$-paths or $(\mathrm{ii})$ vertex-disjoint $(v_0,v_1)$- and $(v_3,v_4)$-paths if $|C^*|>4$, and $(\mathrm{iii})$ a $(v_1,v_3)$-path if $|C^*|=4$.
	\end{lem}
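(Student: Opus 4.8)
The plan is to obtain the lemma by collecting the three cases already treated, after first checking that they are exhaustive and mutually exclusive. When $|C^*|>4$ the vertices $v_0,v_1,v_2,v_3,v_4$ are distinct, so in particular $v_0\neq v_4$, and exactly one of Case~1 (where $\mathcal{P}(T_1\cap T_2)$ contains a $(v_0,v_4)$-path) and Case~2 (where it does not) applies; when $|C^*|=4$ we have $v_0=v_4$ and $v_5=v_1$, which is precisely Case~3. Hence every instance falls into exactly one case.

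For each case I would first confirm the displayed decomposition of $\mathcal{P}(T_1)$ and $\mathcal{P}(T_2)$, using Lemma~\ref{lem:4as2m-odd}: $T_1\setminus T_2=\{(v_3,v_4)\}$, $T_2\setminus T_1=\{(v_0,v_1)\}$, and $R_1=\{(v_1,v_2),(v_2,v_3)\}$ is a $(v_1,v_3)$-path of $\mathcal{P}(T_1\cap T_2)$. Since $v_0,v_1,v_3,v_4\in V_1(T_1\cap T_2)$, each is an endpoint of some path of $\mathcal{P}(T_1\cap T_2)$, and the three cases correspond to whether $v_0,v_4$ lie on a common path $R_0$, on distinct paths $R_0,R_4$, or coincide. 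Adding $(v_3,v_4)$ (resp.\ $(v_0,v_1)$) to $T_1\cap T_2$ merges $R_1$ with the appropriate path and yields the stated $\mathcal{P}(T_1)$ and $\mathcal{P}(T_2)$. The definitions (\ref{addingB-odd-1})--(\ref{addingB-odd-3}) of $B_1,B_2$ then give (\ref{addingB-odd:Ham}) and (\ref{addingB-odd:vertex}) directly: each $B_i$ attaches exactly one new edge at every endpoint in $V_1(T_i)$, so $V(B_i)=V_1(T_i)$; tracing $B_i$ through the long path and the $O_j$'s shows $T_i\cup B_i$ closes into a single Hamiltonian cycle; and $B_1\cap B_2=\emptyset$ because the two sets use disjoint families of stitching edges, e.g.\ $(w_i,z_{i+1})\in B_1$ against $(z_i,w_{i+1})\in B_2$.

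For the structure of $B_1\cup B_2$ I would trace the alternating chain $(v_1,z_1),(z_1,w_2),(w_2,z_3),\dots$ out of $v_1$ and the symmetric chain out of $v_3$: every interior vertex $z_j,w_j$ acquires degree two while $v_0,v_1,v_3,v_4$ have degree one, so $B_1\cup B_2$ is a disjoint union of paths with endpoints in $\{v_0,v_1,v_3,v_4\}$, and a parity count on $d$ determines which two paths arise. The main obstacle is the combined bookkeeping: the role of the parity of $d$ is swapped between Case~1 and Case~2 (even $d$ gives the $(v_0,v_1)/(v_3,v_4)$ split in Case~1 but the $(v_0,v_3)/(v_1,v_4)$ split in Case~2), so I must confirm that in either case the result is always one of the two alternatives (i),(ii) when $|C^*|>4$, while Case~3 always yields the single $(v_1,v_3)$-path. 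The most delicate check is (\ref{addingB-odd:Ham}) in Case~2, where the two separate paths $R_0,R_4$ must be linked through the $O_j$'s and the endpoint edges $(w_d,r_0),(v_0,r_4)$ and $(z_d,r_4),(v_4,r_0)$ into a single cycle rather than two.
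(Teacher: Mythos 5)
Your proof is correct and takes essentially the same route as the paper: the lemma there is stated purely as a summary of Cases 1--3, whose displayed decompositions of $\mathcal{P}(T_1)$ and $\mathcal{P}(T_2)$ together with the definitions (\ref{addingB-odd-1})--(\ref{addingB-odd-3}) yield (\ref{addingB-odd:Ham}), (\ref{addingB-odd:vertex}), and the parity-dependent path structure, exactly as you argue. The checks you add---exhaustiveness and mutual exclusivity of the three cases, the fact that $v_0,v_1,v_3,v_4$ are endpoints of paths of $\mathcal{P}(T_1\cap T_2)$, and the alternating-chain parity trace with the swapped parity roles in Cases 1 and 2---are precisely the verifications the paper leaves implicit.
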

	
	Similarly, $B^\prime_1$ and $B^\prime_2$ can be obtained from $T^\prime_1$ and $T^\prime_2$ as follows. Let $O^\prime_i \, (i=1,\ldots,d)$ denote vertex-disjoint\todo[fancyline]{even に合わせてvertex-disjointを追加} $(z^\prime_i,w^\prime_i)$-paths such that $\{O^\prime_1, \ldots, O^\prime_d\} = \mathcal{P}(T^\prime_1) \cap \mathcal{P}(T^\prime_2)$, where $z^\prime_1$ and $w^\prime_1$ satisfy 
	\begin{align}
	\ell(v_4, z^\prime_1) + \ell(v_2, w^\prime_1) \leq \ell(v_4, w^\prime_1) + \ell(v_2, z^\prime_1). \label{zw-ineq'}
	\end{align}
	Recall that $d \geq 1$ (i.e., $\mathcal{P}(T^\prime_1) \cap \mathcal{P}(T^\prime_2) \not = \emptyset$) holds if $n \geq 17$. We construct $B^\prime_1$ and $B^\prime_2$ by considering the following three cases.
	\begin{enumerate}
		\item $|C^*|>4$ and $\mathcal{P}(T^\prime_1 \cap T^\prime_2)$ contains a $(v_1,v_5)$-path.
		\item $|C^*|>4$ and $\mathcal{P}(T^\prime_1 \cap T^\prime_2)$ contains no $(v_1,v_5)$-path.
		\item $|C^*|=4$.
	\end{enumerate}
	
	\textbf{Case 1}: Let $R^\prime_1$ denote a $(v_1,v_5)$-path in $\mathcal{P}(T^\prime_1 \cap T^\prime_2)$, and let $R^\prime_2 = \{(v_2,v_3),(v_3,v_4)\}$. By definition $R^\prime_2$ is a $(v_2, v_4)$-path in $\mathcal{P}(T^\prime_1 \cap T^\prime_2)$. Then we note that 
	\begin{align*}
		\mathcal{P}(T^\prime_1)&= \{O^\prime_1, \dots, O^\prime_d\} \cup \{R^\prime_1\cup \{(v_1,v_2)\} \cup R^\prime_2\}\\
		\mathcal{P}(T^\prime_2)&= \{O^\prime_1, \dots, O^\prime_d\} \cup \{R^\prime_1\cup \{(v_4,v_5)\} \cup R^\prime_2\},
	\end{align*}
	where $R^\prime_1\cup \{(v_1,v_2)\} \cup R^\prime_2$ and $R^\prime_1\cup \{(v_4,v_5)\} \cup R^\prime_2$ are $(v_4,v_5)$- and $(v_1,v_2)$-paths, respectively. Define $B^\prime_1$ and $B^\prime_2$ by 
	\begin{align}
		\begin{split}
		B^\prime_1 &= \{(v_4,z^\prime_1)\} \cup \{(w^\prime_i,z^\prime_{i+1}) \mid i = 1, \ldots, d-1\} \cup \{(w^\prime_d, v_5)\}\\
		B^\prime_2 &= \{(v_2,w^\prime_1)\} \cup \{(z^\prime_i,w^\prime_{i+1}) \mid i = 1, \ldots, d-1\} \cup \{(z^\prime_d, v_1)\}.
		\end{split}\label{addingB-odd'-1}
	\end{align}
	Then we have
	\begin{align}
		&T^\prime_i \cup B^\prime_i \text{ is a tour of } G \text{ for } i = 1,2, \label{addingB-odd':Ham}\\
		&B^\prime_1 \cap B^\prime_2 = \emptyset \text{ and } V(B^\prime_i) = V_1(T^\prime_i) \text{ for } i = 1,2, \text{and} \label{addingB-odd':vertex} \\
		\begin{split}
		&B^\prime_1 \cup B^\prime_2 \text{ consist of vertex-disjoint } (v_1,v_2) \text{- and } (v_4,v_5)\text{-paths if } d \text{ is even}, \\ &\quad\text{and vertex-disjoint } (v_1,v_4) \text{- and } (v_2,v_5)\text{-paths if } d \text{ is odd.}
		\end{split}\label{addingB-odd':path}
	\end{align}
	
	\textbf{Case 2}: Let $R^\prime_2 = \{(v_2,v_3), (v_3,v_4)\}$ (i.e., let be a $(v_2,v_4)$-path in $\mathcal{P}(T^\prime_1 \cap T^\prime_2)$). Let $R^\prime_1$ and $R^\prime_5$ respectively denote $(v_1, r^\prime_1)$- and $(v_5,r^\prime_5)$-paths in $\mathcal{P}(T^\prime_1 \cap T^\prime_2)$. Then, we have 
	\begin{align*}
		\mathcal{P}(T^\prime_1)&= \{O^\prime_1, \dots, O^\prime_d\} \cup \{R^\prime_5, R^\prime_1\cup \{(v_1,v_2)\} \cup R^\prime_2\}\\
		\mathcal{P}(T^\prime_2)&= \{O^\prime_1, \dots, O^\prime_d\} \cup \{R^\prime_1, R^\prime_2\cup \{(v_4,v_5)\} \cup R^\prime_5\},
	\end{align*}
	where $R^\prime_1\cup \{(v_1,v_2)\} \cup R^\prime_2$ and $R^\prime_2\cup \{(v_4,v_5)\} \cup R^\prime_5$ are
	$(v_4,r^\prime_1)$- and $(v_2,r^\prime_5)$-paths, respectively. Define $B^\prime_1$ and $B^\prime_2$ by 
	\begin{align}
		\begin{split}
		B^\prime_1 &= \{(v_4,z^\prime_1)\} \cup \{(w^\prime_i,z^\prime_{i+1}) \mid i = 1, \ldots, d-1\} \cup \{(w^\prime_d, r^\prime_5), (v_5,r^\prime_1)\} \\
		B^\prime_2 &= \{(v_2,w^\prime_1)\} \cup \{(z^\prime_i,w^\prime_{i+1}) \mid i = 1, \ldots, d-1\} \cup \{(z^\prime_d, r^\prime_1), (v_1,r^\prime_5)\}.
		\end{split}\label{addingB-odd'-2}
	\end{align}
	Similarly to Case 1, we have (\ref{addingB-odd':Ham}) and (\ref{addingB-odd':vertex}). 
	Furthermore, $B^\prime_1 \cup B^\prime_2$ consists of $(v_1, v_4)$- and $(v_2, v_5)$-paths if $d$ is even, and vertex-disjoint $(v_1, v_2)$- and $(v_4, v_5)$-paths if $d$ is odd. \medskip
	
	\textbf{Case 3}: In this case, we have $v_5=v_1$. Let $R^\prime_2 = \{(v_2,v_3), (v_3,v_4)\}$ (i.e., let $R_1$ be a $(v_2,v_4)$-path in $\mathcal{P}(T^\prime_1 \cap T^\prime_2)$), let $R^\prime_1$ denote $(v_1, r^\prime_1)$-path in $\mathcal{P}(T^\prime_1 \cap T^\prime_2)$. Then we have
	\begin{align*}
		\mathcal{P}(T^\prime_1)&= \{O^\prime_1, \dots, O^\prime_d\} \cup \{R^\prime_1\cup \{(v_1,v_2)\} \cup R^\prime_2\}\\
		\mathcal{P}(T^\prime_2)&= \{O^\prime_1, \dots, O^\prime_d\} \cup \{R^\prime_1\cup \{(v_1,v_4)\} \cup R^\prime_2\},
	\end{align*}
	where $R^\prime_1\cup \{(v_1,v_2)\} \cup R^\prime_2$ and $R^\prime_1\cup \{(v_1,v_4)\} \cup R^\prime_2$ are $(v_4,r^\prime_1)$- and $(v_2,r^\prime_1)$-paths, respectively. Define $B^\prime_1$ and $B^\prime_2$ by 
	\begin{align}
		\begin{split}
		B^\prime_1 &= \{(v_4,z^\prime_1)\} \cup \{(w^\prime_i,z^\prime_{i+1}) \mid i = 1, \ldots, d-1\} \cup \{(w^\prime_d, r^\prime_1)\} \\
		B^\prime_2 &= \{(v_2,w^\prime_1)\} \cup \{(z^\prime_i,w^\prime_{i+1}) \mid i = 1, \ldots, d-1\} \cup \{(z^\prime_d, r^\prime_1)\}.
		\end{split}\label{addingB-odd'-3}
	\end{align}
	Similarly to the previous cases, we have (\ref{addingB-odd':Ham}) and (\ref{addingB-odd':vertex}). 
	Furthermore, we have $B^\prime_1 \cup B^\prime_2$ is a $(v_2,v_4)$-path.
	
	In summary, we have the following lemma.
	\begin{lem}\label{lem:addingB-odd'}
		Let $B^\prime_1$ and $B^\prime_2$ be edge sets defined as above. Then they satisfy $(\ref{addingB-odd':Ham})$ and $(\ref{addingB-odd':vertex})$, and $B^\prime_1 \cup B^\prime_2$ consists of either $(\mathrm{i})$ vertex-disjoint $(v_1,v_4)$- and $(v_2,v_5)$-paths or $(\mathrm{ii})$ vertex-disjoint $(v_1,v_2)$- and $(v_4,v_5)$-paths if $|C^*|>4$, and $(\mathrm{iii})$ a $(v_2,v_4)$-path if $|C^*|=4$.
	\end{lem}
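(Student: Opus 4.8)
The plan is to treat this lemma exactly as the summary it is: it collects the three-case construction of $B^\prime_1$ and $B^\prime_2$ carried out immediately above, in complete parallel with Lemma~\ref{lem:addingB-odd} for $B_1,B_2$. First I would record that the three cases are exhaustive. By Lemma~\ref{lem:4as2m-odd'} the path covers $T^\prime_1$ and $T^\prime_2$ share the paths $O^\prime_1,\dots,O^\prime_d$ together with the $(v_2,v_4)$-path $R^\prime_2=\{(v_2,v_3),(v_3,v_4)\}$, and differ only in that $T^\prime_1$ uses $(v_1,v_2)$ while $T^\prime_2$ uses $(v_4,v_5)$. Consequently either $|C^*|=4$ (Case~3, where $v_5=v_1$), or $|C^*|>4$, and in the latter case the shared family $\mathcal{P}(T^\prime_1\cap T^\prime_2)$ either contains a $(v_1,v_5)$-path (Case~1) or does not (Case~2). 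These alternatives are mutually exclusive and cover all possibilities.

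Next, for each case I would verify $(\ref{addingB-odd':Ham})$ and $(\ref{addingB-odd':vertex})$ directly from the explicit definitions $(\ref{addingB-odd'-1})$, $(\ref{addingB-odd'-2})$, and $(\ref{addingB-odd'-3})$. Each $B^\prime_i$ is assembled so that its edges join consecutive endpoints $z^\prime_j,w^\prime_j$ of the shared paths $O^\prime_j$ and then attach to the distinguished vertices among $v_1,v_2,v_4,v_5$. Tracing the resulting walk shows that $T^\prime_i\cup B^\prime_i$ closes every degree-one vertex of $T^\prime_i$ into one spanning cycle, giving $(\ref{addingB-odd':Ham})$. Edge-disjointness of $B^\prime_1,B^\prime_2$ and the identity $V(B^\prime_i)=V_1(T^\prime_i)$ follow because the two constructions consume the $z^\prime$- and $w^\prime$-endpoints in opposite orders, so each endpoint of each $O^\prime_j$ is used exactly once in each set and no edge is repeated; this yields $(\ref{addingB-odd':vertex})$.

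Finally, I would determine the shape of $B^\prime_1\cup B^\prime_2$. Following the alternating chain $z^\prime_1,w^\prime_1,z^\prime_2,\dots$ induced by the two edge sets (with the orientation of $z^\prime_1,w^\prime_1$ fixed by $(\ref{zw-ineq'})$), the four free ends $v_1,v_2,v_4,v_5$ become paired according to the parity of $d$. In Case~1 this produces $(v_1,v_2)$- and $(v_4,v_5)$-paths for even $d$ and $(v_1,v_4)$- and $(v_2,v_5)$-paths for odd $d$; in Case~2 the pairing is exactly reversed; and in Case~3, where $v_5=v_1$, the two chains fuse into a single $(v_2,v_4)$-path. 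Collecting these, every outcome with $|C^*|>4$ is one of (i) or (ii), while $|C^*|=4$ gives (iii), which is the claim.

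The only genuinely error-prone step is the parity bookkeeping that pins down the endpoint pairing in Cases~1 and~2; the Hamiltonicity and vertex-cover verifications are routine transcriptions of the definitions. Since this lemma is the mirror image of Lemma~\ref{lem:addingB-odd}, with the roles of $(v_0,v_3)$ and $(v_1,v_4)$ replaced by $(v_1,v_4)$ and $(v_2,v_5)$, I would present it by invoking that parallel argument rather than repeating the endpoint-chasing in full.
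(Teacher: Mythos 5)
Your proposal is correct and takes essentially the same route as the paper: the paper states this lemma with no separate proof because it is precisely a summary of the three-case construction of $B^\prime_1$ and $B^\prime_2$ carried out immediately before it, and your argument (exhaustiveness of the three cases, per-case verification of $(\ref{addingB-odd':Ham})$ and $(\ref{addingB-odd':vertex})$ from the definitions $(\ref{addingB-odd'-1})$--$(\ref{addingB-odd'-3})$, and the parity bookkeeping for the endpoint pairings) reproduces that analysis, with all pairings agreeing with the paper's in each case. Your closing observation that the whole argument is the mirror image of Lemma~\ref{lem:addingB-odd} is likewise exactly how the paper frames it.
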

	\begin{figure}
	\centering
		\figexoddb
		\caption{Two edge sets $B_1$ and $B_2$ for path covers $T_1$ and $T_2$ in Fig. \ref{fig:exodd-st}.}\label{fig:exodd-b}
	\end{figure}
	\begin{figure}
	\centering
		\figexoddbp
		\caption{Two edge sets $B^\prime_1$ and $B^\prime_2$ for path covers $T^\prime_1$ and $T^\prime_2$ in Fig. \ref{fig:exodd-st'}.}\label{fig:exodd-b'}
	\end{figure}
	\textcolor{black}{Figures \ref{fig:exodd-b} and \ref{fig:exodd-b'} show an example of edge sets $B_1$, $B_2$, $B^\prime_1$, and $B^\prime_2$ for path covers $T_1$, $T_2$, $T^\prime_1$, and $T^\prime_2$ in Figs. \ref{fig:exodd-st} and \ref{fig:exodd-st'}.}
	Furthermore, $A^{(\prime)}_i$ and $B^{(\prime)}_i$ ($i=1,2$) satisfy the following properties.
	\begin{lem}\label{lem:addingAB-odd}
		Let $A_1$, $A_2$, $B_1$, and $B_2$ be defined as above. Then they are all pairwise disjoint, and $C = A_1 \cup A_2 \cup B_1 \cup B_2$ consists of either one or two cycles such that $V(C) = V\setminus \{v_2\}$. Furthermore, there exists a cycle $D$ such that $V(D) = V\setminus \{v_2\}$, $\ell(D) \geq \ell(C)$ and $(q, v_3) \in D$. 
	\end{lem}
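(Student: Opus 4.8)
The plan is to mirror the proof of Lemma \ref{lem:addingAB}, but to perform the final cycle-merging swap on the $B$-side rather than the $A$-side, so that the length bound comes from $(\ref{zw-ineq})$ while the $A$-side edge $(q,v_3)$ is preserved. First I would record the vertex bookkeeping. By Lemma \ref{lem:addingA-odd}(ii) and $(\ref{addingB-odd:vertex})$ we have $V(A_i)=V_1(S_i)$ and $V(B_i)=V_1(T_i)$, and by $(\ref{lem:4as2m-odd:vertex})$ the sets $V_1(S_i)$ and $V_1(T_i)$ partition $V\setminus\{v_2\}$ for $i=1,2$. Hence every $v\in V\setminus\{v_2\}$ lies in exactly one of $V(A_1),V(B_1)$ and exactly one of $V(A_2),V(B_2)$, so it has degree exactly two in $C$, and the same counting shows the four sets are pairwise edge-disjoint. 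Thus $C$ is a disjoint union of cycles with $V(C)=V\setminus\{v_2\}$.

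Next I would bound the number of cycles by tracking how the boundary vertices $v_0,v_1,v_3,v_4$ are paired. Lemma \ref{lem:addingA-odd}(iii) pairs them, inside $A_1\cup A_2$, as $\{v_0,v_3\},\{v_1,v_4\}$ when $k$ is odd, as $\{v_0,v_1\},\{v_3,v_4\}$ when $k$ is even, or as a single $(v_1,v_3)$-path when $|C^*|=4$; Lemma \ref{lem:addingB-odd} gives the analogous two pairings (or the $(v_1,v_3)$-path) for $B_1\cup B_2$. Regarding each pairing as an edge on the four boundary vertices and following the resulting multigraph, the two pairings combine into a single $4$-cycle (so $C$ is one cycle) in every combination except when both sides pair $\{v_0,v_3\},\{v_1,v_4\}$ or both pair $\{v_0,v_1\},\{v_3,v_4\}$, where $C$ splits into exactly two cycles. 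When $|C^*|=4$ both $A_1\cup A_2$ and $B_1\cup B_2$ are $(v_1,v_3)$-paths, which glue into one cycle. This yields the claimed one-or-two-cycle dichotomy.

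Finally I would construct $D$. If $C$ is a single cycle, set $D=C$; since $x_1=q$, the edge $(q,v_3)=(v_3,x_1)$ lies in $A_1\subseteq C=D$, so the conclusion is immediate with $\ell(D)=\ell(C)$. If $C$ has two cycles, then in either two-cycle configuration $v_1$ and $v_3$ lie in distinct cycles, whence the edges $(v_1,z_1)\in B_1$ and $(v_3,w_1)\in B_2$ belong to different cycles. Replacing them by $(v_1,w_1)$ and $(v_3,z_1)$ merges the two cycles into one cycle $D$ with $V(D)=V\setminus\{v_2\}$, and $(\ref{zw-ineq})$ gives $\ell(D)\ge\ell(C)$. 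By $(\ref{lem:4as2m-odd:commonendpoint})$ we have $q\in V_1(S_1)\cap V_1(S_2)$, so $q$ is disjoint from $V(B_1)\cup V(B_2)$; in particular $w_1\neq q$, so the swap leaves $(q,v_3)=(v_3,x_1)$ intact, and therefore $(q,v_3)\in D$.

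The hard part will be the cycle-counting case analysis, together with the verification that the two-cycle configurations are exactly those in which $v_1$ and $v_3$ lie in separate cycles; this is what guarantees that the single $(\ref{zw-ineq})$-guided swap simultaneously reconnects $C$ into one cycle and spares the edge $(q,v_3)$.
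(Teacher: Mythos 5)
Your proposal is correct and follows essentially the same route as the paper's own proof: the one-or-two-cycle structure of $C$ via Lemmas \ref{lem:4as2m-odd}, \ref{lem:addingA-odd}, and \ref{lem:addingB-odd}, the observation that $(q,v_3)=(v_3,x_1)\in A_1$, and, in the two-cycle case, exactly the same swap of $(v_1,z_1)$ and $(v_3,w_1)$ for $(v_1,w_1)$ and $(v_3,z_1)$ justified by $(\ref{zw-ineq})$. Your write-up merely fills in details the paper leaves implicit, such as the boundary-vertex pairing analysis behind the cycle count and the verification via $(\ref{lem:4as2m-odd:commonendpoint})$ that $w_1\neq q$, so the swap cannot destroy the edge $(q,v_3)$.
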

	\begin{proof}
		It is not difficult to see that $A_1$, $A_2$, $B_1$, and $B_2$ are pairwise disjoint. 
Lemmas \ref{lem:4as2m-odd}, \ref{lem:addingA-odd}, and \ref{lem:addingB-odd} imply that $C=A_1 \cup A_2 \cup B_1 \cup B_2$ consists of either one or two cycles such that $V(C) = V\setminus \{v_2\}$. By $x_1=q$ and $(\ref{addingA-odd})$, we have $(q,v_3) \in C$. Thus if $C$ is a single cycle, the latter statement in the lemma holds. Assume that $C$ consists of two cycles. In this case, we can see that two edges $(v_1,z_1)$ and $(v_3,w_1)$ belong to different cycles by $(\ref{addingB-odd-1})$, $(\ref{addingB-odd-2})$, and $(\ref{addingB-odd-3})$. Let $D=(C \setminus \{(v_1,z_1), (v_3,w_1)\}) \cup \{(v_1,w_1), (v_3,z_1)\}$. Then $D$ is a cycle such that $V(D) = V\setminus \{v_2\}$. By assumption $(\ref{zw-ineq})$, we have $\ell(D)\geq\ell(C)$. Since $C$ contains $(q,v_3)$, so does $D$, which completes the proof.
	\end{proof}
	
	\begin{lem}\label{lem:addingAB-odd'}
		Let $A^\prime_1$, $A^\prime_2$, $B^\prime_1$, and $B^\prime_2$ be defined as above. Then they are all pairwise disjoint, and $C^\prime = A^\prime_1 \cup A^\prime_2 \cup B^\prime_1 \cup B^\prime_2$ consists of either one or two cycles such that $V(C^\prime) = V\setminus \{v_3\}$. Furthermore, there exists a cycle $D^\prime$ such that $V(D^\prime) = V\setminus \{v_3\}$, $\ell(D^\prime) \geq \ell(C^\prime)$ and $(q, v_2) \in D^\prime$. 
	\end{lem}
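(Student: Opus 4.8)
The plan is to mirror the proof of Lemma~\ref{lem:addingAB-odd} verbatim, replacing every unprimed object by its primed counterpart and using the roles dictated by the primed constructions. First I would establish pairwise disjointness of $A^\prime_1, A^\prime_2, B^\prime_1, B^\prime_2$: part~(ii) of Lemmas~\ref{lem:addingA-odd'} and \ref{lem:addingB-odd'} gives $V(A^\prime_i) = V_1(S^\prime_i)$ and $V(B^\prime_i) = V_1(T^\prime_i)$, while $(\ref{lem:4as2m-odd':vertex})$ says $V_1(S^\prime_i)$ and $V_1(T^\prime_i)$ partition $V \setminus \{v_3\}$; combined with $A^\prime_1 \cap A^\prime_2 = \emptyset$ and $B^\prime_1 \cap B^\prime_2 = \emptyset$, this forces all four sets to be disjoint. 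Next I would glue the path descriptions of $A^\prime_1 \cup A^\prime_2$ (Lemma~\ref{lem:addingA-odd'}(iii)) and of $B^\prime_1 \cup B^\prime_2$ (Lemma~\ref{lem:addingB-odd'}) at their common endpoints, which are exactly $\{v_1, v_2, v_4, v_5\}$ when $|C^*| > 4$ and $\{v_2, v_4\}$ when $|C^*| = 4$. Since the internal vertices of the $A^\prime$-paths and the $B^\prime$-paths are disjoint, each such endpoint receives exactly one $A^\prime$-path and one $B^\prime$-path, so the union is $2$-regular on $V \setminus \{v_3\}$; hence $C^\prime$ consists of one or two cycles with $V(C^\prime) = V \setminus \{v_3\}$.

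For the edge $(q, v_2)$, I would use $x^\prime_1 = q$ from the setup preceding $(\ref{addingA-odd'})$ together with the leading term $\{(v_2, x^\prime_1)\}$ in the definition of $A^\prime_1$, so that $(q, v_2) = (v_2, x^\prime_1) \in A^\prime_1 \subseteq C^\prime$. If $C^\prime$ is a single cycle, the conclusion is immediate. Otherwise $C^\prime$ splits into two cycles, and I would apply the same $2$-opt reconnection as in Lemma~\ref{lem:addingAB-odd}, now on the pair $(v_4, z^\prime_1) \in B^\prime_1$ and $(v_2, w^\prime_1) \in B^\prime_2$ (present in all three cases $(\ref{addingB-odd'-1})$, $(\ref{addingB-odd'-2})$, $(\ref{addingB-odd'-3})$). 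Setting $D^\prime = (C^\prime \setminus \{(v_4, z^\prime_1), (v_2, w^\prime_1)\}) \cup \{(v_4, w^\prime_1), (v_2, z^\prime_1)\}$ and invoking the swap inequality $(\ref{zw-ineq'})$ yields $\ell(D^\prime) \geq \ell(C^\prime)$, with $V(D^\prime) = V(C^\prime) = V \setminus \{v_3\}$ unchanged. The reconnection deletes only the $B^\prime$-edge $(v_2, w^\prime_1)$ incident to $v_2$ (and $(v_4,z^\prime_1)$); since $v_2$ has two distinct neighbors in $C^\prime$ we have $w^\prime_1 \neq x^\prime_1 = q$, so $(q,v_2) = (v_2,x^\prime_1)$ is untouched and survives in $D^\prime$.

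The main obstacle I anticipate is confirming that, whenever $C^\prime$ has two cycles, the edges $(v_4, z^\prime_1)$ and $(v_2, w^\prime_1)$ lie in \emph{different} cycles, for only then does the $2$-opt move fuse the two cycles into one rather than split a single cycle. I would verify this by inspecting the three case-definitions of $B^\prime_1, B^\prime_2$ and tracking how $z^\prime_1$ and $w^\prime_1$ are threaded into the two distinct $B^\prime$-paths that, after gluing with the $A^\prime$-paths, settle onto the two separate cycles — exactly the role played by $(v_1, z_1)$ and $(v_3, w_1)$ in the unprimed argument. Once this parity-and-connectivity bookkeeping is checked across the three cases, the remainder is a direct transcription of the proof of Lemma~\ref{lem:addingAB-odd}.
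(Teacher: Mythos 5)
Your proposal is correct and follows essentially the same route as the paper's proof: disjointness plus Lemmas~\ref{lem:4as2m-odd'}, \ref{lem:addingA-odd'}, \ref{lem:addingB-odd'} to get that $C^\prime$ is one or two cycles on $V\setminus\{v_3\}$, the edge $(q,v_2)=(v_2,x^\prime_1)\in A^\prime_1$, and the same $2$-opt swap of $(v_4,z^\prime_1)$ and $(v_2,w^\prime_1)$ justified by $(\ref{zw-ineq'})$. The only difference is that you spell out details the paper leaves implicit (the survival of $(q,v_2)$ in $D^\prime$ and the different-cycles bookkeeping), which is harmless.
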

	\begin{proof}
		It is not difficult to see that $A^\prime_1$, $A^\prime_2$, $B^\prime_1$, and $B^\prime_2$ are pairwise disjoint. Lemmas \ref{lem:4as2m-odd'}, \ref{lem:addingA-odd'}, and \ref{lem:addingB-odd'} imply that $C^\prime=A^\prime_1 \cup A^\prime_2 \cup B^\prime_1 \cup B^\prime_2$ consists of either one or two cycles such that $V(C^\prime) = V\setminus \{v_3\}$. By $x^\prime_1 = q$ and $(\ref{addingA-odd'})$, we have $(q, v_2) \in C^\prime$. Thus if $C^\prime$ is a single cycle, the latter statement in the lemma holds. Assume that $C^\prime$ consists of two cycles. In this case, we can see that two edges $(v_4,z^\prime_1)$ and $(v_2,w^\prime_1)$ belong to different cycles by $(\ref{addingB-odd'-1})$, $(\ref{addingB-odd'-2})$, and $(\ref{addingB-odd'-3})$. Let $D^\prime=(C^\prime \setminus \{(v_4,z^\prime_1), (v_2,w^\prime_1)\}) \cup \{(v_4,w^\prime_1), (v_2,z^\prime_1)\}$. Then $D^\prime$ is a cycle such that $V(D^\prime) = V\setminus \{v_3\}$. By assumption (\ref{zw-ineq'}), we have $\ell(D^\prime)\geq\ell(C^\prime)$. Since $C^\prime$ contains $(q,v_2)$, so does $D^\prime$, which completes the proof.
	\end{proof}
	
	\begin{lem}\label{lem:2tours}
		Let $A^{(\prime)}_i$ and $B^{(\prime)}_i$ for $i=1,2$ be defined as above. Them there exist two tours $H$ and $H^\prime$ in $G$ such that $\ell(H) + \ell(H^\prime) \geq \ell(A_1)+ \ell(A_2) + \ell(B_1)+ \ell(B_2) + \ell(A^\prime_1)+ \ell(A^\prime_2) + \ell(B^\prime_1)+ \ell(B^\prime_2) + 2\ell(v_2,v_3)$.
	\end{lem}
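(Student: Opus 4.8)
The plan is to obtain $H$ and $H'$ by inserting the single missing vertex into each of the two cycles $D$ and $D'$ supplied by Lemmas~\ref{lem:addingAB-odd} and~\ref{lem:addingAB-odd'}, exploiting that $D$ omits exactly $v_2$ while $D'$ omits exactly $v_3$, and that they carry the complementary edges $(q,v_3)$ and $(q,v_2)$.

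First I would record the data. By Lemma~\ref{lem:addingAB-odd} there is a cycle $D$ with $V(D)=V\setminus\{v_2\}$, $(q,v_3)\in D$, and $\ell(D)\ge\ell(C)$ for $C=A_1\cup A_2\cup B_1\cup B_2$; since these four sets are pairwise disjoint, $\ell(C)=\ell(A_1)+\ell(A_2)+\ell(B_1)+\ell(B_2)$. Symmetrically, Lemma~\ref{lem:addingAB-odd'} yields a cycle $D'$ with $V(D')=V\setminus\{v_3\}$, $(q,v_2)\in D'$, and $\ell(D')\ge\ell(A'_1)+\ell(A'_2)+\ell(B'_1)+\ell(B'_2)$. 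Since $q$ lies on $C^{**}$ while $v_2,v_3$ lie on $C^*$, the three vertices $q,v_2,v_3$ are pairwise distinct, and as $G$ is complete every edge invoked below exists.

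Next I would define the two tours by a symmetric one-vertex splice:
\[
H=\bigl(D\setminus\{(q,v_3)\}\bigr)\cup\{(q,v_2),(v_2,v_3)\},\qquad
H'=\bigl(D'\setminus\{(q,v_2)\}\bigr)\cup\{(q,v_3),(v_2,v_3)\}.
\]
Deleting $(q,v_3)$ from $D$ leaves a Hamiltonian $(q,v_3)$-path on $V\setminus\{v_2\}$, and attaching $v_2$ via $(q,v_2)$ and $(v_2,v_3)$ splices it between $q$ and $v_3$; hence $H$ is a Hamiltonian cycle on $V$, i.e.\ a tour. The same argument with the roles of $v_2$ and $v_3$ interchanged shows $H'$ is a tour of $G$.

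Finally I would add the lengths. Using $(v_2,v_3)=(v_3,v_2)$,
\[
\ell(H)+\ell(H')=\ell(D)+\ell(D')-\ell(q,v_3)-\ell(q,v_2)+\ell(q,v_2)+\ell(q,v_3)+2\ell(v_2,v_3),
\]
so the four terms $\ell(q,\cdot)$ cancel and $\ell(H)+\ell(H')=\ell(D)+\ell(D')+2\ell(v_2,v_3)$. Combined with the bounds $\ell(D)\ge\ell(A_1)+\ell(A_2)+\ell(B_1)+\ell(B_2)$ and $\ell(D')\ge\ell(A'_1)+\ell(A'_2)+\ell(B'_1)+\ell(B'_2)$, this gives the asserted inequality. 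I do not expect a real obstacle: the one point needing care is that the splice genuinely produces Hamiltonian cycles, which is exactly why the preceding two lemmas were engineered to guarantee $(q,v_3)\in D$ and $(q,v_2)\in D'$. The decisive structural feature is that the extra edge incident to $q$ in each cycle is precisely the edge needed to reinsert the vertex that the other cycle omits, so the perturbation terms cancel and leave the clean surplus $2\ell(v_2,v_3)$.
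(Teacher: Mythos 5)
Your proposal is correct and follows essentially the same route as the paper's proof: the paper defines exactly the same tours $H=(D\setminus\{(q,v_3)\})\cup\{(q,v_2),(v_2,v_3)\}$ and $H^\prime=(D^\prime\setminus\{(q,v_2)\})\cup\{(q,v_3),(v_2,v_3)\}$ and performs the same cancellation to obtain $\ell(H)+\ell(H^\prime)=\ell(D)+\ell(D^\prime)+2\ell(v_2,v_3)$. Your additional remarks (the explicit splice argument for Hamiltonicity and the distinctness of $q,v_2,v_3$) are correct details that the paper leaves implicit.
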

	\begin{proof}
	Let $D$ and $D^\prime$ be a cycles in Lemmas \ref{lem:addingAB-odd} and \ref{lem:addingAB-odd'}, respectively. Then we have $V(D) = V \setminus \{v_2\}$, $V(D^\prime) = V \setminus \{v_3\}$, $(q, v_3) \in D$, and $(q, v_2) \in D^\prime$. Define $H$ and $H^\prime$ by
	\begin{align*}
		H &= (D \setminus \{(q,v_3)\}) \cup \{(q,v_2), (v_2,v_3)\}\\
		H^\prime &= (D^\prime \setminus \{(q,v_2)\}) \cup \{(q,v_3), (v_2,v_3)\}.
	\end{align*}
	Then $H$ and $H^\prime$ are tours. Furthermore, we have 
	\begin{align*}
		\ell(H) + \ell(H^\prime) &= \ell(D) + \ell(D^\prime) + 2\ell(v_2,v_3)\\
		&\geq \ell(A_1 \cup A_2 \cup B_1 \cup B_2) + \ell(A^\prime_1 \cup A^\prime_2 \cup B^\prime_1 \cup B^\prime_2) + 2\ell(v_2,v_3),
	\end{align*}
	which completes the proof.
	\end{proof}
	We are now ready to describe our approximation algorithm, called \texttt{TourOdd}. 	
	\todo[inline,caption={algorithm内のピリオド}]{algorithm内のピリオドに関して,"if" "for" "return" 等の予約語を含まない文,式にピリオドをつけるように統一しました．}
\begin{figure}[h!t]
		\begin{algorithm}[H]
			\caption{\texttt{TourOdd}}
			\label{alg:TourOdd}
			\begin{algorithmic}
			\Require A complete graph $G=(V,E)$ with odd $|V|$, and an edge length function $\ell :E \to \mathbb{R}_+$.
			\Ensure A tour $T_{\apx}$ in $G$.\vspace{5pt}
			\If{$n < 17$}
				\State Compute an optimal tour $T_{\opt}$ of $(G,\ell)$ by exhaustive search.
				\State Output $T_{\opt}$ and halt.
			\Else
				\State $\mathcal{T} := \emptyset$.
				\For{$v_1$,$v_2$,$v_3$ and $v_4$ in the 4-permutations of $V$}
				\State Compute a minimum weighted 2-factor $S$ among those containing $\{(v_1,v_2), (v_2,v_3), (v_3,v_4)\}$.
				\State Compute a minimum weighted path cover $T$ among those satisfying $(v_1, v_2), (v_2, v_3) \in T$ \State \ \ \ \ and $V_1(T)=V \setminus \{v_2\}$.
				\State Compute a minimum weighted path cover $T^\prime$ among those satisfying $(v_2, v_3), (v_3, v_4) \in T^\prime$ \State \ \ \ \ and $V_1(T^\prime)=V \setminus \{v_3\}$.
				\If{$S$ is a tour}
					\State $\mathcal{T} := \mathcal{T} \cup \{S\}$.
				\Else
					\State $S_1, T_1, S_2, T_2 := \texttt{FourPathCovers}(S,T)$.\vspace{1.5pt}
					\State Compute edge sets $A_1$, $A_2$, $B_1$, $B_2$ defined in (\ref{addingA-odd}), (\ref{addingB-odd-1}), (\ref{addingB-odd-2}), and (\ref{addingB-odd-3}).\vspace{1.5pt}
					\State $\mathcal{T} := \mathcal{T} \cup \{S_1 \cup A_1, S_2 \cup A_2, T_1 \cup B_1, T_2 \cup B_2 \}$.\vspace{1.5pt}
					\State $S^\prime_1, T^\prime_1, S^\prime_2, T^\prime_2 := \texttt{FourPathCovers}(S,T^\prime)$.\vspace{1.5pt}
					\State Compute edge sets $A^\prime_1$, $A^\prime_2$, $B^\prime_1$, $B^\prime_2$ defined in (\ref{addingA-odd'}), (\ref{addingB-odd'-1}), (\ref{addingB-odd'-2}), and (\ref{addingB-odd'-3}).\vspace{1.5pt}
					\State $\mathcal{T} := \mathcal{T} \cup \{S^\prime_1 \cup A^\prime_1, S^\prime_2 \cup A^\prime_2, T^\prime_1 \cup B^\prime_1, T^\prime_2 \cup B^\prime_2 \}$.\vspace{1.5pt}
				\EndIf
				\EndFor
				\State $T_{\apx} := \argmin_{T \in \mathcal{T}} \ell(T)$.\vspace{1.5pt}
				\State Output $T_{\apx}$ and halt.
			\EndIf
			\end{algorithmic}
		\end{algorithm}
	\end{figure}

	Before analyzation of $T_{\apx}$, let us evaluate $\ell(S)$, $\ell(T)$ and $\ell(T^\prime)$.
	\begin{lem}\label{lem:odd-opt}
		For a path $P = \{(v_1,v_2), (v_2,v_3), (v_3,v_4)\}$, let $S$, $T$ and $T^\prime$ be defined as above. If there exists an optimal tour that contains $P$, then
		\begin{align}
			2\ell(S) + \ell(T)+\ell(T^\prime) \leq 3\opt(G, \ell)+\ell(v_2,v_3).
		\end{align}
	\end{lem}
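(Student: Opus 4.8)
The plan is to charge the whole left-hand side against a single optimal tour containing $P$. Let $H$ be an optimal tour with $(v_1,v_2),(v_2,v_3),(v_3,v_4)\in H$, so that $\ell(H)=\opt(G,\ell)$ and these three edges occur consecutively on the cycle $H$. Since $H$ is in particular a $2$-factor containing $P$, minimality of $S$ gives $\ell(S)\le\ell(H)=\opt(G,\ell)$, hence $2\ell(S)\le 2\opt(G,\ell)$. It therefore suffices to exhibit a feasible path cover $\tilde T$ for the problem defining $T$ and a feasible path cover $\tilde T'$ for the problem defining $T'$ with $\ell(\tilde T)+\ell(\tilde T')\le \opt(G,\ell)+\ell(v_2,v_3)$; minimality of $T$ and $T'$ then yields $\ell(T)+\ell(T')\le \opt(G,\ell)+\ell(v_2,v_3)$, and the lemma follows by addition.

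To build $\tilde T$ and $\tilde T'$ I would delete the three consecutive edges of $P$ from $H$. This isolates $v_2$ and $v_3$ and leaves a single Hamiltonian path $W$ on $V\setminus\{v_2,v_3\}$ whose two endpoints are exactly $v_1$ and $v_4$, with $\ell(W)=\opt(G,\ell)-\ell(v_1,v_2)-\ell(v_2,v_3)-\ell(v_3,v_4)$. Because $n$ is odd, $W$ has an even number of edges, so the proper $2$-edge-colouring of $W$ (alternating along the path) splits its edges into two matchings $M$ and $M'$ of equal size with $\ell(M)+\ell(M')=\ell(W)$. The colour class containing the edge incident to $v_4$ covers every vertex of $W$ except the opposite endpoint $v_1$, so it is a perfect matching $M$ on $V\setminus\{v_1,v_2,v_3\}$; the other class $M'$ is a perfect matching on $V\setminus\{v_2,v_3,v_4\}$.

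Now set $\tilde T=\{(v_1,v_2),(v_2,v_3)\}\cup M$ and $\tilde T'=\{(v_2,v_3),(v_3,v_4)\}\cup M'$. Each is acyclic and spanning, being a two-edge path together with a disjoint perfect matching on the remaining vertices; one checks directly that $(v_1,v_2),(v_2,v_3)\in\tilde T$ with $V_1(\tilde T)=V\setminus\{v_2\}$, and $(v_2,v_3),(v_3,v_4)\in\tilde T'$ with $V_1(\tilde T')=V\setminus\{v_3\}$, so both are feasible. Their total length is
\[
\ell(\tilde T)+\ell(\tilde T')=\ell(v_1,v_2)+\ell(v_3,v_4)+2\ell(v_2,v_3)+\ell(M)+\ell(M'),
\]
and substituting $\ell(M)+\ell(M')=\ell(W)$ collapses this to $\opt(G,\ell)+\ell(v_2,v_3)$, as required.

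The only genuinely delicate point is the structural alignment in the second step: the two colour classes of $W$ must be perfect matchings on \emph{exactly} $V\setminus\{v_1,v_2,v_3\}$ and $V\setminus\{v_2,v_3,v_4\}$, not on some other vertex sets. This is what forces the deletion of the \emph{three} consecutive edges of $P$, so that the residual path has precisely $v_1$ and $v_4$ as its endpoints, and it relies on the parity of $n$ to guarantee that each colour class omits exactly one of these endpoints. Once this alignment is in place the weight bookkeeping is immediate, and the edge $(v_2,v_3)$—used once in each of $\tilde T$ and $\tilde T'$—accounts exactly for the extra $\ell(v_2,v_3)$ term in the statement.
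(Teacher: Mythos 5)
Your proof is correct and takes essentially the same approach as the paper: the paper also bounds $\ell(S)$ by $\opt(G,\ell)$ via minimality and then exhibits two feasible path covers built from alternate edges of the optimal tour together with $(v_1,v_2)$ (resp.\ $(v_2,v_3)$), which are exactly your $\tilde T$ and $\tilde T'$, with total length $\opt(G,\ell)+\ell(v_2,v_3)$. Your ``delete $P$ and alternately $2$-colour the residual Hamiltonian path'' description produces the identical edge sets that the paper writes down by explicit index parity.
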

	\begin{proof}
		Obviously $\ell(S) \leq \opt(G,\ell)$ by definition. 
		Let $T_{\opt} = \{(v_i, v_{i+1}) \mid i=1, \ldots, n\}$ be an optimal tour of $(G, \ell)$, where $v_{n+1}=v_1$, and let $U$ and $U^\prime$ be two path covers defined by 
		\begin{align*}
			U&=\{(v_i,v_{i+1}) \mid i = 2,4,\ldots,n-1 \} \cup \{(v_1,v_2) \}\\
			U^\prime&=\{(v_{i},v_{i+1}) \mid i = 3,5,\ldots,n \} \cup \{(v_2,v_3) \}.
		\end{align*}
		Then by the definition of $T$ and $T^\prime$, $\ell(T) \leq \ell(U)$ and $\ell(T^\prime) \leq \ell(U^\prime)$. Therefore, we have
		\begin{align*}
			\ell(T)+\ell(T^\prime) \leq \ell(U)+\ell(U^\prime) = \opt(G, \ell) + \ell(v_2,v_3).
		\end{align*}
	\end{proof}
	\begin{thm}\label{thm:odd-delta}
	For a complete graph $G=(V,E)$ with an odd number of vertices and an edge length function $\ell:E \to \mathbb{R}_+$, \textbf{Algorithm} \texttt{TourOdd} computes a $3/4$-differential approximate tour of $(G,\ell)$ in polynomial time. 
	\end{thm}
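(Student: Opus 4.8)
The plan is to mirror the structure of the proof of Theorem~\ref{thm:even-delta}, the extra ingredient being that the $3/4$ guarantee is now obtained only from the \emph{minimum} of eight candidate tours produced for the correct guess of $P$, rather than four. First I would dispose of the base case $n<17$: here \texttt{TourOdd} returns an exact optimum by exhaustive search, which trivially achieves differential ratio $1\ge 3/4$, and since $n$ is bounded by a constant this runs in constant (hence polynomial) time. So assume $n\ge 17$. Any optimal tour $T_{\opt}$ contains a path on four consecutive vertices, say $v_1,v_2,v_3,v_4$, giving $P=\{(v_1,v_2),(v_2,v_3),(v_3,v_4)\}$; since the for-loop ranges over all $4$-permutations of $V$, this $P$ is among the guesses, and for it there is an optimal tour containing $P$, so Lemma~\ref{lem:odd-opt} applies. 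If the corresponding minimum $2$-factor $S$ containing $P$ is itself a tour, then $\ell(S)\le\opt(G,\ell)$ (as $S$ minimizes over a set including every tour containing $P$) while $S$ is a tour containing $P$, so $\ell(S)=\opt(G,\ell)$ and $T_{\apx}$ is optimal. Hence I may assume $S$ has at least two cycles, so that the eight tours are all added to $\mathcal{T}$.

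The heart of the argument is an averaging computation. Write $\tau_1,\dots,\tau_8$ for the eight tours $S_i\cup A_i$, $T_i\cup B_i$, $S'_i\cup A'_i$, $T'_i\cup B'_i$ ($i=1,2$), each of which is a genuine tour by part~$(\mathrm{i})$ of Lemmas~\ref{lem:addingA-odd}, \ref{lem:addingA-odd'}, \ref{lem:addingB-odd}, \ref{lem:addingB-odd'}, so that $T_{\apx}$ is well defined. Because $A^{(\prime)}_i,B^{(\prime)}_i$ only add edges at degree-$1$ vertices and are edge-disjoint from the respective path covers, $\ell(S_i\cup A_i)=\ell(S_i)+\ell(A_i)$, and similarly for the others. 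Using the edge-preservation identities $(\ref{lem:4as2m-odd:edge})$ and $(\ref{lem:4as2m-odd':edge})$ — which give $\ell(S_i)+\ell(T_i)=\ell(S)+\ell(T)$ and $\ell(S'_i)+\ell(T'_i)=\ell(S)+\ell(T')$ — the total length of the eight tours collapses to
\begin{align*}
\textstyle\sum_{i=1}^{8}\ell(\tau_i)=4\ell(S)+2\ell(T)+2\ell(T')+\Sigma_{AB},
\end{align*}
where $\Sigma_{AB}$ is the sum of the eight $\ell(A^{(\prime)}_i),\ell(B^{(\prime)}_i)$ terms.

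Now I would apply the two global bounds. Lemma~\ref{lem:2tours} gives $\Sigma_{AB}\le \ell(H)+\ell(H')-2\ell(v_2,v_3)\le 2\wor(G,\ell)-2\ell(v_2,v_3)$, while doubling Lemma~\ref{lem:odd-opt} yields $4\ell(S)+2\ell(T)+2\ell(T')\le 6\opt(G,\ell)+2\ell(v_2,v_3)$. The two occurrences of $2\ell(v_2,v_3)$ cancel, leaving $\sum_{i=1}^8\ell(\tau_i)\le 6\opt(G,\ell)+2\wor(G,\ell)$, so the cheapest of the eight tours — and hence $T_{\apx}$, being the global minimum over $\mathcal{T}$ — has length at most $\tfrac18\bigl(6\opt(G,\ell)+2\wor(G,\ell)\bigr)=\tfrac34\opt(G,\ell)+\tfrac14\wor(G,\ell)$. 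This is exactly $4\ell(T_{\apx})\le 3\opt(G,\ell)+\wor(G,\ell)$, i.e.\ differential ratio $3/4$.

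Finally I would check polynomiality: there are $O(n^4)$ guesses, and for each, the constrained minimum-weight $2$-factor $S$ and path covers $T,T'$ reduce to ordinary minimum-weight factor/matching problems by forcing the prescribed edges, which are polynomial; \texttt{FourPathCovers} and the explicit constructions of $A^{(\prime)}_i,B^{(\prime)}_i$ are likewise polynomial. I expect the only genuinely delicate point to be the cancellation of the $2\ell(v_2,v_3)$ terms together with the fact that the $3/4$ bound holds only in aggregate over the eight tours — and only for the guess whose $P$ lies on an optimal tour — not for any single tour individually; everything else is bookkeeping resting on Lemmas~\ref{lem:addingA-odd}--\ref{lem:2tours} and~\ref{lem:odd-opt}.
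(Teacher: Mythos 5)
Your proposal is correct and follows essentially the same route as the paper's own proof: the same base case for $n<17$, the same guessing argument over $4$-permutations so that Lemma~\ref{lem:odd-opt} applies to the guess lying on an optimal tour, the same averaging of the eight tour lengths via the edge-preservation identity $(\ref{lem:4as2m-odd:edge})$, the same cancellation of the $2\ell(v_2,v_3)$ terms between Lemmas~\ref{lem:2tours} and~\ref{lem:odd-opt} to obtain $8\ell(T_{\apx})\leq 6\opt(G,\ell)+2\wor(G,\ell)$, and the same polynomial-time accounting. Your write-up is in fact slightly more explicit than the paper's (e.g.\ spelling out why $\ell(S)=\opt(G,\ell)$ when $S$ is a tour, and why the disjointness of the added edge sets makes the lengths additive), but there is no substantive difference.
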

	\begin{proof}
		If $n < 17$, \textbf{Algorithm} \texttt{TourOdd} clearly outputs an optimal tour in
constant time. Otherwise (i.e., $n\geq17$), let $T_{\opt}$ be an optimal tour of $(G, \ell)$ and let $P = \{(v_1, v_2), (v_2, v_3), (v_3, v_4)\}$ be a path contained in $T_{\opt}$. For this $P$, let $S$, $T$, and $T^\prime$ be defined as above. If $S$ is a tour, then $S$ is an optimal tour of $(G,\ell)$ and is output by the algorithm, which guarantees the statement of the theorem. On the other hand, if $S$ is not a tour, then we have
	\begin{align*}
		8\ell(T_{\apx}) &\leq \ell(S_1 \cup A_1)+\ell(S_2 \cup A_2)+\ell(T_1 \cup B_1)+\ell(T_2 \cup B_2)\\ &\quad+ \ell(S^\prime_1 \cup A^\prime_1)+\ell(S^\prime_2 \cup A^\prime_2)+\ell(T^\prime_1 \cup B^\prime_1)+\ell(T^\prime_2 \cup B^\prime_2)\\
		&= 2(2\ell(S) + \ell(T) + \ell(T^\prime)) + \ell(A_1 \cup A_2 \cup B_1 \cup B_2) + \ell(A^\prime_1 \cup A^\prime_2 \cup B^\prime_1 \cup B^\prime_2)\\
		&\leq 6\opt(G,\ell) + 2\wor(G,\ell),
	\end{align*}
	where the first equality follows from Lemmas \ref{lem:addingA-odd}, \ref{lem:addingA-odd'}, \ref{lem:addingB-odd}, and \ref{lem:addingB-odd'}, and the last inequality follows from Lemmas \ref{lem:2tours} and \ref{lem:odd-opt}.
	Thus $T_{\apx}$ is a 3/4-differential approximate tour of $(G,\ell)$. Note that $S$, $T$, and $T^\prime$ can be computed in polynomial time, since minimum weighted 1- and 2-factors can be computed in polynomial time. Furthermore, $A^{(\prime)}_i$ and $B^{(\prime)}_i$ for $i=1,2$ can be computed in polynomial time.
Thus \textbf{Algorithm} \texttt{TourOdd} is polynomial, which completes the proof.
	\end{proof}
	
\section*{Acknowledgement}
This work was partially supported by the joint project of Kyoto University and Toyota Motor Corporation, titled "Advanced Mathematical Science for Mobility Society" and by KAKENHI.

\bibliographystyle{plain}
\bibliography{A_3_4_Differential_Approximation_Algorithm_for_Traveling_Salesman_Problem}
\newpage
\listoftodos[Notes]
\end{document}